\documentclass[conference]{IEEEtran}
\usepackage{amsmath}
\usepackage{amsthm}
\usepackage{amsfonts}
\usepackage{amssymb}
\usepackage{multicols}
\usepackage{authblk}
\usepackage{graphicx}
\usepackage{epstopdf}
\usepackage{mathtools}

\theoremstyle{definition}
\newtheorem{thm}{Theorem}
\newtheorem{lem}{Lemma}

\newtheorem{cor}{Corollary}
\newtheorem{defn}{Definition}

\newtheorem{exmp}{Example}

\newtheorem{obs}{Observation}

\newtheorem{note}{Note}

\DeclareRobustCommand{\rchi}{{\mathpalette\irchi\relax}}
\newcommand{\irchi}[2]{\raisebox{\depth}{$#1\chi$}}

\title{Optimal Scalar Linear Codes for Some Classes of The Two-Sender Groupcast Index Coding Problem}
\author{Chinmayananda Arunachala  and B. Sundar Rajan. \\
Department of Electrical Communication Engineering, Indian Institute of Science, Bengaluru 560012, KA, India. \\
E-mail: \{chinmayanand,bsrajan\}@iisc.ac.in}
\date{May 2017}
\begin{document}
\maketitle
\begin{abstract}
The two-sender groupcast index coding problem (TGICP) consists of a set of receivers, where all the messages demanded by the set of receivers are distributed among the two senders. The senders can possibly have a set of messages in common. Each message can be demanded by more than one receiver. Each receiver has a subset of messages (known as its side information) and demands a message it does not have. The objective is to design scalar linear codes at the senders with the  minimum aggregate code length such that all the receivers are able to decode their demands, by  leveraging the knowledge of the side information of all the receivers. In this work, optimal scalar linear codes of three sub-problems (considered as single-sender groupcast index coding problems (SGICPs)) of the TGICP are used to construct optimal scalar linear codes for some classes of the TGICP. We introduce the notion of joint extensions of a finite number of SGICPs, which generalizes the notion of extensions of a single SGICP introduced in a prior work. An SGICP $\mathcal{I}_E$ is said to be a joint extension of a finite number of SGICPs if all the SGICPs are disjoint sub-problems of $\mathcal{I}_E$. We identify a class of joint extensions, where optimal scalar linear codes of the joint extensions can be constructed using those of the sub-problems. We then construct scalar linear codes for some classes of the TGICP, when one or more sub-problems of the TGICP belong to the above identified class of joint extensions, and provide some necessary conditions for the optimality of the construction.
   
\end{abstract}

\section{INTRODUCTION}
\par The classical index coding problem (ICP) introduced in \cite{BK} is a source coding problem consisting of a set of receivers, each demanding a single message and having a subset of other messages as its side information. The sender exploits the knowledge of the side information at all the receivers to reduce the number of transmissions required for all the receivers to decode their demands. The multi-sender ICP (first studied in \cite{SUOH}) arises as an extension of the classical ICP in many practical scenarios. The demands of all the receivers are distributed among multiple senders, with possibly some messages in common with senders belonging to any subset of senders. The senders collectively have all the demands of the receivers. All the receivers can receive the transmissions from all the senders. Transmissions from any two senders are orthogonal in time. All the senders are able to communicate among themselves and collaboratively transmit coded messages to reduce the end-to-end latency and network-traffic load as illustrated in \cite{LOJ}, \cite{KAC}. Each sender is naturally restricted to have a subset of demanded messages due to constraints like limited storage (as illustrated in \cite{LOJ}), erroneous reception of messages (as illustrated in \cite{CVBSR1}), and node failures as seen in distributed storage networks where data is stored over different nodes \cite{luo2016coded}, \cite{xiang2016joint}.

A class of the multi-sender ICP was studied, where each receiver knows a unique message which is not known to any other receiver \cite{SUOH}. This class was analyzed using  graph theoretic techniques and optimal codes were provided for a special sub-class. A rank-minimization framework was proposed along with a heuristic algorithm to obtain sub-optimal index codes (in general) for another class of the multi-sender ICP \cite{LOJ}. For this class, each receiver demands a message that is not demanded by any other receiver and has a subset of other messages as side information. The most general class of the multi-sender ICP where any message can be demanded by any number of receivers was studied in \cite{KN2}. The problem of finding the optimal scalar linear code length of the multi-sender ICP was related to a matrix completion problem with rank minimization. Some variants of the multi-sender ICP have been studied with an independent link of fixed capacity existing between every sender and receiver \cite{PA}-\cite{MOJ}. Inner and outer bounds on the capacity region of these variants were established.

The two-sender ICP has been studied in several works as a fundamental multi-sender ICP \cite{COJ}-\cite{CVBSR2}. Some single-sender index coding schemes based on graph theory were extended to the two-sender unicast ICP (TUICP) \cite{COJ}. In the TUICP, each receiver demands a unique message which is not demanded by any other receiver. Optimal broadcast rate  with finite length messages (total number of coded bits transmitted per message bit) of some special classes of the TUICP, and its limiting value as the message length tends to infinity (called the optimal broadcast rate) were studied \cite{CTLO}, \cite{CVBSR2}. Code constructions were provided for some classes of the TUICP using optimal codes of three single-sender sub-problems \cite{CTLO}-\cite{CVBSR2}. For some classes, a new graph coloring technique (termed as two-sender graph coloring) was proposed, and employed to obtain the optimal broadcast rate with finite length messages \cite{CTLO}, \cite{CVBSR2}. Optimal broadcast rate with finite length messages and its limiting value were established for some classes of the TUICP considering only linear encoding schemes at the senders \cite{CVBSR1}. The proof techniques used in \cite{CVBSR1} were shown to be useful in obtaining optimal results for some classes of the multi-sender ICP with any number of senders.    
 
We consider the two-sender groupcast ICP (TGICP) in this work, where any receiver can demand any message and can have any subset of other messages as side information. We first divide the TGICP into three disjoint single-sender sub-problems (no common messages between any two of the sub-problems) based on the demands of the receivers and the availability of messages with the senders. We classify the TGICP based on the relations between the three single-sender sub-problems of the TGICP (Section II). This classification extends the classification given in \cite{CTLO} for the TUICP (which is a special case of the TGICP). Optimal scalar linear codes are constructed using those of the three sub-problems for some clasees of the TGICP (Sections III and V).  We introduce the notion of joint extensions of any finite number of single-sender groupcast ICPs (SGICPs). Any jointly extended problem obtained from a finite number of SGICPs (dealt in this work), consists of all the SGICPs as disjoint sub-problems (no messages in common between any two of the sub-problems). A class of joint extensions is identified and scalar linear codes for this class of jointly extended problems are constructed  using those of its sub-problems (Section IV). A necessary condition for the optimality of the constructed codes is given.  This result is used to construct scalar linear codes for some classes of the TGICP (Section V). Necessary conditions for the optimality of the constructed codes are also given.  

\par The key results of this paper are highlighted below. 

\begin{itemize}
	\item  Optimal scalar linear codes for some classes of the TGICP are constructed using optimal scalar linear codes of three single-sender sub-problems. According to the authors' knowledge, this is the first work providing optimal scalar linear codes for some classes of the TGICP (which are not TUICPs). Thus, the problem of constructing optimal scalar linear codes for these classes of the TGICP, is reduced to the problem of optimal code construction for the SGICP.  Optimal linear broadcast rates with finite length messages for the TUICP presented in \cite{CVBSR1} is a special case of the results presented in this work.
	\item  The notion of joint extensions of any finite number of SGICPs is introduced, which generalizes the notion of extensions of any SGICP, introduced in \cite{PK}. A class of joint extensions is identified for which optimal scalar linear codes for the jointly extended problems can be constructed using those of the sub-problems. Two classes of rank-invariant extensions presented in \cite{PK} are shown to be special cases of the class of joint extensions addressed in this paper. Using the result related to the class of joint extensions solved in this paper, optimal scalar linear codes for some SGICPs can be constructed using known optimal scalar linear codes of their sub-problems.  	
	\item Scalar linear codes are constructed for some classes of the TGICP, where one or more sub-problems belong to the class of joint extensions solved in this paper. Necessary conditions for the optimality of the constructed codes are also given. 
 \end{itemize}

\par The remainder of the paper is organized as follows. Section II consists of problem formulation and  classification of the TGICP. Section III provides construction of optimal scalar linear codes for some classes of the TGICP. Section IV defines joint extensions and provides optimal codes for a special class of joint extensions. Section V provides code  constructions for some classes of the TGICP using the results of Section IV. Section VI concludes the paper.

\par \emph{Notations:} Matrices and vectors are denoted by bold uppercase and bold lowercase letters respectively. For any positive integer $m$, $[m]  \triangleq \{1,...,m\}$. The finite field with $q$ elements is denoted by $\mathbb{F}_q$. The vector space of all $n \times d $ matrices over $\mathbb{F}_q$ is denoted by  $\mathbb{F}_q^{n \times d}$. For any matrix ${\bf{M}}\in \mathbb{F}_q^{n \times d}$, ${\bf{M}}_{\mathcal{A}}$ denotes the matrix obtained by stacking (one upon the other in the increasing order of row indices) the rows of ${\bf{M}}$ indexed by the elements in $\mathcal{A}$ for any $\mathcal{A} \subseteq [n]$. The row space of ${\bf{M}}$ is denoted by $\langle {\bf{M}} \rangle$. For any matrix ${\bf{M}}$ over $\mathbb{F}_q$, the rank of ${\bf{M}}$ over $\mathbb{F}_q$ is denoted as $\text{rk}_{q}({\bf{M}})$. The transpose of ${\bf{M}}$ is denoted by ${\bf{M}}^T$. A matrix obtained by deleting some of the rows and/or some of the columns of ${\bf{M}}$ is said to be a submatrix of ${\bf{M}}$. A set of submatrices of a given matrix are said to be disjoint, if no two of the submatrices have common elements indexed by the same ordered pair in the given matrix. The matrix consisting of all $x$'s is denoted by ${\bf{X}}$. The size of ${\bf{X}}$ is clear from the context whenever it is not explicitly stated. Similarly, the matrix with all entries being $0$ is denoted by ${\bf{0}}$. An $a \times b$ matrix with all entries being $0$ is denoted as ${\bf{0}}_{a \times b}$. An $r \times r$ identity matrix is denoted by ${\bf{I}}_{r \times r}$.

A directed graph (also called digraph) given by $\mathcal{D}=(\mathcal{V}(\mathcal{D}),\mathcal{E}(\mathcal{D}))$, consists of a set of vertices $\mathcal{V}(\mathcal{D})$, and a set of edges $\mathcal{E}(\mathcal{D})$ which is a set of ordered pairs of vertices. A cycle in a digraph $\mathcal{D}$ is a sequence of distinct vertices $(v_{i_1},...,v_{i_c})$ such that $(v_{i_s},v_{i_{s+1}}) \in \mathcal{E}(\mathcal{D})$ for all $s \in [c-1]$ and $(v_{i_c},v_{i_1}) \in \mathcal{E}(\mathcal{D})$. A digraph is called acyclic if it does not contain any cycles. 

\section{Problem Formulation}
\par In this section, we formulate the TGICP and establish the required notation and definitions.

An instance of the TGICP consists of two senders collectively having $m$ independent message symbols given by the set $\{{\bf{x}}_1,{\bf{x}}_2,...,{\bf{x}}_{m}\}$, where ${\bf{x}}_i \in \mathbb{F}_q^{t \times 1}$, $\forall i \in [m]$, and $t \geq 1$ is an integer. The $j$th sender denoted by $\mathcal{S}_{j}$, $j \in \{1,2\}$, has a set of messages with indices given by the set $\mathcal{M}_{j}$, such that $\mathcal{M}_{j} \subseteq [m]$ and $\mathcal{M}_{1} \cup \mathcal{M}_{2}=[m]$. Each sender knows the identity of the messages (indices of the message symbols) available with the other. Each sender transmits over a noiseless broadcast channel which carries symbols from $\mathbb{F}_q$. Transmissions from different senders are orthogonal in time. There are $n$ receivers receiving all the transmissions from both the senders. Without loss of generality, we assume that each receiver demands a single message and that every message is demanded by at least one receiver. Let ${\bf{x}}_{\mathcal{A}} \triangleq ({\bf{x}}_i)_{i \in \mathcal{A}}$ for any non-empty $\mathcal{A} \subseteq [m]$. Note that ${\bf{x}}_{\mathcal{A}}$ is the vector obtained by the  concatenation of all the message symbols with indices given by the set $\mathcal{A}$. The $i$th receiver knows a vector of message symbols (said to be its side information)  given by ${\bf{x}}_{\rchi_i}$, where $\rchi_i \subset [m]$, and demands ${\bf{x}}_{f(i)}$, where the mapping $f : [n] \rightarrow [m]$ satisfies $f(i) \notin \rchi_i$, $i \in [n]$. The set $\{\rchi_i\}_{i \in [n]}$ is known to both the senders. The objective of the TGICP is to design a coding scheme at each sender such that all the receivers are able to decode their demands with the least aggregate number of  transmissions from the two senders.
We formalize this objective in the following.

For a given instance of the TGICP, an index code over $\mathbb{F}_q$ is given by encoding functions at the two senders such that there exists a decoding function at each receiver as described in the following. An encoding function at  the sender $\mathcal{S}_{j}$, $j \in \{1,2\}$, is given by $\mathbb{E}_{j} :\mathbb{F}_{q}^{|\mathcal{M}_{j}|t \times 1} \rightarrow   \mathbb{F}_{q}^{l_{j} \times 1}$,
 where $l_j$ is the length of the codeword $\mathbb{E}_j({\bf{x}}_{\mathcal{M}_j})$ transmitted by $\mathcal{S}_{j}$. A decoding function at the $i$th receiver is  given by a map $\mathbb{D}_{i} :\mathbb{F}_{q}^{(|\rchi_{i}|t+l_{1}+l_{2}) \times 1} \rightarrow   \mathbb{F}_{q}^{t \times 1}$, such that $\forall {\bf{x}}_{[m]} \in \mathbb{F}_q^{mt \times 1}$, $\mathbb{D}_i(\mathbb{E}_j({\bf{x}}_{\mathcal{M}_1}),\mathbb{E}_j({\bf{x}}_{\mathcal{M}_2}),{\bf{x}}_{\rchi_i})={\bf{x}}_{f(i)}$, $i \in [n]$. That is, every $i$th receiver can decode ${\bf{x}}_{f(i)}$ using its side information ${\bf{x}}_{\rchi_i}$ and the received codewords, $i \in [n]$. The code length of the index code given by the encoding functions $\{\mathbb{E}_j\}_{j \in \{1,2\}}$ is $l_1 + l_2$. The objective of the TGICP is to find the minimum value of $l_1+l_2$ and a pair of encoding functions which collectively achieve this code length. An index code for the TGICP is said to be linear if both the encoding functions are linear transformations over $\mathbb{F}_q$. For a linear index code, the codeword $\mathbb{E}_j({\bf{x}}_{\mathcal{M}_j})$ can be expressed as  ${\bf{G}}^{(j)}{\bf{x}}_{\mathcal{M}_j}$, where  ${\bf{G}}^{(j)} \in \mathbb{F}_{q}^{l_j \times t|\mathcal{M}_j|}$ is called an encoding matrix at $\mathcal{S}_j$ for the given code, $j \in \{1,2\}$. All the receivers are assumed to know $\{{\bf{G}}^{(i)}\}_{i \in \{1,2\}}$. If $t=1$, the code is said to be a scalar index code. Otherwise, it is said to be a vector index code. For a given instance $\mathcal{I}$ of the TGICP, optimal scalar linear code length is denoted as $l^*_q(\mathcal{I})$. For the single-sender groupcast ICP (SGICP), without loss of generality we assume that only $\mathcal{S}_1$ exists and $\mathcal{M}_{1}=[m]$. The unicast ICP (single-sender or two-sender) is a special case of the corresponding groupcast ICP with $m=n$. Hence no message in a unicast ICP is demanded by two receivers.

We construct optimal scalar linear codes for some classes of the TGICP using those of its sub-problems,  considering the sub-problems as SGICPs. We now recapitulate some required results related to the SGICP.  
 
An instance of the SGICP is described by a matrix called the fitting matrix of the SGICP consisting of known entries ($1$'s and $0$'s) and unknown entries (denoted as $x$'s)  \cite{PK}. The notion of the fitting matrix was introduced in \cite{BY} for the single-sender unicast ICP and later extended to the SGICP \cite{PK}. Each row of the fitting matrix represents a receiver and each column represents a message, and is described as follows. 

\begin{defn}[Fitting Matrix, \cite{PK}]
	An SGICP with $m$ messages and $n$ receivers is described by an $n \times m$ matrix ${\bf{F}}_x$ (subscript `$x$' indicates presence of unknown entries), called the fitting matrix of the SGICP whose $(i,j)$th entry is given by, 
	\[ 
	[{\bf{F}}_x]_{i,j}=
	\begin{cases}
	x               & \text{if} \ j \in \rchi_i,\\
	1               & \text{if} \ f(i)=j,\\
	0               & \text{otherwise}.
	\end{cases}
	\]
	$\forall$ $i \in [n],j \in [m]$.
\end{defn}

For a given fitting matrix ${\bf{F}}_x$, we say ${\bf{F}}$ completes ${\bf{F}}_x$ and denote it as ${\bf{F}} \approx {\bf{F}}_x$, if ${\bf{F}}$ is obtained from ${\bf{F}}_x$ by replacing all the unknowns with arbitrary elements from the field. For the single-sender unicast ICP described by the fitting matrix ${\bf{F}}_x$, the optimal scalar linear code length over $\mathbb{F}_q$ has been shown to be the minimum rank of a matrix ${\bf{F}}$ such that ${\bf{F}} \approx {\bf{F}}_x$  \cite{BY}. This minimum rank is denoted as $\text{mrk}_{q}({\bf{F}}_x)$. The optimal scalar linear code length of the SGICP described by ${\bf{F}}_x$ has also been shown to be equal to $\text{mrk}_{q}({\bf{F}}_x)$ in Corollary $4.7$, \cite{DSC2}.

The two-sender unicast ICP (TUICP) was classified  based on the relation between three single-sender sub-problems  \cite{CTLO}. We now extend this classification to the TGICP. 

Let $\mathcal{P}_1=\mathcal{M}_1  \setminus \mathcal{M}_2$ and  $\mathcal{P}_2=\mathcal{M}_2 \setminus \mathcal{M}_1$ be the indices of the messages available only at senders $\mathcal{S}_1$ and $\mathcal{S}_2$ respectively. Let $\mathcal{P}_{\{1,2\}}=\mathcal{M}_1  \cap \mathcal{M}_2$ be the indices of the  messages in common at the senders. Let $m_{\mathcal{A}}=|\mathcal{P}_\mathcal{A}|$ for all non-empty $\mathcal{A} \subseteq \{1,2\}$. We represent a singleton set without a $\{\}$. For example, $\{1\}$ is represented as $1$. Hence, $m_{\{1\}}$ is written as $m_1$. Consider the TGICP $\mathcal{I}$ as an SGICP with the fitting matrix ${\bf{F}}_x$. We partition ${\bf{F}}_x$ based on $(i)$ the message sets ${\bf{x}}_{\mathcal{P}_{\mathcal{A}}}$, for non-empty $\mathcal{A} \subseteq \{1,2\}$, and $(ii)$ the receivers demanding messages in these sets as shown in (\ref{eqfitpart}).
\begin{equation}
{\bf{F}}_x =
\left(
\begin{array}{c|c|c} 
{\bf{F}}_x^{(1)}  & {\bf{A}}^{(1,2)}_{x_0} & {\bf{A}}^{(1,\{1,2\})}_{x_0}\\
\hline
{\bf{A}}^{(2,1)}_{x_0} &  {\bf{F}}_x^{(2)} & {\bf{A}}_{x_0}^{(2,\{1,2\})} \\
\hline
{\bf{A}}^{(\{1,2\},1)}_{x_0} & {\bf{A}}^{(\{1,2\},2)}_{x_0} & {\bf{F}}_x^{(\{1,2\})}
\end{array}
\right).
\label{eqfitpart}
\end{equation}
The first $m_1$ columns correspond to messages ${\bf{x}}_{\mathcal{P}_1}$. The $m_2$ consecutive columns from the $(m_1+1)$th column correspond to messages ${\bf{x}}_{\mathcal{P}_2}$. The remaining columns correspond to messages ${\bf{x}}_{\mathcal{P}_{\{1,2\}}}$.
The matrices ${\bf{A}}^{(\mathcal{A}_1,\mathcal{A}_2)}_{x_0}$ only consist of $x$'s and $0$'s, where $\mathcal{A}_1$ and $\mathcal{A}_2$ are non-empty subsets of $\{1,2\}$ such that $\mathcal{A}_1 \neq \mathcal{A}_2$. These  matrices have the subscript $x_0$ to indicate that they consist of only $x$'s and $0$'s. Let the fitting matrix ${\bf{F}}_x^{(\mathcal{A})}$ of size $n_{\mathcal{A}} \times m_{\mathcal{A}}$, for any non-empty subset $\mathcal{A} \subseteq \{1,2\}$ correspond to an SGICP denoted by  $\mathcal{I}_{\mathcal{A}}$. Note that the side information of all the receivers in  $\mathcal{I}_{\mathcal{A}}$ is present only in ${\bf{x}}_{\mathcal{P}_{\mathcal{A}}}$, for any non-empty subset $\mathcal{A} \subseteq \{1,2\}$. In the remaining part of this paper, whenever we refer to `the sub-problems of the TGICP $\mathcal{I}$', we refer to the SGICPs $\mathcal{I}_1$,  $\mathcal{I}_2$, and $\mathcal{I}_{\{1,2\}}$, unless otherwise stated.

\par For the given TGICP $\mathcal{I}$, if some message in the side information of some receiver belonging to $\mathcal{I}_{\mathcal{A}_1}$ is present in ${\bf{x}}_{\mathcal{P}_{\mathcal{A}_2}}$,  for non-empty subsets $\mathcal{A}_i \subseteq \{1,2\}$, $i \in \{1,2\}$, $\mathcal{A}_1 \neq \mathcal{A}_2$, we say that there is an interaction from $\mathcal{I}_{\mathcal{A}_1}$ to $\mathcal{I}_{\mathcal{A}_2}$ and is denoted by $\mathcal{I}_{\mathcal{A}_1} \rightarrow \mathcal{I}_{\mathcal{A}_2}$. We say that this interaction $\mathcal{I}_{\mathcal{A}_1} \rightarrow \mathcal{I}_{\mathcal{A}_2}$ is fully-participated if ${\bf{A}}_{x_0}^{(\mathcal{A}_1,\mathcal{A}_2)}={\bf{X}}$ (${\bf{X}}$ is a matrix with all unknowns). That is, every receiver in $\mathcal{I}_{\mathcal{A}_1}$ has ${\bf{x}}_{\mathcal{P}_{\mathcal{A}_2}}$ in its side information. If  ${\bf{A}}_{x_0}^{(\mathcal{A}_1,\mathcal{A}_2)} \neq {\bf{X}}$, then the interaction $\mathcal{I}_{\mathcal{A}_1} \rightarrow \mathcal{I}_{\mathcal{A}_2}$ is said to be partially-participated. 
For the given TGICP $\mathcal{I}$, consider the digraph $\mathcal{H}$ with  $\mathcal{V}(\mathcal{H}) = \{1,2,\{1,2\}\}$ and $\mathcal{E}(\mathcal{H})=\{(\mathcal{A}_1,\mathcal{A}_2) |~ \exists ~ \mathcal{I}_{\mathcal{A}_1} \rightarrow \mathcal{I}_{\mathcal{A}_2}, \text{for non-empty } \mathcal{A}_i \subseteq  \{1,2\}, \mathcal{A}_1 \neq \mathcal{A}_2, i \in \{1,2\} \}$. The graph $\mathcal{H}$ is called the interaction digraph of the TGICP $\mathcal{I}$. There are totally 64 possibilities for the graph $\mathcal{H}$ as shown in Fig. \ref{fig1}. The numbers written below each interaction digraph in the figure is the subscript used to denote the specific interaction digraph. Note that the set of possible interaction digraphs given for the TUICP in \cite{CTLO} is same as that given in Fig. \ref{fig1}, but now they describe the interactions between the three constituent SGICPs of the TGICP. For example, the first interaction digraph with no interactions between the sub-problems is denoted as $\mathcal{H}_1$. Note that a pair of interactions given by $\mathcal{I}_{\mathcal{A}_1} \rightarrow \mathcal{I}_{\mathcal{A}_2}$ and $\mathcal{I}_{\mathcal{A}_2} \rightarrow \mathcal{I}_{\mathcal{A}_1}$ is denoted by lines having two-sided arrows in the interaction digraph, for any non-empty subsets $\mathcal{A}_i \subseteq  \{1,2\}, i \in \{1,2\}$. From Fig. \ref{fig1}, note that the TGICP is broadly classified into two cases. If the interaction digraph of the TGICP is acyclic, then the TGICP belongs to Case I, otherwise it belongs to Case II. Case II is further classified into five sub-cases based on the interaction digraph. 

\begin{figure*}[!htbp]
	\begin{center}
		\includegraphics[width=43pc]{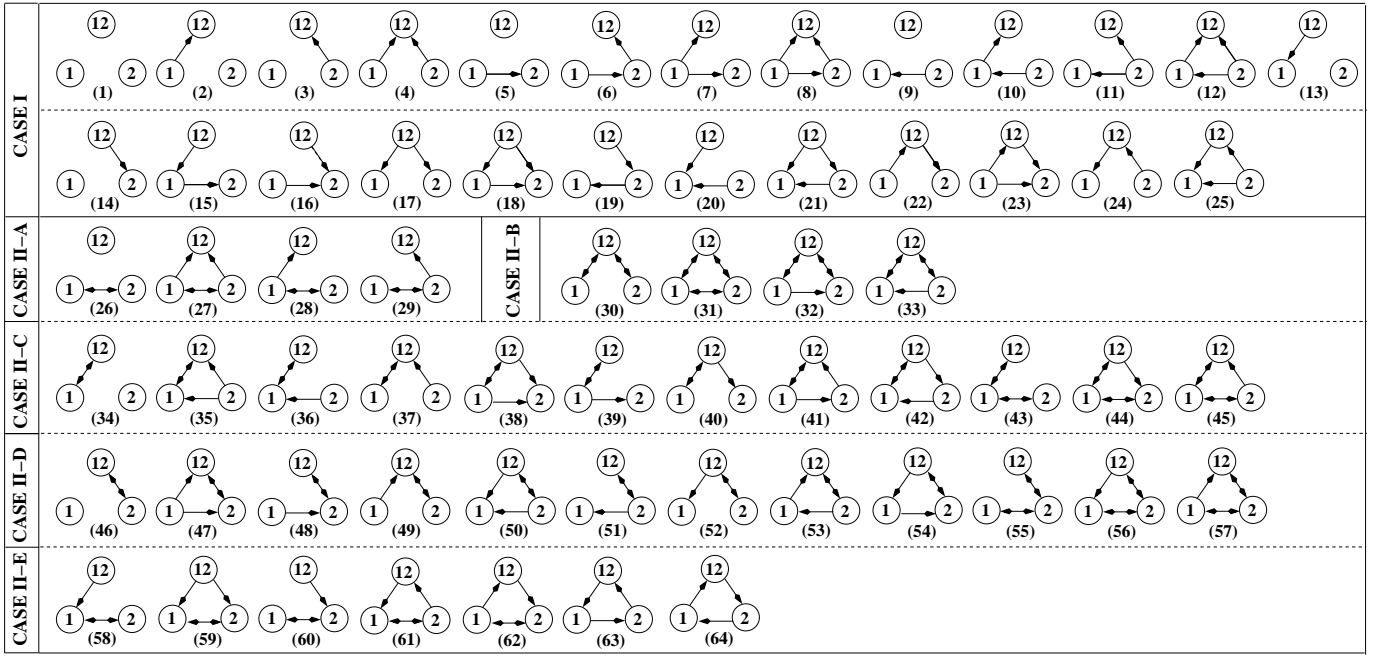}
		\caption{Enumeration of all possible interactions between the sub-problems $\mathcal{I}_1$, $\mathcal{I}_2$, and $\mathcal{I}_{\{1,2\}}$, denoted by the interaction digraph $\mathcal{H}$.}
		\label{fig1}
	\end{center}
	\hrule
\end{figure*}

The following notations are related to the construction of a two-sender index code from single-sender index codes. Let ${\bf{c}}^{(1)}$ and ${\bf{c}}^{(2)}$ be  two codewords of length $l_1$ and $l_2$ respectively. The symbol-wise addition of ${\bf{c}}^{(1)}$ and ${\bf{c}}^{(2)}$ after zero-padding the shorter codeword at the least significant positions to match the length of the longer codeword is denoted by ${\bf{c}}^{(1)} + {\bf{c}}^{(2)}$. For example, assuming the code to be over $\mathbb{F}_2$, if ${\bf{c}}^{(1)}=1010$, and ${\bf{c}}^{(2)}=110$, then ${\bf{c}}^{(1)} + {\bf{c}}^{(2)} = 0110$.  The vector obtained by picking the code-symbols from symbol numbered $a$ to symbol numbered $b$, starting from the most significant position of the codeword ${\bf{c}}^{(i)}$, with $a,b \in [l_i]$, is denoted by ${\bf{c}}^{(i)}[a:b]$. For example ${\bf{c}}^{(1)}[2:4]=010$.


\begin{exmp}
\par Consider the TGICP with the partitioned fitting matrix ${\bf{F}}_x$ given below. From the partition we observe that $\mathcal{P}_1=\{1,2,3\}$, $\mathcal{P}_2=\{4,5\}$, and $\mathcal{P}_{\{1,2\}}=\{6,7,8\}$. Note from the partition that $m_1=3$, $n_1=4$, $m_2=2$, $n_2=2$, $m_{\{1,2\}}=3$, and $n_{\{1,2\}}=3$. Thus $m=m_1+m_2+m_{\{1,2\}}=8$ and $n=n_1+n_2+n_{\{1,2\}}=9$. From the second row of the fitting matrix, we see that the side information of the second receiver is given by ${\bf{x}}_{\rchi_2}$, where $\rchi_2=\{2,4\}$, and it demands ${\bf{x}}_{f(2)}$, where $f(2)=3$. Similarly, the side information and demands of all other receivers can be obtained from the fitting matrix. As receivers $1$ and $4$ demand message ${\bf{x}}_1$, the sub-problem $\mathcal{I}_1$ and hence the two-sender ICP are groupcast ICPs. Note that only the following interactions exist: $\mathcal{I}_1 \rightarrow \mathcal{I}_2$, $\mathcal{I}_{\{1,2\}} \rightarrow \mathcal{I}_{2}$. From the listing of the interaction digraphs in Figure \ref{fig1}, we see that the interaction digraph related to the TGICP is $\mathcal{H}_{16}$. Observe that the interaction 
$\mathcal{I}_{\{1,2\}} \rightarrow \mathcal{I}_{2}$ is fully-participated, and $\mathcal{I}_{1} \rightarrow \mathcal{I}_{2}$ is partially-participated. 
 	\[
 	{\bf{F}}_x=
 	\left(
 	\begin{array}{ccc|cc|ccc} 
 	1 & x & 0 & x & x & 0 & 0 & 0 \\
 	0 & x & 1 & x & 0 & 0 & 0 & 0 \\
 	x & 1 & 0 & x & x & 0 & 0 & 0 \\
 	1 & 0 & x & x & x & 0 & 0 & 0 \\
 	\hline
 	0 & 0 & 0 & 1 & x & 0 & 0 & 0 \\
 	0 & 0 & 0 & x & 1 & 0 & 0 & 0 \\
 	\hline
 	0 & 0 & 0 & x & x & 1 & 0 & x \\
 	0 & 0 & 0 & x & x & x & 1 & 0 \\
 	0 & 0 & 0 & x & x & 0 & x & 1 \\
 	\end{array}
 	\right).
 	\]
\end{exmp}
\section{Optimal scalar linear codes for the TGICPs belonging to Cases I and II-A}

In this section, we construct optimal scalar linear codes for TGICPs belonging to Cases I and II-A, with any type of interactions among the sub-problems. The following lemma is required to prove the main result of this section (Theorem \ref{thmcase1}). The proof follows on similar lines as that of Lemma 4.2 in \cite{DSC1}. We provide the proof for completeness.
 
\begin{lem}
\label{lmSCC}
Consider the SGICP with the fitting matrix ${\bf{F}}_x$  being a block upper triangular matrix (all the entries below the block matrices $\{{\bf{F}}_x^{(i)}\}_{i \in [b]}$ are zeros) as given in (\ref{eqbup}). The matrices  ${\bf{B}}^{(i,j)}_{x_0}$, for all  $i,j \in [b]$ with  $i < j$, consist of only $x$'s and $0$'s. Then, $\text{mrk}_{q}({\bf{F}}_{x}) = \sum_{s=1}^{b} \text{mrk}_{q}({\bf{F}}^{(s)}_{x}).$
\begin{equation}
{\bf{F}}_x =
\left(
\begin{array}{c|c|c|c|c} 
{\bf{F}}_x^{(1)}  & {\bf{B}}^{(1,2)}_{x_0} & \cdots & {\bf{B}}_{x_0}^{(1,b-1)} & {\bf{B}}^{(1,b)}_{x_0}\\
\hline
{\bf{0}} &  {\bf{F}}_x^{(2)} & \ddots & {\bf{B}}_{x_0}^{(2,b-1)} & {\bf{B}}_{x_0}^{(2,b)} \\
\hline
\vdots & \ddots & \ddots & \ddots & \vdots \\
\hline
{\bf{0}} & {\bf{0}} & \ddots & {\bf{F}}^{(b-1)}_{x} & {\bf{B}}_{x_0}^{(b-1,b)} \\
\hline
{\bf{0}} & {\bf{0}} & \cdots & {\bf{0}} & {\bf{F}}_{x}^{(b)} 
\end{array}
\right).
\label{eqbup}
\end{equation}
\end{lem}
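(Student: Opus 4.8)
The plan is to prove the identity by establishing the two inequalities $\text{mrk}_{q}(\mathbf{F}_x) \le \sum_{s=1}^{b}\text{mrk}_{q}(\mathbf{F}^{(s)}_x)$ and $\text{mrk}_{q}(\mathbf{F}_x) \ge \sum_{s=1}^{b}\text{mrk}_{q}(\mathbf{F}^{(s)}_x)$ separately. Throughout I would lean on the two structural facts given in the statement: the blocks strictly below the diagonal are genuinely $\mathbf{0}$ (they carry no unknowns), while each block $\mathbf{B}^{(i,j)}_{x_0}$ strictly above the diagonal consists only of $x$'s and $0$'s, so every unknown it contains may be assigned independently of the diagonal blocks. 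These two observations drive the upper and lower bounds respectively.

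For the upper bound I would, for each $s \in [b]$, fix a completion $\mathbf{F}^{(s)} \approx \mathbf{F}^{(s)}_x$ attaining $\text{rk}_{q}(\mathbf{F}^{(s)}) = \text{mrk}_{q}(\mathbf{F}^{(s)}_x)$, and simultaneously set every unknown appearing in every off-diagonal block $\mathbf{B}^{(i,j)}_{x_0}$ (for $i<j$) equal to $0$. Because these off-diagonal blocks contain only $x$'s and $0$'s, this assignment turns each of them into an all-zero block, so the resulting completion $\mathbf{F} \approx \mathbf{F}_x$ is block \emph{diagonal} with diagonal blocks $\mathbf{F}^{(1)},\dots,\mathbf{F}^{(b)}$. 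The rank of a block diagonal matrix is the sum of the ranks of its blocks, so $\text{mrk}_{q}(\mathbf{F}_x) \le \text{rk}_{q}(\mathbf{F}) = \sum_{s=1}^{b}\text{rk}_{q}(\mathbf{F}^{(s)}) = \sum_{s=1}^{b}\text{mrk}_{q}(\mathbf{F}^{(s)}_x)$.

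For the lower bound I would observe that every completion is forced to retain the block upper triangular shape: fix any $\mathbf{F} \approx \mathbf{F}_x$; since the sub-diagonal blocks carry no unknowns they remain $\mathbf{0}$, so $\mathbf{F}$ is block upper triangular with diagonal blocks $\mathbf{F}^{(s)} \approx \mathbf{F}^{(s)}_x$. The technical heart is the standard linear-algebra fact that the rank of a block upper triangular matrix is at least the sum of the ranks of its diagonal blocks. I would prove this for two blocks $\left(\begin{smallmatrix}\mathbf{A} & \mathbf{B}\\ \mathbf{0} & \mathbf{C}\end{smallmatrix}\right)$ by letting $\pi$ project the column space onto the coordinates occupied by $\mathbf{C}$: the image of $\pi$ restricted to the column space contains the column space of $\mathbf{C}$, hence has dimension at least $\text{rk}_{q}(\mathbf{C})$, while its kernel contains the span of the first block of columns $\left(\begin{smallmatrix}\mathbf{A}\\ \mathbf{0}\end{smallmatrix}\right)$ and hence has dimension at least $\text{rk}_{q}(\mathbf{A})$; rank--nullity then yields $\text{rk}_{q}\!\left(\begin{smallmatrix}\mathbf{A} & \mathbf{B}\\ \mathbf{0} & \mathbf{C}\end{smallmatrix}\right) \ge \text{rk}_{q}(\mathbf{A}) + \text{rk}_{q}(\mathbf{C})$. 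Grouping the last diagonal block against the first $b-1$ and inducting on $b$ extends this to $\text{rk}_{q}(\mathbf{F}) \ge \sum_{s=1}^{b}\text{rk}_{q}(\mathbf{F}^{(s)}) \ge \sum_{s=1}^{b}\text{mrk}_{q}(\mathbf{F}^{(s)}_x)$, the last step because each $\mathbf{F}^{(s)}$ completes $\mathbf{F}^{(s)}_x$. Minimizing over all completions $\mathbf{F}$ gives $\text{mrk}_{q}(\mathbf{F}_x) \ge \sum_{s=1}^{b}\text{mrk}_{q}(\mathbf{F}^{(s)}_x)$.

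Combining the two inequalities yields the claimed equality. I expect the only genuine obstacle to be the block-triangular rank inequality in the lower bound; everything else reduces to the elementary facts that zeroing the off-diagonal unknowns is \emph{allowed} precisely because those blocks hold no $1$'s, and that the sub-diagonal zeros are structurally fixed. I would keep the induction clean by treating the general $b$-block case through the single two-block estimate rather than unwinding all blocks at once.
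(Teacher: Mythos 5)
Your proof is correct and takes essentially the same route as the paper: the upper bound by zeroing all off-diagonal unknowns to obtain a block-diagonal completion, and the lower bound by noting that every completion $\mathbf{F} \approx \mathbf{F}_x$ stays block upper triangular and then applying the inequality $\text{rk}_{q}(\mathbf{F}) \ge \sum_{s=1}^{b}\text{rk}_{q}(\mathbf{F}^{(s)})$ together with $\text{rk}_{q}(\mathbf{F}^{(s)}) \ge \text{mrk}_{q}(\mathbf{F}^{(s)}_x)$. The only difference is that you prove the block-triangular rank inequality explicitly (projection onto the last block of coordinates, rank--nullity, and induction on $b$), whereas the paper simply asserts that chain of inequalities, so your version is, if anything, more self-contained.
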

\begin{proof}
	If ${\bf{F}}^{(i)} \approx {\bf{F}}^{(i)}_x$, $i \in [b]$, then the matrix ${\bf{F}}$ with its $i$th diagonal block matrix given by ${\bf{F}}^{(i)}$, and non-diagonal block matrices being all-zero matrices, completes ${\bf{F}}_x$. Thus 
	$\text{rk}_q({\bf{F}})=\sum_{s=1}^{b} \text{rk}_{q}({\bf{F}}^{(s)})$, which implies that $\text{mrk}_q({\bf{F}}_x) \leq \sum_{s=1}^{b} \text{mrk}_{q}({\bf{F}}^{(s)}_x)$, which lower bounds  $\text{mrk}_q({\bf{F}}_x)$. We now obtain a matching lower bound. Let ${\bf{\tilde{F}}} \approx {\bf{F}}_x$, and $\{{\bf{\tilde{F}}}^{(i)}\}_{i \in [b]}$, be the diagonal block matrices  of ${\bf{\tilde{F}}}$ such that ${\bf{\tilde{F}}}^{(i)} \approx {\bf{F}}^{(i)}_x$. Then,  $\text{rk}_q({\bf{\tilde{F}}}) \geq \sum_{s=1}^{b} \text{rk}_{q}({\bf{\tilde{F}}}^{(s)}) \geq \sum_{s=1}^{b} \text{mrk}_{q}({\bf{{F}}}^{(s)}_x) $. Hence, we have $\text{mrk}_q({\bf{F}}_x) \geq \sum_{s=1}^{b} \text{mrk}_{q}({\bf{F}}^{(s)}_x)$, which is a matching lower bound.       
\end{proof}

We require the notion of topological ordering and a related lemma to establish our result for Case I. 

\begin{defn}[Topological ordering, \cite{DEK}]
	A topological ordering of a digraph $\mathcal{D}$ is a labelling of its vertices using the numbers in $\{1,2,\cdots,|\mathcal{V}(\mathcal{D})|\}$, such that for every edge $(u,v) \in \mathcal{E}(\mathcal{D})$,  $u < v$, where $u,v \in \{1,2,\cdots,|\mathcal{V}(\mathcal{D})|\}$.
\end{defn}

\begin{lem}[\cite{DEK}]
	\label{lemtopord}
	Any acyclic digraph has at least one topological ordering. 
\end{lem}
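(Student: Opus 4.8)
The plan is to prove the statement by induction on $|\mathcal{V}(\mathcal{D})|$, where the engine driving the induction is the claim that every nonempty finite acyclic digraph contains a \emph{source}, i.e.\ a vertex $s$ with no incoming edge ($(u,s) \notin \mathcal{E}(\mathcal{D})$ for all $u$). Establishing this source lemma is the one genuinely nontrivial step, and I would prove it by contradiction using finiteness: if every vertex had in-degree at least one, I would start from an arbitrary vertex $v_0$ and repeatedly follow an incoming edge backwards, producing an infinite walk $\cdots \to v_2 \to v_1 \to v_0$. Because $\mathcal{V}(\mathcal{D})$ is finite, some vertex must repeat; taking the first repetition isolates a subsequence of distinct vertices that closes up into a cycle in the sense of the definition given earlier, contradicting acyclicity.

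Granting the source lemma, the induction is routine. For the base case $|\mathcal{V}(\mathcal{D})|=1$, acyclicity rules out a self-loop (a cycle of length one under the stated definition), so the single vertex may simply be labelled $1$ and the ordering condition holds vacuously. For the inductive step I would pick a source $s$ of $\mathcal{D}$, delete $s$ together with its outgoing edges to obtain a digraph $\mathcal{D}'$ on $|\mathcal{V}(\mathcal{D})|-1$ vertices, and note that $\mathcal{D}'$ is again acyclic, since deleting vertices and edges cannot create a cycle. By the induction hypothesis $\mathcal{D}'$ admits a topological ordering using the labels $\{1,\dots,|\mathcal{V}(\mathcal{D})|-1\}$.

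Finally I would assemble the ordering of $\mathcal{D}$: increase every label produced for $\mathcal{D}'$ by one (so they occupy $\{2,\dots,|\mathcal{V}(\mathcal{D})|\}$) and assign the label $1$ to $s$. To verify that this is a valid topological ordering, I would check the three types of edges. Edges lying entirely within $\mathcal{D}'$ have their relative order preserved by the uniform shift, so the required inequality $u<v$ persists; edges emanating from $s$ are fine because $s$ carries the smallest label $1$; and there are no edges entering $s$, since $s$ was chosen to be a source. Hence every edge $(u,v) \in \mathcal{E}(\mathcal{D})$ satisfies $u<v$, completing the induction. The main obstacle, as noted, is the source lemma; everything after it is bookkeeping.
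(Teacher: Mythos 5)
Your proof is correct. Note, however, that the paper itself offers no proof of this lemma at all---it is stated as a citation to \cite{DEK} (Knuth), so there is nothing internal to compare against. Your argument (extract a source vertex via the finiteness/back-walk contradiction, then induct by deleting the source and shifting labels) is the classical proof, and it is essentially the argument underlying Knuth's own topological-sorting procedure; you also correctly observe that the paper's definition of a cycle admits $c=1$, so a self-loop counts as a cycle and the base case is sound.
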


Next, we state the main result of this section.
\begin{thm}
\label{thmcase1}
For any TGICP $\mathcal{I}$ belonging to either Case I or Case II-A, having any type of interactions (fully-participated or partially-participated) among its sub-problems $\mathcal{I}_{1},\mathcal{I}_{2}$, and $\mathcal{I}_{\{1,2\}}$, we have,
\begin{gather*}
l^*_{q}(\mathcal{I})=\text{mrk}_{q}({\bf{F}}_x^{(1)})+\text{mrk}_{q}({\bf{F}}_x^{(2)})+\text{mrk}_{q}({\bf{F}}_x^{(\{1,2\})}).
\end{gather*}
\end{thm}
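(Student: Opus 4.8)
The plan is to reduce the two-sender problem to a single-sender fitting-matrix computation and then invoke Lemma \ref{lmSCC}. The crucial reduction I would establish first is that for \emph{any} TGICP $\mathcal{I}$, the optimal scalar linear code length $l^*_q(\mathcal{I})$ equals $\text{mrk}_q({\bf{F}}_x)$, where ${\bf{F}}_x$ is the fitting matrix in (\ref{eqfitpart}) viewing the whole problem as an SGICP. One direction is immediate: any single-sender code of length $\text{mrk}_q({\bf{F}}_x)$ can be split across the two senders according to which columns (messages) belong to $\mathcal{M}_1$ and $\mathcal{M}_2$, since the common messages $\mathcal{P}_{\{1,2\}}$ are available to either sender, giving $l^*_q(\mathcal{I}) \le \text{mrk}_q({\bf{F}}_x)$. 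For the reverse inequality, a valid two-sender scalar linear code with encoding matrices ${\bf{G}}^{(1)}, {\bf{G}}^{(2)}$ can be stacked into a single matrix acting on all of ${\bf{x}}_{[m]}$, whose rank is at most $l_1+l_2$ and which still satisfies the decoding (fitting) constraints; hence $\text{mrk}_q({\bf{F}}_x) \le l^*_q(\mathcal{I})$. This gives $l^*_q(\mathcal{I}) = \text{mrk}_q({\bf{F}}_x)$.

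With that equality in hand, the problem becomes purely about $\text{mrk}_q({\bf{F}}_x)$ for the partitioned matrix in (\ref{eqfitpart}). Here I would split into the two cases. For Case I, the interaction digraph $\mathcal{H}$ is acyclic, so by Lemma \ref{lemtopord} it admits a topological ordering of the three vertices $\{1, 2, \{1,2\}\}$. Relabeling (simultaneously permuting the three block-rows and the three block-columns of ${\bf{F}}_x$ according to this ordering) brings ${\bf{F}}_x$ into exactly the block upper triangular form (\ref{eqbup}) with $b=3$: every edge $(\mathcal{A}_1,\mathcal{A}_2)$ points from a lower index to a higher one, meaning every nonzero off-diagonal interaction block ${\bf{A}}^{(\mathcal{A}_1,\mathcal{A}_2)}_{x_0}$ sits above the diagonal, and every block below the diagonal is the all-zero matrix ${\bf{0}}$ (an absent interaction forces the corresponding $x_0$-block to be entirely $0$). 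Since the off-diagonal blocks consist only of $x$'s and $0$'s, Lemma \ref{lmSCC} applies directly and yields $\text{mrk}_q({\bf{F}}_x) = \sum \text{mrk}_q({\bf{F}}_x^{(\mathcal{A})})$, which is exactly the claimed sum (permutation of rows/columns does not change the minimum rank).

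For Case II-A, the digraph is cyclic, so the topological-ordering argument does not apply verbatim and this is where I expect the main obstacle to lie. The strategy is to show that the particular cyclic structure defining sub-case II-A still admits a \emph{relabeling after merging} that recovers the block-triangular form, or that the offending cyclic interactions can be absorbed without raising the rank. Concretely, I would identify from Fig. \ref{fig1} exactly which interaction pattern sub-case II-A prescribes and argue that, although a two-sided interaction exists, one can still choose completions of the off-diagonal $x_0$-blocks so that the three diagonal completions achieve their individual minimum ranks simultaneously; the lower bound then again comes from the fact that any completion ${\bf{\tilde F}}$ contains the three diagonal completions ${\bf{\tilde F}}^{(\mathcal{A})} \approx {\bf{F}}^{(\mathcal{A})}_x$ as submatrices, so $\text{rk}_q({\bf{\tilde F}}) \ge \sum \text{mrk}_q({\bf{F}}^{(\mathcal{A})}_x)$ regardless of acyclicity. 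The hard part is the matching \emph{upper} bound in the cyclic case: I must exhibit a single completion of all of ${\bf{F}}_x$ whose rank equals the sum, despite the cycle preventing a clean triangularization. I would attempt this by explicitly setting the $x$-entries in the cyclic off-diagonal blocks to $0$ (filling each ${\bf{A}}^{(\mathcal{A}_1,\mathcal{A}_2)}_{x_0}$ with zeros rather than using them to reduce rank) and then verifying that the resulting block-diagonal completion still fits ${\bf{F}}_x$ and attains rank $\sum \text{mrk}_q({\bf{F}}^{(\mathcal{A})}_x)$. Establishing that the specific interaction constraints of II-A permit this all-zero filling of the cyclic blocks — equivalently, that no receiver's demand forces a nonzero entry in a block that participates in the cycle — is the technical crux, and I would pin it down by examining the precise placement of the $1$-entries (demands) relative to the cyclic interaction blocks in the II-A configuration.
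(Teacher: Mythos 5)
Your proposal has a fatal gap at its foundation: the claimed reduction $l^*_q(\mathcal{I})=\text{mrk}_{q}({\bf{F}}_x)$ for an arbitrary TGICP is false, and it is false precisely on Case II-A instances. The ``splitting'' argument for the direction $l^*_q(\mathcal{I})\le \text{mrk}_{q}({\bf{F}}_x)$ does not work: a row of an optimal single-sender encoding matrix may have nonzero entries in columns of both $\mathcal{P}_1$ and $\mathcal{P}_2$, and such a coded symbol can be transmitted by neither sender ($\mathcal{S}_1$ does not know ${\bf{x}}_{\mathcal{P}_2}$ and $\mathcal{S}_2$ does not know ${\bf{x}}_{\mathcal{P}_1}$), so assigning rows ``according to which columns belong to $\mathcal{M}_1$ and $\mathcal{M}_2$'' is not defined for such rows. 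Concretely, take $\mathcal{P}_1=\{1\}$, $\mathcal{P}_2=\{2\}$, $\mathcal{P}_{\{1,2\}}=\phi$, receiver $1$ demanding ${\bf{x}}_1$ with side information ${\bf{x}}_2$, and receiver $2$ demanding ${\bf{x}}_2$ with side information ${\bf{x}}_1$. This is a cyclic instance covered by the theorem; the full fitting matrix completes to the all-ones matrix, so $\text{mrk}_{q}({\bf{F}}_x)=1$, yet neither sender can form ${\bf{x}}_1+{\bf{x}}_2$, each sender must transmit at least one symbol, and $l^*_q(\mathcal{I})=2$ --- in agreement with the theorem (and with Lemma \ref{noP12}) but not with your reduction. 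Only the inequality $l^*_q(\mathcal{I})\ge \text{mrk}_{q}({\bf{F}}_x)$ (obtained by stacking the two encoders) is valid in general.

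The second error is in your Case II-A lower bound: the fact that any completion ${\bf{\tilde{F}}}$ contains completions of the three diagonal blocks as disjoint submatrices only forces $\text{rk}_q({\bf{\tilde{F}}})\ge \max_{\mathcal{A}}\text{mrk}_{q}({\bf{F}}_x^{(\mathcal{A})})$, not the sum; the sum bound in Lemma \ref{lmSCC} relies essentially on the all-zero blocks below the diagonal, which is exactly what a cycle destroys. Indeed, in the example above $\text{mrk}_{q}({\bf{F}}_x)=1<2=\sum_{\mathcal{A}}\text{mrk}_{q}({\bf{F}}_x^{(\mathcal{A})})$, so the ``technical crux'' you set up for II-A --- showing a block-diagonal completion attains $\text{mrk}_{q}({\bf{F}}_x)$ --- is unattainable: in Case II-A the single-sender minrank genuinely drops below the sum, and the theorem's equality survives only because of the two-sender constraint, which your reduction has discarded. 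This is why the paper proves II-A differently: it writes an arbitrary two-sender scalar linear code as ${\bf{G}}{\bf{x}}_{[m]}$ with the structured matrix ${\bf{G}}$ of (\ref{Geq}), whose zero blocks encode the fact that $\mathcal{S}_1$ cannot precode ${\bf{x}}_{\mathcal{P}_2}$ and $\mathcal{S}_2$ cannot precode ${\bf{x}}_{\mathcal{P}_1}$, and then counts dimensions (interference alignment): since receivers in $\mathcal{I}_{\{1,2\}}$ have no side information in ${\bf{x}}_{\mathcal{P}_1}\cup{\bf{x}}_{\mathcal{P}_2}$, the common-message precoders need $\text{mrk}_{q}({\bf{F}}_x^{(\{1,2\})})$ independent dimensions, and the $\mathcal{P}_1$- and $\mathcal{P}_2$-precoders, lying in complementary row blocks, need $\text{mrk}_{q}({\bf{F}}_x^{(1)})+\text{mrk}_{q}({\bf{F}}_x^{(2)})$ more. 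Your Case I argument (topological ordering, triangularization, Lemma \ref{lmSCC}) does match the paper and yields the correct lower bound there; but even in Case I the upper bound should be obtained by transmitting optimal codes of the three sub-problems from appropriate senders (equivalently, splitting the block-diagonal completion), not by splitting an arbitrary minrank-achieving single-sender code.
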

\begin{proof} We first prove the  theorem for Case I. The interaction digraph of any TGICP belonging to Case I is acyclic. According to Lemma \ref{lemtopord}, the interaction digraph has a topological ordering. Hence, by a permutation of the rows of ${\bf{F}}_x$ and/or by a permutation of the columns of ${\bf{F}}_x$, one can obtain another fitting matrix ${\bf{F}}'_x$ which is block upper triangular similar to the one given in (\ref{eqbup}). This amounts to relabelling of receivers (if rows are permuted) and/or  messages (if columns are permuted). Considering the TGICP as an SGICP with the fitting matrix ${\bf{F}}'_x$ and using Lemma \ref{lmSCC}, we obtain $l^*_q(\mathcal{I}) \geq \text{mrk}_{q}({\bf{F}}_{x}) =\sum_{s=1}^{3} \text{mrk}_{q}({\bf{F}}^{(s)}_{x})$, which is a lower bound on $l^*_q(\mathcal{I})$. By transmitting the optimal scalar linear index codes for the SGICPs $\mathcal{I}_1$, $\mathcal{I}_2$, and $\mathcal{I}_{\{1,2\}}$ from appropriate senders, we have $l^*_q(\mathcal{I}) \leq \sum_{s=1}^{3} \text{mrk}_{q}({\bf{F}}^{(s)}_{x})$. This gives a matching upper bound on $l^*_q(\mathcal{I})$. 
	
We now prove the theorem for Case II-A. The proof follows on similar lines as that of Theorem $2$ in \cite{CVBSR1}. We use interference alignment techniques to prove the result. Without loss of generality, any optimal  scalar linear code  can be written as ${\bf{G}}{\bf{x}}_{[m]}$, where ${\bf{G}}$ is given in  (\ref{Geq}).
Let ${\bf{G}}^{(i)} \in \mathbb{F}_q^{l'_i \times |\mathcal{P}_i|},$ $i \in \{1,2\}$, and ${\bf{\tilde{G}}}^{(j)} \in \mathbb{F}_q^{l'_j \times |\mathcal{P}_{\{1,2\}}|},$ $j \in \{1,2,3\}$. We assume that the matrices $({\bf{G}}^{(1)} | {\bf{\tilde{G}}}^{1}), ({\bf{G}}^{(2)} | {\bf{\tilde{G}}}^{(2)}),$ and ${\bf{\tilde{G}}}^{(3)}$ are full-rank, which is required for the optimality of the code. The columns of ${\bf{G}}$ are the precoding vectors for the messages, according to the interference alignment perspective \cite{jafar}. Note that the number of received signal dimensions (at any receiver) is same as the number of rows of ${\bf{G}}$.
\begin{equation}
\label{Geq}
{\bf{G}}=
\left(
\begin{array}{c|c|c} 
{\bf{G}}^{(1)} & {\bf{0}} & {\bf{\tilde{G}}}^{(1)} \\
\hline
{\bf{0}} & {\bf{G}}^{(2)} & {\bf{\tilde{G}}}^{(2)} \\
\hline
{\bf{0}} & {\bf{0}} & {\bf{\tilde{G}}}^{(3)} \\
\end{array}
\right).
\end{equation}

The upper bound given in the proof of the theorem for Case I is  also an upper bound for Case II-A. We now prove that it is also a matching lower bound for Case II-A. The receivers in $\mathcal{I}_{\{1,2\}}$ do not have any side information in ${\bf{x}}_{\mathcal{P}_1} \cup  {\bf{x}}_{\mathcal{P}_2}$. The precoding vectors for the messages in ${\bf{x}}_{\mathcal{P}_1} \cup {\bf{x}}_{\mathcal{P}_2}$ must be independent of the precoding vectors for the messages in ${\bf{x}}_{\mathcal{P}_{\{1,2\}}}$. Otherwise, one or more receivers in $\mathcal{I}_{\{1,2\}}$ can not cancel the interference caused by the precoding vectors of one or more messages in ${\bf{x}}_{\mathcal{P}_1} \cup {\bf{x}}_{\mathcal{P}_2}$. Thus, a minimum of $\text{mrk}({\bf{F}}_x^{(\{1,2\})})$ independent vectors are required to satisfy all the receivers in $\mathcal{I}_{\{1,2\}}$, as it is the optimal scalar linear code length for the SGICP $\mathcal{I}_{\{1,2\}}$. The precoding vectors of the messages in ${\bf{x}}_{\mathcal{P}_{1}}$ are independent of the precoding vectors of those  in ${\bf{x}}_{\mathcal{P}_{2}}$ (from the structure of ${\bf{G}}$). Hence, a  minimum of $\text{mrk}({\bf{F}}_x^{(1)})+\text{mrk}({\bf{F}}_x^{(2)})$ independent vectors are required to satisfy all the receivers in $\mathcal{I}_{1}$ and $\mathcal{I}_{2}$. The total number of dimensions used for precoding all the messages is same as the rank of ${\bf{G}}$, which thus satisfies $l'_1 + l'_2 +l'_3 \geq \text{mrk}({\bf{F}}_x^{(1)})+\text{mrk}({\bf{F}}_x^{(2)})+\text{mrk}({\bf{F}}_x^{(\{1,2\})}) $. 
\end{proof}

\section{Jointly Extended SGICPs}
\par In this section, we introduce the notion of joint extensions of any finite number of SGICPs, and identify a special class for which optimal scalar linear codes can be constructed using those of the sub-problems. We also show that two classes of  rank-invariant extensions of any SGICP presented in \cite{PK} are special cases of the class of joint extensions identified in this section. We require the following two definitions from \cite{PK}.

\begin{defn}[Extension of an SGICP, \cite{PK}]
An SGICP $\mathcal{I}_{e}$ with fitting matrix ${\bf{F}}_{x}^{e}$ is called an extended SGICP of another SGICP $\mathcal{I}$ (or  an extension of $\mathcal{I}$) with fitting matrix ${\bf{F}}_x$, if ${\bf{F}}_{x}^{e}$ contains ${\bf{F}}_x$ as a submatrix.  
\end{defn}

\begin{defn}[Rank-Invariant Extension, \cite{PK}]
An extension $\mathcal{I}_{e}$ of the SGICP $\mathcal{I}$ is called a rank-invariant extension of $\mathcal{I}$, if the optimal  scalar linear code lengths of both the SGICPs are equal.
\end{defn}

\par We now define a joint extension of a finite number of SGICPs which generalizes the notion of extension of a single SGICP.

\begin{defn}
Consider $u$ SGICPs where the $i$th SGICP $\mathcal{I}_i$ is described using the fitting matrix ${\bf{F}}_x^{(i)}$, $i \in [u]$. An SGICP $\mathcal{I}_{E}$ with the fitting matrix  ${\bf{F}}^{E}_{x}$ is called a jointly extended SGICP (or a joint extension) of $u$ SGICPs $\{\mathcal{I}_i\}_{i \in [u]}$, if 
${\bf{F}}^{E}_{x}$ consists of all ${\bf{F}}_x^{(i)}$'s, $i \in [u]$, as disjoint submatrices. 
\end{defn}

\par Note that  $\mathcal{I}_{E}$ is also an extension of $\mathcal{I}_{i}$, for all $i \in [u]$. Lemmas $1$ and $2$ in \cite{PK} are used to derive the results in this section and are stated below.

\begin{lem}[Lemma 1, \cite{PK}]
	For the SGICP $\mathcal{I}$ with $n \times m$ fitting matrix ${\bf{F}}_x$, a matrix ${\bf{G}} \in \mathbb{F}^{r \times m}_q$ is an encoding matrix (for some linear index code) iff there exists a matrix ${\bf{D}} \in \mathbb{F}^{n \times r}_q$ such that ${\bf{D}}{\bf{G}}$ completes ${\bf{F}}_x$, i.e. ${\bf{D}}{\bf{G}} \approx  {\bf{F}}_x$.
	\label{dgmatrix}
\end{lem}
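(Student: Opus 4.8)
The plan is to reduce the decodability requirement at each receiver to a single linear-algebraic membership condition, and then assemble the $n$ per-receiver conditions into the matrix relation $\mathbf{D}\mathbf{G} \approx \mathbf{F}_x$. Throughout, let $\mathbf{e}_j \in \mathbb{F}_q^{1 \times m}$ denote the $j$th standard row vector, regard the rows of $\mathbf{G}$ as linear functionals on the message space $\mathbb{F}_q^{m \times 1}$, and recall that the demanded symbol $\mathbf{x}_{f(i)}$ is a scalar since $t=1$. First I would fix a receiver $i$ and characterize exactly when it can recover $\mathbf{x}_{f(i)}$ from its observation $(\mathbf{x}_{\rchi_i}, \mathbf{G}\mathbf{x}_{[m]})$. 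Since the messages range over all of $\mathbb{F}_q^{m \times 1}$, a decoding map $\mathbb{D}_i$ exists if and only if the value of $\mathbf{x}_{f(i)}$ is constant on every fibre of the observation map; equivalently, for every $\mathbf{z} \in \mathbb{F}_q^{m \times 1}$ with $\mathbf{G}\mathbf{z} = \mathbf{0}$ and $z_j = 0$ for all $j \in \rchi_i$, one must have $z_{f(i)} = 0$. Writing $N_i$ for this subspace (the intersection of $\ker \mathbf{G}$ with the coordinate subspace $\{z_j = 0 : j \in \rchi_i\}$), decodability at receiver $i$ is precisely the statement that the functional $\mathbf{e}_{f(i)}$ vanishes on $N_i$.

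Next I would invoke annihilator duality for an intersection of kernels. Since $N_i = \{\mathbf{z} : \phi(\mathbf{z}) = 0 \ \forall \phi \in S_i\}$, where $S_i$ is the union of the rows of $\mathbf{G}$ with $\{\mathbf{e}_j : j \in \rchi_i\}$, its annihilator is $N_i^{\perp} = \langle \mathbf{G}\rangle + \mathrm{span}\{\mathbf{e}_j : j \in \rchi_i\}$. Hence receiver $i$ decodes if and only if $\mathbf{e}_{f(i)} \in \langle \mathbf{G}\rangle + \mathrm{span}\{\mathbf{e}_j : j \in \rchi_i\}$, i.e. if and only if there exist a row vector $\mathbf{d}_i \in \mathbb{F}_q^{1 \times r}$ and scalars $\{\alpha_j\}_{j \in \rchi_i}$ with $\mathbf{d}_i \mathbf{G} = \mathbf{e}_{f(i)} - \sum_{j \in \rchi_i}\alpha_j \mathbf{e}_j$. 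Reading off the entries of this row vector, its $f(i)$th coordinate equals $1$ (here I use $f(i) \notin \rchi_i$), its coordinates indexed by $\rchi_i$ are the arbitrary scalars $-\alpha_j$, and all remaining coordinates are $0$; this is exactly the condition that $\mathbf{d}_i \mathbf{G}$ completes the $i$th row of $\mathbf{F}_x$.

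Finally I would assemble the two implications. For the forward direction, given that $\mathbf{G}$ is an encoding matrix, each receiver supplies a vector $\mathbf{d}_i$ as above; stacking these as the rows of $\mathbf{D} \in \mathbb{F}_q^{n \times r}$ makes the $i$th row of $\mathbf{D}\mathbf{G}$ complete the $i$th row of $\mathbf{F}_x$ for every $i$, so $\mathbf{D}\mathbf{G} \approx \mathbf{F}_x$. For the converse, given $\mathbf{D}$ with $\mathbf{D}\mathbf{G} \approx \mathbf{F}_x$, the $i$th row $\mathbf{d}_i$ of $\mathbf{D}$ yields the linear decoder that forms $\mathbf{d}_i(\mathbf{G}\mathbf{x}_{[m]})$ and subtracts the known side-information contribution $\sum_{j \in \rchi_i}[\mathbf{d}_i\mathbf{G}]_j \mathbf{x}_j$, leaving $\mathbf{x}_{f(i)}$; thus $\mathbf{G}$ is a valid encoding matrix. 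I expect the only delicate step to be the reduction in the first paragraph — namely arguing that an arbitrary (possibly nonlinear) decoder exists exactly when the purely linear vanishing condition on $N_i$ holds — which rests on the messages being independent and filling the entire space $\mathbb{F}_q^{m \times 1}$. Once that reduction is in place, the duality identification of $N_i^{\perp}$ and the row-by-row reassembly are routine linear algebra.
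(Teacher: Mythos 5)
Your proof is correct. Note that the paper does not actually prove this statement: Lemma~1 is imported from the reference [PK] (Gummadi, Choudhary, and Krishnan) and is stated without proof, so there is no in-paper argument to compare against; your attempt must stand on its own, and it does. The two genuinely delicate points are both handled properly. First, the forward direction must permit arbitrary (possibly nonlinear) decoders $\mathbb{D}_i$ --- this is what makes the lemma substantive, since it shows linear decoding is without loss of optimality --- and your fibre argument (the demanded coordinate is constant on cosets of $N_i = \ker \mathbf{G} \cap \{ \mathbf{z} : z_j = 0,\ j \in \rchi_i \}$ if and only if $\mathbf{e}_{f(i)}$ vanishes on $N_i$) covers exactly this. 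Second, the passage from ``vanishes on $N_i$'' to ``lies in $\langle \mathbf{G} \rangle + \mathrm{span}\{\mathbf{e}_j : j \in \rchi_i\}$'' rests on the double-annihilator identity $(W^{\perp})^{\perp} = W$, which is valid over $\mathbb{F}_q$ by the dimension count $\dim W + \dim W^{\perp} = m$ even though the standard bilinear form is isotropic on finite fields; your invocation of it is sound. The remaining bookkeeping --- reading off that $\mathbf{d}_i\mathbf{G}$ has a $1$ in position $f(i)$ (using $f(i) \notin \rchi_i$), free entries on $\rchi_i$, and zeros elsewhere, then stacking the $\mathbf{d}_i$ into $\mathbf{D}$, and conversely using row $i$ of $\mathbf{D}\mathbf{G}$ to subtract off the side-information terms --- matches the fitting-matrix completion condition exactly.
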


\begin{lem}[Lemma 2, \cite{PK}]
	If the fitting matrix ${\bf{F}}_x$ of the SGICP $\mathcal{I}$ is a submatrix of the fitting matrix ${\bf{F}}^{e}_{x}$ of the SGICP $\mathcal{I}_{e}$, then $\text{mrk}_{q}({\bf{F}}^{e}_{x}) \geq \text{mrk}_{q}({\bf{F}}_x)$.
	\label{minrk}
\end{lem}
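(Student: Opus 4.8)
The plan is to exploit the fact that $\text{mrk}_q(\cdot)$ is defined as a minimum of ordinary matrix ranks taken over all completions, together with the elementary fact that deleting rows and/or columns of a matrix can never increase its rank. Since $\mathbf{F}_x$ is a submatrix of $\mathbf{F}^e_x$, there are index sets $\mathcal{R}$ (of rows) and $\mathcal{C}$ (of columns) of $\mathbf{F}^e_x$ such that $\mathbf{F}_x$ is exactly the part of $\mathbf{F}^e_x$ selected by $\mathcal{R}$ and $\mathcal{C}$. The strategy is to begin from an optimal completion of the larger problem and read off an induced completion of the smaller one, then compare ranks.

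Concretely, first I would fix an optimal completion $\mathbf{F}^e \approx \mathbf{F}^e_x$ achieving $\text{rk}_q(\mathbf{F}^e) = \text{mrk}_q(\mathbf{F}^e_x)$; such a completion exists by the definition of $\text{mrk}_q$. Next, let $\mathbf{F}$ be the submatrix of $\mathbf{F}^e$ obtained by retaining exactly the rows $\mathcal{R}$ and the columns $\mathcal{C}$. The central observation is that $\mathbf{F} \approx \mathbf{F}_x$. Indeed, wherever $\mathbf{F}_x$ has a prescribed entry (a $1$ or a $0$), the corresponding entry of $\mathbf{F}^e_x$ is that same prescribed entry, because $\mathbf{F}_x$ is literally a submatrix of $\mathbf{F}^e_x$; and a completion leaves prescribed entries untouched, so $\mathbf{F}$ agrees with $\mathbf{F}_x$ on all known positions. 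Wherever $\mathbf{F}_x$ has an unknown $x$, the corresponding entry of $\mathbf{F}$ is simply some element of $\mathbb{F}_q$, which is an admissible choice for that unknown. Hence $\mathbf{F}$ is a legitimate completion of $\mathbf{F}_x$, and therefore $\text{mrk}_q(\mathbf{F}_x) \leq \text{rk}_q(\mathbf{F})$ by minimality.

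It then remains to chain the rank inequalities. Since $\mathbf{F}$ is obtained from $\mathbf{F}^e$ by deleting rows and/or columns, standard linear algebra gives $\text{rk}_q(\mathbf{F}) \leq \text{rk}_q(\mathbf{F}^e)$. Combining the three relations yields
\[
\text{mrk}_q(\mathbf{F}_x) \leq \text{rk}_q(\mathbf{F}) \leq \text{rk}_q(\mathbf{F}^e) = \text{mrk}_q(\mathbf{F}^e_x),
\]
which is precisely the claim. The only point requiring genuine care — and the one I would treat as the main obstacle — is the position-matching argument establishing $\mathbf{F} \approx \mathbf{F}_x$: one must verify that restricting an optimal completion of the extended fitting matrix to the designated rows and columns does not disturb any of the prescribed $0/1$ pattern of $\mathbf{F}_x$, so that the induced matrix is a completion of $\mathbf{F}_x$ itself rather than of some other fitting pattern. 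Everything else is routine. As an alternative, one could instead invoke Lemma \ref{dgmatrix} to produce an encoding matrix for $\mathcal{I}_e$ of width $\text{mrk}_q(\mathbf{F}^e_x)$ and restrict its columns to those indexing $\mathbf{F}_x$, but the direct submatrix-restriction argument above is the cleaner route.
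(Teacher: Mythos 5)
Your proof is correct. Note that the paper itself does not prove this lemma at all---it is imported verbatim as Lemma~2 of \cite{PK} and used as a black box---so there is no in-paper argument to compare against; your submatrix-restriction argument (restrict an optimal completion of ${\bf{F}}^{e}_{x}$ to the rows and columns carrying ${\bf{F}}_x$, check that the prescribed $0/1$ entries survive because the symbol pattern of a submatrix is inherited exactly, and then use that deleting rows and columns cannot increase rank) is the standard and complete proof of this fact, and the point you single out as needing care, namely that the restriction of a completion is genuinely a completion of ${\bf{F}}_x$ and not of some other pattern, is indeed the only step with any content.
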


\par We now state and prove the main result of this section. The proof is based on the approach used to prove Theorem 2, \cite{LRM}. 

\begin{thm}
	\label{th1}
Consider the jointly extended SGICP $\mathcal{I}_E$ with the $(\sum_{i=1} ^{u}n_i) \times (\sum_{i=1} ^{u}m_i)$ fitting matrix ${\bf{F}}_x^{E}$ given in (\ref{nJEICP}), where the $i$th sub-problem $\mathcal{I}_i$ has the $n_i \times m_i$ fitting matrix ${\bf{F}}_x^{(i)}$, $i \in [u]$. Let  ${\bf{F}}^{(i)} \approx {\bf{F}}_x^{(i)}$, with the first $r_i=\text{rk}_q({\bf{F}}^{(i)})$ rows of ${\bf{F}}^{(i)}$ spanning $\langle {\bf{F}}^{(i)} \rangle$, $\forall i \in [u]$, and $r_i$ not necessarily equal to $\text{mrk}_{q}({\bf{F}}^{(i)}_x)$. Also, let 
$r_1 \geq r_2 \geq ... \geq r_{u-1} \geq r_u$. Let ${\bf{B}}_{x_0}^{(i,j)}$ be any $n_i \times m_j$ matrix such that ${\bf{B}}^{(i,j)} \approx {\bf{B}}_{x_0}^{(i,j)}$, $\forall i,j \in [u], i \neq j$, where ${\bf{B}}^{(i,j)} = \left( \begin{array}{c}  {\bf{\hat{F}}}^{(i,j)} \\
\hline
{\bf{P}}^{(i)}{\bf{\hat{F}}}^{(i,j)}
\end{array} \right)$, with ${\bf{\hat{F}}}^{(i,j)}$ given in (\ref{Fcap}), and ${\bf{P}}^{(i)}$ being an $(n_i-r_i) \times r_i$ matrix such that the last $n_i-r_i$ rows of ${\bf{F}}^{(i)}$ are given by ${\bf{P}}^{(i)}{\bf{F}}^{(i)}_{[r_i]}$.
\begin{equation}
{\bf{\hat{F}}}^{(i,j)} =
\begin{cases}
{\bf{F}}^{(j)}_{[r_i]} & if \ r_j \geq r_i,\\
\left( \begin{array}{c} {\bf{F}}^{(j)}_{[r_j]}\\
\hline
{\bf{0}}_{(r_i-r_j) \times m_j}\\
\end{array}
\right) & otherwise.
\end{cases}
\label{Fcap}
\end{equation}
An encoding matrix for the SGICP $\mathcal{I}_{E}$ is given by ${\bf{G}}_{E}=\begin{pmatrix}
{\bf{\hat{F}}}^{(1,1)}|{\bf{\hat{F}}}^{(1,2)}|...|...|{\bf{\hat{F}}}^{(1,u)}\end{pmatrix}$. Moreover, ${\bf{G}}_{E}{\bf{x}}_{[m]}$ is an optimal scalar linear code if $r_1=\max\{\text{mrk}_{q}({\bf{F}}^{(i)}_x),i \in [u]\}$.
	\begin{equation}
	{\bf{F}}^{E}_{x} =		
	\left(
	\begin{array}{c|c|c|c|c}
	{\bf{F}}_x^{(1)}  & {\bf{B}}_{x_0}^{(1,2)}  & ... & ...       & {\bf{B}}_{x_0}^{(1,u)}  \\
	\hline
	{\bf{B}}_{x_0}^{(2,1)}          & {\bf{F}}_x^{(2)} & {\bf{B}}_{x_0}^{(2,3)} & \ldots      & \vdots  \\
	\hline
	\vdots       &  {\bf{B}}_{x_0}^{(3,2)}  & \ddots & \ldots      & \vdots  \\
	\hline
	\vdots      & \vdots & \vdots & {\bf{F}}_x^{(u-1)} &  {\bf{B}}_{x_0}^{(u-1,u)} \\
	\hline
	{\bf{B}}_{x_0}^{(u,1)}    & \ldots & \ldots &    {\bf{B}}_{x_0}^{(u,u-1)}      & {\bf{F}}_x^{(u)} \\
	\end{array}
	\right).
		\label{nJEICP}
	\end{equation}	
\end{thm}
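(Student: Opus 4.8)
The plan is to prove the first assertion (that ${\bf{G}}_E$ is an encoding matrix) through Lemma \ref{dgmatrix}: I will exhibit a matrix ${\bf{D}} \in \mathbb{F}_q^{n \times r_1}$, where $n = \sum_{i=1}^{u} n_i$, for which ${\bf{D}}{\bf{G}}_E \approx {\bf{F}}_x^{E}$. First I would partition the rows of ${\bf{D}}$ conformally with the block-row structure of ${\bf{F}}_x^{E}$ in (\ref{nJEICP}), writing ${\bf{D}}$ as a vertical stack of blocks ${\bf{D}}^{(i)} \in \mathbb{F}_q^{n_i \times r_1}$, $i \in [u]$. Since the $j$th block-column of ${\bf{G}}_E$ is ${\bf{\hat{F}}}^{(1,j)}$, the $(i,j)$ block of ${\bf{D}}{\bf{G}}_E$ is exactly ${\bf{D}}^{(i)}{\bf{\hat{F}}}^{(1,j)}$. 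Observing that ${\bf{\hat{F}}}^{(i,i)} = {\bf{F}}^{(i)}_{[r_i]}$ and hence ${\bf{B}}^{(i,i)} = {\bf{F}}^{(i)}$ (its last $n_i-r_i$ rows being ${\bf{P}}^{(i)}{\bf{F}}^{(i)}_{[r_i]}$ by the definition of ${\bf{P}}^{(i)}$, which exists because the top $r_i$ rows of ${\bf{F}}^{(i)}$ span $\langle {\bf{F}}^{(i)} \rangle$), the task reduces to choosing each ${\bf{D}}^{(i)}$ so that ${\bf{D}}^{(i)}{\bf{\hat{F}}}^{(1,j)} = {\bf{B}}^{(i,j)}$ for every $j \in [u]$. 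This single family of identities makes every diagonal block complete ${\bf{F}}_x^{(i)}$ and every off-diagonal block complete ${\bf{B}}_{x_0}^{(i,j)}$, the latter because ${\bf{B}}^{(i,j)} \approx {\bf{B}}_{x_0}^{(i,j)}$ by hypothesis.

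The key construction is to take ${\bf{D}}^{(i)} = \left( \begin{array}{c} {\bf{J}}_i \\ \hline {\bf{P}}^{(i)}{\bf{J}}_i \end{array} \right)$, where ${\bf{J}}_i = ({\bf{I}}_{r_i \times r_i} \,|\, {\bf{0}}_{r_i \times (r_1-r_i)})$ is the row-selection matrix extracting the top $r_i$ rows (well defined since $r_i \leq r_1$). Because ${\bf{B}}^{(i,j)} = \left( \begin{array}{c} {\bf{\hat{F}}}^{(i,j)} \\ \hline {\bf{P}}^{(i)}{\bf{\hat{F}}}^{(i,j)} \end{array} \right)$, it suffices to verify the single identity ${\bf{J}}_i {\bf{\hat{F}}}^{(1,j)} = {\bf{\hat{F}}}^{(i,j)}$ for all $i,j$; the bottom block then follows automatically by left-multiplication by ${\bf{P}}^{(i)}$, so the choice ${\bf{D}}^{(i)}_{\mathrm{bot}} = {\bf{P}}^{(i)}{\bf{J}}_i$ is forced and free. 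Since $r_1 = \max_i r_i$, there are only two shapes for ${\bf{\hat{F}}}^{(1,j)}$: either $r_j = r_1$, giving ${\bf{\hat{F}}}^{(1,j)} = {\bf{F}}^{(j)}_{[r_1]}$, or $r_j < r_1$, giving ${\bf{\hat{F}}}^{(1,j)}$ equal to the stack of ${\bf{F}}^{(j)}_{[r_j]}$ over ${\bf{0}}_{(r_1-r_j) \times m_j}$. Extracting the top $r_i$ rows and splitting on whether $r_j \geq r_i$ or $r_j < r_i$ recovers precisely the two branches of (\ref{Fcap}) defining ${\bf{\hat{F}}}^{(i,j)}$, the zero-padded rows remaining aligned because $r_i - r_j \leq r_1 - r_j$. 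This case check is the computational heart of the argument, and the main obstacle I anticipate is purely bookkeeping: keeping the zero-padding consistent across the sub-cases so that the uniform selector ${\bf{J}}_i$ simultaneously produces the correct ${\bf{\hat{F}}}^{(i,j)}$ for every $j$. With this ${\bf{D}}$ we obtain ${\bf{D}}{\bf{G}}_E \approx {\bf{F}}_x^{E}$, and Lemma \ref{dgmatrix} concludes that ${\bf{G}}_E$ is an encoding matrix for $\mathcal{I}_E$.

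For the optimality claim I would argue by a sandwich. The matrix ${\bf{G}}_E$ has exactly $r_1$ rows, so ${\bf{G}}_E {\bf{x}}_{[m]}$ is a scalar linear index code of length $r_1$, whence $\text{mrk}_q({\bf{F}}_x^{E}) \leq r_1$. Conversely, each ${\bf{F}}_x^{(i)}$ is a submatrix of ${\bf{F}}_x^{E}$, so Lemma \ref{minrk} gives $\text{mrk}_q({\bf{F}}_x^{E}) \geq \text{mrk}_q({\bf{F}}_x^{(i)})$ for every $i$, hence $\text{mrk}_q({\bf{F}}_x^{E}) \geq \max_{i \in [u]} \text{mrk}_q({\bf{F}}_x^{(i)})$. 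Under the stated hypothesis $r_1 = \max_{i \in [u]} \text{mrk}_q({\bf{F}}_x^{(i)})$ the two bounds coincide, forcing $\text{mrk}_q({\bf{F}}_x^{E}) = r_1$; since this common value is the optimal scalar linear code length of $\mathcal{I}_E$, the code ${\bf{G}}_E {\bf{x}}_{[m]}$ of length $r_1$ is optimal.
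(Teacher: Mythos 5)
Your proposal is correct and follows essentially the same route as the paper's proof: your block ${\bf{D}}^{(i)}$ (the stack of ${\bf{J}}_i$ over ${\bf{P}}^{(i)}{\bf{J}}_i$) multiplies out to exactly the matrix used in the paper, and both arguments conclude via Lemma \ref{dgmatrix} for validity and Lemma \ref{minrk} for the optimality sandwich. The only difference is that you spell out the case analysis verifying ${\bf{J}}_i {\bf{\hat{F}}}^{(1,j)} = {\bf{\hat{F}}}^{(i,j)}$, which the paper leaves as an easy block-multiplication check.
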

\begin{proof}
	\par We first show that there exists a matrix ${\bf{D}}_{E}$ such that ${\bf{D}}_{E}{\bf{G}}_{E} \approx {\bf{F}}^{E}_{x}$, so that ${\bf{G}}_{E}$ is an encoding matrix according to Lemma \ref{dgmatrix}. Consider the $n_i \times r_1$ matrix 
	\begin{center}
		${\bf{D}}^{(i)}=\left( \begin{array}{c|c} {\bf{I}}_{r_i \times r_i} & {\bf{0}}_{r_i \times (r_1-r_i)}\\ 
		\hline
		{\bf{P}}^{(i)}  & {\bf{0}}_{(n_i-r_i) \times (r_1-r_i)} \end{array} \right)$.
	\end{center}
	\par It can be easily verified using multiplication of block matrices that, ${\bf{D}}^{(i)}{\bf{\hat{F}}}^{(1,j)}={\bf{B}}^{(i,j)}$, and ${\bf{D}}^{(i)}{\bf{\hat{F}}}^{(1,i)}={\bf{F}}^{(i)}$, $\forall i \neq j, i, j \in [u]$. Let ${\bf{D}}_{E} = 
	\begin{pmatrix}
	({\bf{D}}^{(1)})^T | ({\bf{D}}^{(2)})^T  |...|({\bf{D}}^{(u)})^T
	\end{pmatrix}^T$. Then the product ${\bf{D}}_{E}{\bf{G}}_{E}$ is as given in (\ref{FE}).
	\begin{equation}  
	\begin{pmatrix} 
	{\bf{D}}^{(1)}{\bf{\hat{F}}}^{(1,1)} & {\bf{D}}^{(1)}{\bf{\hat{F}}}^{(1,2)} & ...    & {\bf{D}}^{(1)}{\bf{\hat{F}}}^{(1,u)} \\
	{\bf{D}}^{(2)}{\bf{\hat{F}}}^{(1,1)} & {\bf{D}}^{(2)}{\bf{\hat{F}}}^{(1,2)} & ...    & {\bf{D}}^{(2)}{\bf{\hat{F}}}^{(1,u)} \\
	\vdots           & \vdots           & ...    & \vdots \\ 
	{\bf{D}}^{(u)}{\bf{\hat{F}}}^{(1,1)} & {\bf{D}}^{(u)}{\bf{\hat{F}}}^{(1,2)} & ...    & {\bf{D}}^{(u)}{\bf{\hat{F}}}^{(1,u)} \\
	\end{pmatrix}
	\approx {\bf{F}}^{E}_{x}.
	\label{FE}
	\end{equation}
	\par Hence, the length of the code obtained from ${\bf{G}}_{E}$ is $r_1$. From Lemma \ref{minrk}, $\text{mrk}_{q}({\bf{F}}^{E}_{x}) \geq \max\{\text{mrk}_{q}({\bf{F}}_x^{(i)}),i \in [u]\}$. Hence if $r_{1}=\max\{\text{mrk}_{q}({\bf{F}}_x^{(i)}),i \in [u]\}$, the code ${\bf{G}}_{E}{\bf{x}}_{[m]}$ is scalar linear optimal. 
\end{proof}

The following corollary shows that Lemma $7$ in \cite{CVBSR1} can be obtained as a special case of Theorem \ref{th1} by making a minor modification in its proof.

\begin{cor}
	Consider the SGICP $\mathcal{I}_E$ with the fitting matrix ${\bf{F}}^{E}_{x}$ given in Theorem \ref{th1} with ${\bf{B}}_{x_0}^{(i,j)}={\bf{X}}$, for all $i \neq j, i,j \in [u]$. Let ${\bf{c}}^{(i)}$ be any scalar linear code (not necessarily optimal) for the SGICP $\mathcal{I}_i$ with code length $r_i$. Then the fact that ${\bf{c}}^{(1)}+\cdots+{\bf{c}}^{(u)}$ is a scalar linear code for $\mathcal{I}_E$ follows from Theorem \ref{th1}. It is an optimal scalar linear code if $r_{max}=\max\{r_i,i \in [u]\}=\max\{\text{mrk}_{q}({\bf{F}}^{(i)}_x),i \in [u]\}$.
	\label{corth1}
\end{cor}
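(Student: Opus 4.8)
The plan is to recognize the zero-padded codeword sum as a single encoding matrix of essentially the form appearing in Theorem \ref{th1}, and then to re-run that theorem's proof with one change in the decoder. First I would pass from codewords to matrices. Let ${\bf{G}}^{(i)} \in \mathbb{F}_q^{r_i \times m_i}$ be an encoding matrix realizing the code ${\bf{c}}^{(i)}$ for $\mathcal{I}_i$, so that ${\bf{c}}^{(i)}={\bf{G}}^{(i)}{\bf{x}}^{(i)}$, where ${\bf{x}}^{(i)}$ collects the $m_i$ messages of $\mathcal{I}_i$. Zero-padding ${\bf{c}}^{(i)}$ at its least significant positions to the common length $r_{max}$ corresponds to the padded block
\[
\hat{\bf{G}}^{(i)}=\left( \begin{array}{c} {\bf{G}}^{(i)} \\ \hline {\bf{0}}_{(r_{max}-r_i) \times m_i} \end{array} \right),
\]
so that ${\bf{c}}^{(1)}+\cdots+{\bf{c}}^{(u)}={\bf{G}}_{E}{\bf{x}}_{[m]}$ with ${\bf{G}}_{E}=\left( \hat{\bf{G}}^{(1)} | \cdots | \hat{\bf{G}}^{(u)} \right)$. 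This is exactly the matrix ${\bf{G}}_{E}$ of Theorem \ref{th1} once the specific block built from ${\bf{F}}^{(j)}_{[r_j]}$ is replaced by the arbitrary encoding matrix ${\bf{G}}^{(j)}$; by Lemma \ref{dgmatrix} it then suffices to exhibit a matrix ${\bf{D}}_E$ with ${\bf{D}}_{E}{\bf{G}}_{E} \approx {\bf{F}}^{E}_{x}$.

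Next I would build ${\bf{D}}_E$, and this is the single modification to the proof of Theorem \ref{th1}. Since ${\bf{G}}^{(i)}$ is an encoding matrix for $\mathcal{I}_i$, Lemma \ref{dgmatrix} supplies $\tilde{\bf{D}}^{(i)} \in \mathbb{F}_q^{n_i \times r_i}$ with $\tilde{\bf{D}}^{(i)}{\bf{G}}^{(i)} \approx {\bf{F}}^{(i)}_x$. I would take the $i$th row block of ${\bf{D}}_E$ to be $\left( \tilde{\bf{D}}^{(i)} | {\bf{0}}_{n_i \times (r_{max}-r_i)} \right)$, replacing the ${\bf{P}}^{(i)}$-based decoder of Theorem \ref{th1}, which had been tailored to the particular choice ${\bf{G}}^{(i)}={\bf{F}}^{(i)}_{[r_i]}$. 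The $(i,i)$ block of ${\bf{D}}_{E}{\bf{G}}_{E}$ is then $\left( \tilde{\bf{D}}^{(i)} | {\bf{0}} \right) \hat{\bf{G}}^{(i)}=\tilde{\bf{D}}^{(i)}{\bf{G}}^{(i)} \approx {\bf{F}}^{(i)}_x$, reproducing the diagonal blocks of ${\bf{F}}^{E}_{x}$.

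The conceptual heart of the argument — and the reason the hypothesis ${\bf{B}}_{x_0}^{(i,j)}={\bf{X}}$ cannot be dropped — is the off-diagonal blocks, which is where I expect the only real content to lie. For $i \neq j$ the corresponding block is $\left( \tilde{\bf{D}}^{(i)} | {\bf{0}} \right) \hat{\bf{G}}^{(j)}=\tilde{\bf{D}}^{(i)}{\bf{G}}^{(j)}$, an $n_i \times m_j$ matrix over which we have no control, since $\tilde{\bf{D}}^{(i)}$ and ${\bf{G}}^{(j)}$ come from unrelated sub-problems. In Theorem \ref{th1} such a block had to equal the prescribed completion ${\bf{B}}^{(i,j)}$, which is precisely what forced the rigid structure of ${\bf{B}}^{(i,j)}$ and hence the specific sub-problem codes. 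Under the present hypothesis every off-diagonal block of ${\bf{F}}^{E}_{x}$ is the all-unknowns matrix ${\bf{X}}$, so \emph{any} matrix completes it and the off-diagonal constraints become vacuous; this is exactly the slack that permits arbitrary codes ${\bf{G}}^{(j)}$. Assembling the row blocks gives ${\bf{D}}_{E}{\bf{G}}_{E} \approx {\bf{F}}^{E}_{x}$, so by Lemma \ref{dgmatrix} the sum ${\bf{c}}^{(1)}+\cdots+{\bf{c}}^{(u)}={\bf{G}}_{E}{\bf{x}}_{[m]}$ is a scalar linear code for $\mathcal{I}_E$ of length $r_{max}$.

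Finally, for optimality I would invoke Lemma \ref{minrk}: since each ${\bf{F}}^{(i)}_x$ is a submatrix of ${\bf{F}}^{E}_{x}$, we have $\text{mrk}_q({\bf{F}}^{E}_{x}) \geq \max\{\text{mrk}_q({\bf{F}}^{(i)}_x),i \in [u]\}$, so whenever $r_{max}=\max\{\text{mrk}_q({\bf{F}}^{(i)}_x),i \in [u]\}$ the constructed length $r_{max}$ meets this lower bound and the code is scalar linear optimal. The main obstacle is conceptual rather than computational: recognizing that setting every ${\bf{B}}_{x_0}^{(i,j)}={\bf{X}}$ trivializes the inter-block completion requirements of Theorem \ref{th1}, which is exactly what frees the construction to use arbitrary, non-optimal, sub-problem encoders.
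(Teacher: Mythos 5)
Your proof is correct, and it reaches the result by a more direct route than the paper, although both arguments rest on the same two pillars: Lemma \ref{dgmatrix} to certify encoding matrices and Lemma \ref{minrk} for the optimality lower bound. The difference lies in how $\mathbf{D}_E$ is built. The paper, honoring the corollary's phrasing that the result ``follows from Theorem \ref{th1}'', first massages the data so that Theorem \ref{th1}'s hypotheses literally hold: it permutes rows of each completion $\tilde{\mathbf{F}}^{(i)}$ (relabelling receivers) so that the first $r'_i$ rows are linearly independent, permutes column blocks so that $r'_1 \geq \cdots \geq r'_u$, and then composes three matrices --- Theorem \ref{th1}'s decoder $\mathbf{D}^{(i)}$ (built from $\mathbf{P}^{(i)}$), a row-selector block, and the padded sub-problem decoder $\hat{\mathbf{D}}^{(i)}$ --- into a matrix $\bar{\mathbf{D}}^{(i)}$ satisfying $\bar{\mathbf{D}}^{(i)}\hat{\mathbf{G}}^{(i)} = \mathbf{F}^{(i)}$. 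You instead take the $i$th row block of $\mathbf{D}_E$ to be the zero-padded sub-problem decoder $(\tilde{\mathbf{D}}^{(i)} \,|\, \mathbf{0})$ directly, so the diagonal blocks complete $\mathbf{F}_x^{(i)}$ immediately, with no spanning-rows condition, no permutations, and no $\mathbf{P}^{(i)}$. Both proofs then finish identically: the off-diagonal blocks of $\mathbf{D}_E\mathbf{G}_E$ are uncontrolled but complete the all-$\mathbf{X}$ blocks vacuously, and Lemma \ref{minrk} gives optimality whenever $r_{max}=\max\{\text{mrk}_{q}(\mathbf{F}^{(i)}_x), i \in [u]\}$. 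What the paper's detour buys is the literal claim that the corollary is a specialization of Theorem \ref{th1}, with bookkeeping that is reused in Observation \ref{obs1}; what your route buys is brevity and a transparent identification of exactly where suboptimality of the sub-codes is harmless, a point the paper only records afterwards in Note \ref{rem1}. One cosmetic slip: for $i \neq j$ the product $(\tilde{\mathbf{D}}^{(i)} \,|\, \mathbf{0})\hat{\mathbf{G}}^{(j)}$ is not $\tilde{\mathbf{D}}^{(i)}\mathbf{G}^{(j)}$ in general (the inner dimensions $r_i$ and $r_j$ need not match); but since the only property you use is that this product is \emph{some} $n_i \times m_j$ matrix, which necessarily completes $\mathbf{X}$, the argument is unaffected.
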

\begin{proof}
	Let ${\bf{\tilde{G}}}^{(i)}$ be an $r_i \times m_i$ encoding matrix for the code ${\bf{c}}^{(i)}$, $i \in [u]$.  From Lemma \ref{dgmatrix}, there exists an $n_i \times r_i$ matrix ${\bf{\tilde{D}}}^{(i)}$ such that ${\bf{\tilde{D}}}^{(i)}{\bf{\tilde{G}}}^{(i)}={\bf{\tilde{F}}}^{(i)} \approx {\bf{F}}_x^{(i)}$, $i \in [u]$. Without loss of generality, we can assume that the first $r'_i=\text{rk}_q({\bf{\tilde{F}}}^{(i)})$  rows of ${\bf{\tilde{F}}}^{(i)}$ are linearly independent, for all $i \in [u]$. Otherwise, applying a suitable permutation on the rows of 
	${\bf{\tilde{F}}}^{(i)}$ given by an $n_i \times n_i$ permutation matrix ${\bf{\tilde{P}}}^{(i)}$, we obtain  ${\bf{\tilde{P}}}^{(i)}{\bf{\tilde{F}}}^{(i)}={\bf{F}}^{(i)}$ such that the first $r'_i$ rows of ${\bf{F}}^{(i)}$ are linearly independent, where $i \in [u]$. The same permutation must be applied on the rows of ${\bf{F}}_x^{E}$ containing ${\bf{F}}^{(i)}_x$, for all $i \in [u]$. This does not change the SGICP as this amounts to relabelling the receivers. Similarly as ${\bf{B}}_{x_0}^{(i,j)}={\bf{X}}$, for all $i \neq j, i,j \in [u]$, without loss of generality, we can assume that $r'_1 \geq r'_2 \geq \cdots \geq r'_u$. Otherwise, another permutation on the sets of columns of ${\bf{F}}_x^{E}$ containing ${\bf{F}}^{(i)}_x$, for all $i \in [u]$, can be applied to obtain the desired order. This amounts to relabelling the messages and the SGICP $\mathcal{I}_E$ does not change. Hence, all the conditions given in Theorem \ref{th1} are satisfied. Thus, we have the $r'_1 \times m_i$ matrix ${\bf{\hat{F}}}^{(1,i)}={\bf{\tilde{F}}}^{(i)}_{[r'_i]}=\left( \begin{array}{c|c} {\bf{I}}_{r'_i \times r'_i} & {\bf{0}}\\ 
	\hline
	{\bf{0}}_{(r'_1-r'_i) \times r'_i}  & {\bf{0}} \end{array} \right) {\bf{\tilde{D}}}^{(i)} {\bf{\tilde{G}}}^{(i)}$, $i \in [u]$. Note that ${\bf{\tilde{D}}}^{(i)} {\bf{\tilde{G}}}^{(i)}={\bf{\hat{D}}}^{(i)} {\bf{\hat{G}}}^{(i)}$, where  ${\bf{\hat{D}}}^{(i)} = \left( \begin{array}{c|c} {\bf{\tilde{D}}}^{(i)} & {\bf{0}}_{n_i \times (r_{max}-r_i)}
	\end{array} 
	\right)$, and ${\bf{\hat{G}}}^{(i)} = \left( \begin{array}{c} {\bf{\tilde{G}}}^{(i)}\\
	\hline
	{\bf{0}}_{(r_{max}-r_i) \times m_i} \end{array} \right)$. The $n_i \times r'_1$ matrix ${\bf{D}}^{(i)}$ that satisfies 
	${\bf{D}}^{(i)}{\bf{\hat{F}}}^{(1,i)}={\bf{F}}^{(i)}$ (as given in the proof of Theorem \ref{th1}), is given by $			\left( \begin{array}{c|c} {\bf{I}}_{r'_i \times r'_i} & {\bf{0}}\\ 
	\hline
	{\bf{P}}^{(i)}  & {\bf{0}} \end{array} \right)$, where ${\bf{P}}^{(i)}$ is an $(n_i-r'_i) \times r'_i$ matrix such that the last $n_i-r'_i$ rows of ${\bf{\tilde{F}}}^{(i)}$ are given by ${\bf{P}}^{(i)}{\bf{\tilde{F}}}^{(i)}_{[r'_i]}$, $i \in [u]$. Hence, we have
	\begin{eqnarray}
	{\bf{D}}^{(i)}{\bf{\hat{F}}}^{(1,i)} = {\bf{D}}^{(i)}\left( \begin{array}{c|c} {\bf{I}}_{r'_i \times r'_i} & {\bf{0}}\\ 
	\hline
	{\bf{0}}_{(r'_1-r'_i) \times r'_i}  & {\bf{0}} \end{array} \right) {\bf{\hat{D}}}^{(i)} {\bf{\hat{G}}}^{(i)} = {\bf{F}}^{(i)}.
	\label{eqbig}
	\end{eqnarray}
	Now considering the matrix ${\bf{D}}^{(i)}\left( \begin{array}{c|c} {\bf{I}}_{r'_i \times r'_i} & {\bf{0}}\\ 
		\hline
		{\bf{0}}_{(r'_1-r'_i) \times r'_i}  & {\bf{0}} \end{array} \right) {\bf{\hat{D}}}^{(i)}$ as ${\bf{\bar{D}}}^{(i)}$, and using (\ref{eqbig}), we see that  ${\bf{\bar{D}}}^{(i)}{\bf{\hat{G}}}^{(i)} = {\bf{F}}^{(i)}$, for all $i \in [u]$.
	    Hence, by taking ${\bf{G}}_{E}=\begin{pmatrix}
			{\bf{\hat{G}}}^{(1)}|...|...|{\bf{\hat{G}}}^{(u)}\end{pmatrix}$ and ${\bf{D}}_{E} = 
			\begin{pmatrix}
			({\bf{\bar{D}}}^{(1)})^T | ({\bf{\bar{D}}}^{(2)})^T  |...|({\bf{\bar{D}}}^{(u)})^T
			\end{pmatrix}^T$, we see that ${\bf{D}}_{E}{\bf{G}}_{E} \approx {\bf{F}}_x^{E}$ (as in the proof of Theorem \ref{th1}). Thus, ${\bf{G}}_{E}$ is an encoding matrix  for the SGICP $\mathcal{I}_E$ according to Lemma \ref{dgmatrix}.
			Note that the scalar linear code given by ${\bf{G}}_{E}$ is ${\bf{c}}^{(1)}+\cdots+{\bf{c}}^{(u)}$, and the code length is $r_{max}$.  
\end{proof}

\begin{note}
	Note that there is no restriction (in the proof of Corollary \ref{corth1}) that the first $r_i$ rows of ${\bf{\hat{G}}}^{(i)}$  must complete the first $r_i$ rows of  ${\bf{F}}^{(i)}_x$, $i \in [u]$ (as in Theorem \ref{th1}).
	\label{rem1}
\end{note}

\begin{obs}
Theorem $1$ in \cite{PK} can also be obtained as a special case of Theorem \ref{th1}, with all the $u$ sub-problems being the same and having the $n_i \times m_i$ fitting matrix given by ${\bf{F}}_x$. All the block matrices ${\bf{B}}_{x_0}^{(i,j)}$, $i \neq j, i,j \in [u]$ are equal to ${\bf{F}}_x^x$, which is obtained from ${\bf{F}}_x$ by replacing all the $1$'s with $x$'s. The proof follows on similar lines as Corollary \ref{corth1} and Note \ref{rem1} also holds. It can be  shown on similar lines that Theorem $2$ in \cite{PK} can also be obtained as  a special case of Theorem \ref{th1} (in this work). Hence, the result of Theorem \ref{th1} generalizes two previously known results on rank-invariant extensions established in \cite{PK}.
\label{obs1}
\end{obs}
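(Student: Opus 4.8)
The plan is to instantiate Theorem \ref{th1} with the specialization named in the observation and then read off Theorem $1$ of \cite{PK} directly from the conclusion. First I would take all $u$ sub-problems equal to the SGICP $\mathcal{I}$, so that ${\bf{F}}_x^{(i)}={\bf{F}}_x$ with $n_i=n$ and $m_i=m$ for every $i \in [u]$. Fixing one optimal completion ${\bf{F}} \approx {\bf{F}}_x$ with $\text{rk}_q({\bf{F}})=\text{mrk}_q({\bf{F}}_x)=:r$, I would set ${\bf{F}}^{(i)}={\bf{F}}$ for all $i$, whence $r_i=r$ for every $i$ and the ordering $r_1 \geq \cdots \geq r_u$ holds with equality. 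In particular $r_1=r=\max\{\text{mrk}_q({\bf{F}}_x^{(i)}),i\in[u]\}$, so the optimality hypothesis of Theorem \ref{th1} is satisfied from the start.

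The crux is to verify that the off-diagonal blocks mandated by Theorem \ref{th1} are completions of ${\bf{F}}_x^x$. Since all $r_i$ coincide, the definition (\ref{Fcap}) collapses to ${\bf{\hat{F}}}^{(i,j)}={\bf{F}}^{(j)}_{[r_i]}={\bf{F}}_{[r]}$, the top $r$ (independent) rows of ${\bf{F}}$, uniformly in $j$. Hence
\begin{equation*}
{\bf{B}}^{(i,j)} = \left( \begin{array}{c} {\bf{F}}_{[r]} \\ \hline {\bf{P}}^{(i)}{\bf{F}}_{[r]} \end{array} \right) = {\bf{F}},
\end{equation*}
since ${\bf{P}}^{(i)}$ is by definition the matrix reconstructing the last $n-r$ rows of ${\bf{F}}$ from ${\bf{F}}_{[r]}$. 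It remains to check that ${\bf{F}}$ completes ${\bf{F}}_x^x$: every $0$ of ${\bf{F}}_x^x$ coincides with a $0$ of ${\bf{F}}_x$ (hence of ${\bf{F}}$), while every $x$ of ${\bf{F}}_x^x$ sits exactly where ${\bf{F}}_x$ has a $1$ or an $x$ and may therefore be filled by the corresponding entry of ${\bf{F}}$. Thus ${\bf{B}}^{(i,j)}={\bf{F}} \approx {\bf{F}}_x^x$, so one may legitimately take ${\bf{B}}_{x_0}^{(i,j)}={\bf{F}}_x^x$ for all $i \neq j$. This identifies the extended fitting matrix ${\bf{F}}_x^E$ of (\ref{nJEICP})---with ${\bf{F}}_x$ on the diagonal and ${\bf{F}}_x^x$ off the diagonal---as precisely the extension of Theorem $1$ in \cite{PK}, and confirms that all hypotheses of Theorem \ref{th1} hold.

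With the hypotheses in place, the conclusion follows immediately: Theorem \ref{th1} yields the encoding matrix ${\bf{G}}_E=\bigl({\bf{\hat{F}}}^{(1,1)}|\cdots|{\bf{\hat{F}}}^{(1,u)}\bigr)=\bigl({\bf{F}}_{[r]}|\cdots|{\bf{F}}_{[r]}\bigr)$ of length $r$, and since $r_1=\text{mrk}_q({\bf{F}}_x)$ the code ${\bf{G}}_E{\bf{x}}_{[m]}$ is scalar linear optimal. Therefore $\text{mrk}_q({\bf{F}}_x^E)=r=\text{mrk}_q({\bf{F}}_x)$, which is exactly the rank-invariance claimed in Theorem $1$ of \cite{PK}; moreover Note \ref{rem1} applies here as well, so no condition on which rows of the encoding matrix complete which rows of the fitting matrix is required. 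For Theorem $2$ of \cite{PK} I would follow the same template: identify the common completion and the corresponding off-diagonal blocks in Theorem \ref{th1} that reproduce the extension of Theorem $2$, verify the block identity ${\bf{B}}^{(i,j)} \approx {\bf{B}}_{x_0}^{(i,j)}$, and read the rank-invariance off the optimality clause.

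The main obstacle is the matrix identity in the second paragraph---showing that the rigidly prescribed block ${\bf{B}}^{(i,j)}=\bigl({\bf{F}}_{[r]};\,{\bf{P}}^{(i)}{\bf{F}}_{[r]}\bigr)$ of Theorem \ref{th1} coincides with a genuine completion of ${\bf{F}}_x^x$, not merely something of the same shape. This rests on two points that must be tracked carefully: that equal ranks force ${\bf{\hat{F}}}^{(i,j)}={\bf{F}}_{[r]}$ independently of $j$, and that turning $1$'s into $x$'s only enlarges the set of admissible completions, so the \emph{same} ${\bf{F}}$ continues to fit. Once this identity is secured, the remainder is a direct specialization of Theorem \ref{th1}.
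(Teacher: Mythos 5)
Your proposal is correct, and you isolated the one step on which the whole specialization hinges: choosing a \emph{common} completion ${\bf{F}} \approx {\bf{F}}_x$ (with its first $r$ rows spanning $\langle {\bf{F}} \rangle$) for every block, so that ${\bf{\hat{F}}}^{(i,j)}={\bf{F}}_{[r]}$ uniformly and ${\bf{B}}^{(i,j)}$ reassembles to ${\bf{F}}$ itself, which completes ${\bf{F}}_x^x$ because passing from ${\bf{F}}_x$ to ${\bf{F}}_x^x$ only enlarges the set of admissible completions. This choice is in fact essential, not merely convenient: with distinct completions ${\bf{F}}^{(i)} \neq {\bf{F}}^{(j)}$ the cross block ${\bf{P}}^{(i)}{\bf{F}}^{(j)}_{[r]}$ need not vanish at the $0$-positions of ${\bf{F}}_x^x$, so Theorem \ref{th1} could not be invoked with ${\bf{B}}_{x_0}^{(i,j)}={\bf{F}}_x^x$. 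Your route does differ from the paper's, which proves the observation ``on similar lines as Corollary \ref{corth1},'' i.e., via the decoding/encoding matrix machinery (${\bf{\tilde{D}}}^{(i)}$, ${\bf{\hat{G}}}^{(i)}$, ${\bf{\bar{D}}}^{(i)}$) that accommodates \emph{arbitrary} encoding matrices for the sub-problem codes; that flexibility is precisely what underwrites the clause ``Note \ref{rem1} also holds,'' which you assert but do not establish, since your ${\bf{G}}_E$ is built from the canonical completion ${\bf{F}}_{[r]}$. The fix is a one-line adaptation of the corollary's argument: replicate the \emph{same} code (with the same ${\bf{\tilde{D}}}$) in every block, whence every off-diagonal block of ${\bf{D}}_E{\bf{G}}_E$ also equals ${\bf{F}} \approx {\bf{F}}_x^x$. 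Two further small points worth making explicit: your phrase ``the top $r$ (independent) rows of ${\bf{F}}$'' needs the usual WLOG row permutation, and---unlike the all-${\bf{X}}$ setting of Corollary \ref{corth1}---one must check that the permutation preserves the off-diagonal pattern, which it does because replacing $1$'s by $x$'s commutes with row permutations; and your closing sketch for Theorem $2$ of \cite{PK} is appropriately the same template, matching the paper's treatment. In exchange for the slightly weaker coverage of Note \ref{rem1}, your direct instantiation is shorter and yields the rank-invariance $\text{mrk}_{q}({\bf{F}}_x^{E})=\text{mrk}_{q}({\bf{F}}_x)$ of Theorem $1$ in \cite{PK} with all hypotheses of Theorem \ref{th1} verified explicitly, which the paper's sketch leaves implicit.
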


Note that for the encoding matrix given in Theorem $2$ to be scalar linear optimal, encoding matrices of all the sub-problems need not be optimal. Hence we can obtain optimal scalar linear codes for the jointly extended problem, even when sub-optimal scalar linear codes are available for some of the sub-problems,  as long as $r_1=\max\{\text{mrk}_{q}({\bf{F}}^{(i)}_x),i \in [u]\}$. 

\par We now illustrate the completion of ${\bf{F}}_x^{E}$ given in the proof of Theorem \ref{th1} for the case with $u=2$, in (\ref{F2}). This is the case that is used in the next section, to construct scalar linear codes for some classes of the TGICP. Note that the rows from the second (or third, or fourth) row of block matrices are in the span of the rows in the first row of block matrices. 
\begin{equation}
 \left(
    \begin{array}{c|c}
      {\bf{F}}^{(1)}_{[r_1]}  &  \left(\begin{array}{c} {\bf{F}}^{(2)}_{[r_2]} \\
         \hline
          {\bf{0}}_{(r_1-r_2) \times m_2} \end{array} \right) \\
      \hline
          {\bf{P}}^{(1)}{\bf{F}}^{(1)}_{[r_1]}  &   {\bf{P}}^{(1)}\left(\begin{array}{c}
      {\bf{F}}^{(2)}_{[r_2]} \\
      \hline
       {\bf{0}}_{[(r_1-r_2)]} \end{array} \right) \\ 
    \hline
     {\bf{F}}^{(1)}_{[r_2]}  &  {\bf{F}}^{(2)}_{[r_2]} \\ 
     \hline
    {\bf{P}}^{(2)}{\bf{F}}^{(1)}_{[r_2]} &  {\bf{P}}^{(2)}{\bf{F}}^{(2)}_{[r_2]}
    \end{array} 
    \right) \approx {\bf{F}}_x^E.
    \label{F2}
\end{equation}    

We now illustrate the code construction in Theorem \ref{th1}.

\begin{exmp}
Consider the partitioned fitting matrix of a jointly extended SGICP given below.
\[ {\bf{F}}^{E}_{x} = \left(
\begin{array}{ccccc|ccc}
1 & x & x & 0 & 0 & x & 0 & x\\
0 & 1 & x & x & 0 & x & x & 0 \\
0 & 0 & 1 & x & x & 0 & 0 & 0 \\
x & 0 & 0 & 1 & x & x & 0 & x \\
x & x & 0 & 0 & 1 & 0 & x & x \\
\hline
x & 0 & x & 0 & 0 & 1 & 0 & x\\
0 & x & 0 & x & 0 & x & 1 & 0 \\
x & x & x & x & 0 & 0 & x & 1 \\
x & x & x & x & x & x & 0 & 1 \\
\end{array} \right).
\]

From the partition we observe that the SGICPs with fitting matrices ${\bf{F}}^{(1)}_{x}$ and ${\bf{F}}^{(2)}_{x}$ are already solved in \cite{MBR1} and \cite{SUOH}. We provide the completions of these fitting matrices over $\mathbb{F}_2$ below, which follow the conditions given in Theorem \ref{th1}.
 
\[ {\bf{F}}^{(1)} = \left(
\begin{array}{ccccc}
1 & 0 & 1 & 0 & 0 \\
0 & 1 & 0 & 1 & 0 \\
0 & 0 & 1 & 1 & 1 \\
\hline
1 & 0 & 0 & 1 & 1 \\
1 & 1 & 0 & 0 & 1 
\end{array}
\right) ,~ {\bf{F}}^{(2)} = \left(
\begin{array}{ccc}
1 & 0 & 1 \\
1 & 1 & 0 \\
\hline
0 & 1 & 1 \\ 
1 & 0 & 1 \\
\end{array}
\right).
\]
\par Note that $r_1=\text{mrk}_q({\bf{F}}^{(1)}_x)=3$ and  $r_2=\text{mrk}_q({\bf{F}}^{(2)}_x)=2$. It can be  verified that the first $r_i$ rows of ${\bf{F}}^{(i)}$ span $\langle {\bf{F}}^{(i)} \rangle$, $i \in \{1,2\}$. The matrices ${\bf{P}}^{(1)}$ and ${\bf{P}}^{(2)}$ are also given below.
\[ {\bf{P}}^{(1)} = \left(
\begin{array}{ccccc}
1 & 0 & 1 \\
1 & 1 & 1  
\end{array}
\right) ,~ {\bf{P}}^{(2)} = \left(
\begin{array}{ccc}
1 & 1 \\
1 & 0 \\
\end{array}
\right).
\]
\par It can be  verified that ${\bf{B}}^{(ij)} \approx {\bf{B}}_{x_0}^{(ij)}$, $\forall i,j \in [2], i \neq j$, as given in Theorem \ref{th1}, and are given below. 
\[ {\bf{B}}^{(12)} = \left(
\begin{array}{ccccc}
1 & 0 & 1 \\
1 & 1 & 0 \\
0 & 0 & 0 \\
\hline
1 & 0 & 1 \\
0 & 1 & 1  
\end{array}
\right),~ {\bf{B}}^{(21)} = \left(
\begin{array}{ccccccc}
1 & 0 & 1 & 0 & 0\\
0 & 1 & 0 & 1 & 0\\
\hline
1 & 1 & 1 & 1 & 0\\
1 & 0 & 1 & 0 & 0\\ 
\end{array}
\right).
\]
\par  An encoding matrix ${\bf{G}}_{E}$ is obtained as stated in  Theorem \ref{th1}, which is optimal and given below. It can be verified that all receivers' demands are satisfied.
\[ {\bf{G}}_{E} = \left(
\begin{array}{ccccc|ccc}
1 & 0 & 1 & 0 & 0 & 1 & 0 & 1\\
0 & 1 & 0 & 1 & 0 & 1 & 1 & 0\\
0 & 0 & 1 & 1 & 1 & 0 & 0 & 0
\end{array} \right).
\]
The optimal scalar linear codes for the individual problems are ${\bf{c}}^{(1)}=(~{\bf{x}}_1 + {\bf{x}}_3,~ {\bf{x}}_2 + {\bf{x}}_4, ~{\bf{x}}_3 + {\bf{x}}_4 + {\bf{x}}_5 ~)$, and ${\bf{c}}^{(2)}=(~{\bf{x}}_6 + {\bf{x}}_8,~ {\bf{x}}_6 + {\bf{x}}_7~)$
. The optimal scalar linear code for the jointly extended problem given by ${\bf{G}}_{E}$ is ${\bf{c}}=(~{\bf{x}}_1 + {\bf{x}}_3 + {\bf{x}}_6 + {\bf{x}}_8, ~{\bf{x}}_2 + {\bf{x}}_4 + {\bf{x}}_6 + {\bf{x}}_7,~ {\bf{x}}_3 + {\bf{x}}_4 + {\bf{x}}_5 ~)$.
\label{ex1}
\end{exmp}
\par  The following example illustrates that an optimal encoding matrix for $\mathcal{I}_{E}$ can be obtained even if sub-optimal encoding matrices are used for some of the constituent SGICPs, as long as the conditions given in Theorem \ref{th1} are satisfied.\\ 
\begin{exmp}
Consider the jointly extended problem given by ${\bf{F}}^{E}_{x}$ as shown below, with the sub-problems being the same as given in Example \ref{ex1}. We consider the binary field $\mathbb{F}_2$.
\[ {\bf{F}}^{E}_{x} = \left(
\begin{array}{ccccc|ccc}
1 & x & x & 0 & 0 & x & 0 & 0 \\
0 & 1 & x & x & 0 & 0 & x & 0 \\
0 & 0 & 1 & x & x & 0 & 0 & x\\
x & 0 & 0 & 1 & x & x & 0 & x\\
x & x & 0 & 0 & 1 & x & x & x\\
\hline
x & 0 & x & 0 & 0 & 1 & 0 & x\\
0 & x & 0 & x & 0 & x & 1 & 0 \\
0 & 0 & x & x & x & 0 & x & 1 \\
\end{array} \right).
\]
Based on the first three rows of ${\bf{B}}_{x_0}^{(12)}$ we take ${\bf{F}}^{(2)}$ as shown below. Taking
${\bf{F}}^{(1)}$ and ${\bf{P}}^{(1)}$ to be the same as in Example \ref{ex1}, all the conditions of Theorem \ref{th1} are satisfied. Note that ${\bf{P}}^{(2)}$ does not exist. 
\[ {\bf{F}}^{(2)} =
\begin{pmatrix}
1 & 0 & 0 \\
0 & 1 & 0 \\
0 & 0 & 1 \\
\end{pmatrix}.
\]
\par Note that  $\text{mrk}_{q}({\bf{F}}_x^{(2)}) = 2$, but $r_2 = 3$. As $\text{mrk}_{q}({\bf{F}}_x^{(1)}) = 3 \geq r_2$, we can obtain an optimal scalar linear index code for the jointly extended problem according to Theorem \ref{th1}.  The optimal encoding matrix ${\bf{G}}_{E}$ as given by Theorem \ref{th1} is shown below. 
\[ {\bf{G}}_{E} = \left(
\begin{array}{ccccc|ccc}
1 & 0 & 1 & 0 & 0 & 1 & 0 & 0\\
0 & 1 & 0 & 1 & 0 & 0 & 1 & 0\\
0 & 0 & 1 & 1 & 1 & 0 & 0 & 1
\end{array} \right).
\]
\label{exmp1}
\end{exmp}
\section{Optimal Scalar linear Codes for TGICPs belonging to Cases II-B, II-C, II-D, and II-E}
In this section, we apply the results presented in the previous section to obtain upper bounds on the optimal scalar linear code lengths for some classes of the TGICP using code constructions. We then provide some necessary conditions for the optimality of the constructed codes. 

The following lemma provides a lower bound on the optimal scalar linear code length of any TGICP.

\begin{lem}
	\label{noP12}
	For any TGICP $\mathcal{I}$ with $\mathcal{P}_{\{1,2\}}=\phi$,
	\begin{gather*} l^*_{q}(\mathcal{I})=\text{mrk}_{q}({\bf{F}}_x^{(1)})+\text{mrk}_{q}({\bf{F}}_x^{(2)}).
	\end{gather*}
\end{lem}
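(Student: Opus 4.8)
The plan is to establish matching upper and lower bounds on $l^*_q(\mathcal{I})$. Since $\mathcal{P}_{\{1,2\}}=\phi$, we have $\mathcal{M}_1=\mathcal{P}_1$ and $\mathcal{M}_2=\mathcal{P}_2$ with $\mathcal{P}_1\cap\mathcal{P}_2=\phi$, so the fitting matrix reduces to a $2\times 2$ block form with diagonal blocks $\mathbf{F}_x^{(1)},\mathbf{F}_x^{(2)}$ and off-diagonal interaction blocks $\mathbf{A}^{(1,2)}_{x_0},\mathbf{A}^{(2,1)}_{x_0}$ consisting only of $x$'s and $0$'s. For the upper bound I would let $\mathcal{S}_1$ transmit an optimal scalar linear code for the SGICP $\mathcal{I}_1$ (of length $\text{mrk}_q(\mathbf{F}_x^{(1)})$) and $\mathcal{S}_2$ transmit an optimal code for $\mathcal{I}_2$ (of length $\text{mrk}_q(\mathbf{F}_x^{(2)})$). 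Every receiver demands a message in exactly one of $\mathcal{P}_1,\mathcal{P}_2$ and recovers it from the corresponding sender's transmission using its side information within that message set; hence $l^*_q(\mathcal{I})\le \text{mrk}_q(\mathbf{F}_x^{(1)})+\text{mrk}_q(\mathbf{F}_x^{(2)})$.

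For the converse, take any optimal scalar linear code with encoding matrices $\mathbf{G}^{(1)}\in\mathbb{F}_q^{l_1\times m_1}$ and $\mathbf{G}^{(2)}\in\mathbb{F}_q^{l_2\times m_2}$, so a receiver $i$ in $\mathcal{I}_1$ receives $\mathbf{G}^{(1)}\mathbf{x}_{\mathcal{P}_1}$ and $\mathbf{G}^{(2)}\mathbf{x}_{\mathcal{P}_2}$. The key observation is that its demand $\mathbf{x}_{f(i)}$ lies in $\mathbf{x}_{\mathcal{P}_1}$ and is thus independent of $\mathbf{x}_{\mathcal{P}_2}$, whereas the entire $\mathcal{S}_2$-transmission depends only on $\mathbf{x}_{\mathcal{P}_2}$. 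Writing the linear decodability condition in the full message space, $\mathbf{e}_{f(i)}$ must lie in the span of the rows of $[\mathbf{G}^{(1)}\,|\,\mathbf{0}]$ and $[\mathbf{0}\,|\,\mathbf{G}^{(2)}]$ together with the unit vectors $\{\mathbf{e}_j:j\in\rchi_i\}$. Projecting this relation onto the $\mathcal{P}_1$-coordinates annihilates the $\mathbf{G}^{(2)}$-rows and the side-information vectors indexed by $\rchi_i\cap\mathcal{P}_2$, leaving the demand coordinate expressed purely through the rows of $\mathbf{G}^{(1)}$ and the $\mathcal{P}_1$-side-information. This says precisely that $\mathbf{G}^{(1)}$ alone decodes the SGICP $\mathcal{I}_1$; equivalently, by Lemma \ref{dgmatrix}, there is a matrix $\mathbf{D}^{(1)}$ with $\mathbf{D}^{(1)}\mathbf{G}^{(1)}\approx\mathbf{F}_x^{(1)}$, whence $l_1\ge \text{rk}_q(\mathbf{G}^{(1)})\ge \text{rk}_q(\mathbf{D}^{(1)}\mathbf{G}^{(1)})\ge \text{mrk}_q(\mathbf{F}_x^{(1)})$. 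The symmetric argument for receivers in $\mathcal{I}_2$ yields $l_2\ge \text{mrk}_q(\mathbf{F}_x^{(2)})$, and adding the two gives the matching lower bound.

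The step I expect to require the most care is the projection argument in the converse, namely rigorously justifying that the $\mathcal{S}_2$-transmission and the $\mathcal{P}_2$-side-information cannot assist a receiver in $\mathcal{I}_1$. This is exactly where the hypothesis $\mathcal{P}_{\{1,2\}}=\phi$ is essential: it guarantees that $\mathbf{G}^{(1)}$ and $\mathbf{G}^{(2)}$ act on disjoint sets of message coordinates, so the supports separate cleanly under the projection and the cross-terms vanish. Notably, this argument never invokes the structure of the interaction blocks, so it applies uniformly whether the interaction digraph is acyclic or cyclic; thus the result holds for every TGICP with no common messages, independently of the Case classification.
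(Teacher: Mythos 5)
Your proof is correct and follows essentially the same route as the paper: the upper bound comes from each sender transmitting an optimal scalar linear code for its own sub-problem, and the converse from the fact that a sender's transmission involves no messages of the other part, so it cannot help receivers demanding the other sender's messages. Your projection argument is just a rigorous rendering of the paper's informal claim that receivers in $\mathcal{I}_j$ cannot exploit side information in $\mathbf{x}_{\mathcal{P}_k}$, $k=\{1,2\}\setminus j$, i.e., that each $\mathbf{G}^{(j)}$ must by itself be a valid encoding matrix for $\mathcal{I}_j$ (via Lemma \ref{dgmatrix}), which is a welcome tightening of the paper's brief wording.
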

\begin{proof}
	By transmitting optimal codes for the SGICPs $\mathcal{I}_{1}$ and $\mathcal{I}_{2}$, we have  $l^*_{q}(\mathcal{I}) \leq \text{mrk}_{q}(\mathcal{I}_{1})+\text{mrk}_{q}(\mathcal{I}_{2})$. If $\mathcal{S}_{j}$ transmits a codeword ${\bf{c}}^{(j)}$, $j \in \{1,2\}$, it does not contain messages from ${\bf{x}}_{\mathcal{P}_{k}}$ where $k = \{1,2\} \setminus j$. Thus, receivers in $\mathcal{I}_{j}$ can not take advantage of the side information present in ${\bf{x}}_{\mathcal{P}_{k}}$, $k = \{1,2\} \setminus j$, $j \in \{1,2\}$. Hence, an optimal scalar linear index code for the TGICP $\mathcal{I}$ must consist of optimal codes for the SGICPs $\mathcal{I}_{{1}}$ and $\mathcal{I}_{{2}}$. Thus, $l^*_{q}(\mathcal{I}) \geq \text{mrk}_{q}(\mathcal{I}_{1})+\text{mrk}_{q}(\mathcal{I}_{2})$.
\end{proof}

The following observation is used to obtain the optimal scalar linear codes for a TGICP using those of a related TGICP.

\begin{obs}
	Consider the TGICP $\mathcal{I}$ with the submatrices of the related fitting matrix ${\bf{F}}_x$ being $\{{\bf{F}}_x^{(\mathcal{A})}\}$, $\{{\bf{A}}^{(\mathcal{A}_1,\mathcal{A}_2)}_{x_0}\}$, for non-empty $\mathcal{A}$,  $\mathcal{A}_1$, $\mathcal{A}_2 \subseteq \{1,2\}$, $\mathcal{A}_1 \neq \mathcal{A}_2$. Consider another TGICP $\mathcal{\tilde{I}}$ with the submatrices of the related fitting matrix ${\bf{\tilde{F}}}_x$ given by  $\{{\bf{\tilde{F}}}_x^{(\mathcal{A})}\}$, $\{{\bf{\tilde{A}}}^{(\mathcal{A}_1,\mathcal{A}_2)}_{x_0}\}$, for non-empty $\mathcal{A}$,  $\mathcal{A}_1$, $\mathcal{A}_2 \subseteq \{1,2\}$, $\mathcal{A}_1 \neq \mathcal{A}_2$.  Let  ${\bf{F}}_x^{(1)}={\bf{\tilde{F}}}_x^{(2)}$, ${\bf{F}}_x^{(2)}={\bf{\tilde{F}}}_x^{(1)}$, and ${\bf{F}}_x^{\{1,2\}}={\bf{\tilde{F}}}_x^{\{1,2\}}$. Also, ${\bf{A}}^{(\mathcal{A}_1,\{1,2\})}_{x_0}={\bf{\tilde{A}}}^{(\{1,2\}\setminus \mathcal{A}_1,\{1,2\})}_{x_0}$, and ${\bf{A}}^{(\{1,2\},\mathcal{A}_1)}_{x_0}={\bf{\tilde{A}}}^{(\{1,2\},\{1,2\}\setminus \mathcal{A}_1)}_{x_0}$, for all $\mathcal{A}_1 \subset \{1,2\}$, such that $|\mathcal{A}_1|=1$. If the two-sender index code for $\mathcal{I}$ consists of ${\bf{c}}^{(i)}$ transmitted by $\mathcal{S}_i$, for all $i \in \{1,2\}$, then the code ${\bf{c}}^{(\{1,2\}\setminus i)}$ transmitted by $\mathcal{S}_{i}$ for all $i \in \{1,2\}$, is a two-sender index code for $\mathcal{\tilde{I}}$. This is because the TGICP $\mathcal{I}$ is obtained from the TGICP $\mathcal{\tilde{I}}$ by exchanging the message sets available with the senders as seen from the fitting matrices ${\bf{F}}_x$ and ${\bf{\tilde{F}}}_x$. Moreover, the set of receivers in $\mathcal{I}$ having demands available with a particular sender, have the same demands available with the other sender in $\mathcal{\tilde{I}}$.
	\label{obs2} 
\end{obs}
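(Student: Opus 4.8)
The plan is to show that $\tilde{\mathcal{I}}$ is nothing but $\mathcal{I}$ with the two senders interchanged, and then to argue that swapping the two transmitted codewords exactly compensates for the swap of the senders' message availabilities. First I would read off the message partitions from the stated block equalities. The diagonal-block conditions ${\bf{F}}_x^{(1)}={\bf{\tilde{F}}}_x^{(2)}$, ${\bf{F}}_x^{(2)}={\bf{\tilde{F}}}_x^{(1)}$, and ${\bf{F}}_x^{(\{1,2\})}={\bf{\tilde{F}}}_x^{(\{1,2\})}$ let us identify the private set $\mathcal{P}_1$ of $\mathcal{I}$ with the private set $\tilde{\mathcal{P}}_2$ of $\tilde{\mathcal{I}}$, the set $\mathcal{P}_2$ with $\tilde{\mathcal{P}}_1$, and the common set $\mathcal{P}_{\{1,2\}}$ with $\tilde{\mathcal{P}}_{\{1,2\}}$, together with a matching of the receivers of each sub-problem (a receiver demanding a message of $\mathcal{P}_\ell$ in $\mathcal{I}$ is matched to the receiver demanding the same message, now lying in $\tilde{\mathcal{P}}_{\sigma(\ell)}$ in $\tilde{\mathcal{I}}$, where $\sigma$ transposes $1$ and $2$ and fixes $\{1,2\}$). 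Under this identification the off-diagonal conditions ${\bf{A}}^{(\mathcal{A}_1,\{1,2\})}_{x_0}={\bf{\tilde{A}}}^{(\{1,2\}\setminus \mathcal{A}_1,\{1,2\})}_{x_0}$ and ${\bf{A}}^{(\{1,2\},\mathcal{A}_1)}_{x_0}={\bf{\tilde{A}}}^{(\{1,2\},\{1,2\}\setminus \mathcal{A}_1)}_{x_0}$, together with the analogous private--private blocks, assert that each matched receiver retains exactly the same side information and the same demand in both problems. Hence $\mathcal{I}$ and $\tilde{\mathcal{I}}$ have an identical collection of receivers, side-information sets, and demands; they differ only in that $\tilde{\mathcal{M}}_1=\tilde{\mathcal{P}}_1\cup\tilde{\mathcal{P}}_{\{1,2\}}=\mathcal{P}_2\cup\mathcal{P}_{\{1,2\}}=\mathcal{M}_2$ and, symmetrically, $\tilde{\mathcal{M}}_2=\mathcal{M}_1$.

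With this in hand the verification splits into encoding feasibility and decodability. For feasibility, recall that in $\mathcal{I}$ the codeword ${\bf{c}}^{(2)}=\mathbb{E}_2({\bf{x}}_{\mathcal{M}_2})$ is computed solely from ${\bf{x}}_{\mathcal{M}_2}$; since in $\tilde{\mathcal{I}}$ the sender $\mathcal{S}_1$ has available precisely ${\bf{x}}_{\tilde{\mathcal{M}}_1}={\bf{x}}_{\mathcal{M}_2}$, it can indeed form and transmit ${\bf{c}}^{(2)}$, and symmetrically $\mathcal{S}_2$ can form ${\bf{c}}^{(1)}$ from ${\bf{x}}_{\tilde{\mathcal{M}}_2}={\bf{x}}_{\mathcal{M}_1}$ (for a linear code ${\bf{c}}^{(2)}={\bf{G}}^{(2)}{\bf{x}}_{\mathcal{M}_2}$ this is just $\mathcal{M}_2\subseteq\tilde{\mathcal{M}}_1$). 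For decodability, observe that each receiver hears both codewords ${\bf{c}}^{(1)}$ and ${\bf{c}}^{(2)}$ in either problem---only the sender labels attached to them differ---so after swapping the two codeword arguments of $\mathbb{D}_i$ we obtain a decoder for $\tilde{\mathcal{I}}$; since each receiver has the same side information ${\bf{x}}_{\rchi_i}$ and the same demand ${\bf{x}}_{f(i)}$ in $\tilde{\mathcal{I}}$ as in $\mathcal{I}$, this decoder recovers ${\bf{x}}_{f(i)}$ exactly as before. This shows that $({\bf{c}}^{(2)},{\bf{c}}^{(1)})$, transmitted by $(\mathcal{S}_1,\mathcal{S}_2)$, is a valid two-sender index code for $\tilde{\mathcal{I}}$, and clearly of the same aggregate length $l_1+l_2$.

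The only real work is the bookkeeping in the first paragraph: confirming that the listed block equalities are exactly the conditions that realize the sender interchange, in particular that they force $\tilde{\mathcal{M}}_1=\mathcal{M}_2$ and $\tilde{\mathcal{M}}_2=\mathcal{M}_1$ while preserving every receiver's side information and demand. I expect this matching of blocks---and in particular checking that the private--private interaction blocks ${\bf{A}}^{(1,2)}_{x_0}$ and ${\bf{A}}^{(2,1)}_{x_0}$ also transpose correctly under $\sigma$---to be the main (purely combinatorial) obstacle; once it is settled, the encoding-feasibility and decodability arguments are immediate and require no computation. Note that the argument nowhere uses linearity, so the same conclusion holds for arbitrary (possibly nonlinear) two-sender index codes, and in particular length-optimality is preserved, which is what makes the observation useful in the constructions of Section V.
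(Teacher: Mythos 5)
Your proof is correct and takes essentially the same approach as the paper, which justifies the observation in one line by noting that $\mathcal{\tilde{I}}$ is simply $\mathcal{I}$ with the senders' message sets exchanged (so that $\tilde{\mathcal{M}}_1=\mathcal{M}_2$, $\tilde{\mathcal{M}}_2=\mathcal{M}_1$ while every receiver keeps its side information and demand), whence swapping the two transmitted codewords gives a valid code of the same length. Your only additions are making explicit the correspondence of the private--private blocks ${\bf{A}}^{(1,2)}_{x_0}$ and ${\bf{A}}^{(2,1)}_{x_0}$ under the sender swap (a condition the statement leaves implicit but which is indeed needed for the receivers' side information to be preserved) and spelling out encoding feasibility and decodability, both of which the paper treats as immediate.
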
	

\subsection{Case II-B}
In this section, we provide optimal scalar linear codes for some classes of the TGICP belonging to Case II-B.

\begin{thm}
	Consider any TGICP  $\mathcal{I}$ belonging to Case II-B, with the interactions between $\mathcal{I}_1$ and $\mathcal{I}_{\{1,2\}}$, and those between $\mathcal{I}_2$ and $\mathcal{I}_{\{1,2\}}$, being fully-participated interactions. Then we have, $l_{q}^*(\mathcal{I}) =   \max\{\text{mrk}_{q}({\bf{F}}^{(1)}_x)+\text{mrk}_{q}({\bf{F}}_x^{(2)}),\text{mrk}_{q}({\bf{F}}^{(\{1,2\})}_x)\}$.
	\label{thcase2b1}
\end{thm}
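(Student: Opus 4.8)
The plan is to establish the identity by showing that $\max\{a+b,c\}$ is simultaneously a lower bound and an achievable code length, where I abbreviate $a=\text{mrk}_{q}(\mathbf{F}_x^{(1)})$, $b=\text{mrk}_{q}(\mathbf{F}_x^{(2)})$, and $c=\text{mrk}_{q}(\mathbf{F}_x^{(\{1,2\})})$. First I would record what the hypotheses give at the level of $\mathbf{F}_x$: in Case II-B the only interactions are those between $\mathcal{I}_1,\mathcal{I}_{\{1,2\}}$ and between $\mathcal{I}_2,\mathcal{I}_{\{1,2\}}$, so $\mathbf{A}^{(1,2)}_{x_0}=\mathbf{A}^{(2,1)}_{x_0}=\mathbf{0}$, while the fully-participated hypothesis forces $\mathbf{A}^{(1,\{1,2\})}_{x_0}=\mathbf{A}^{(\{1,2\},1)}_{x_0}=\mathbf{A}^{(2,\{1,2\})}_{x_0}=\mathbf{A}^{(\{1,2\},2)}_{x_0}=\mathbf{X}$. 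I would then fix an arbitrary scalar linear code and split each sender's encoding matrix by columns as $\mathbf{G}^{(1)}=[\,\mathbf{G}^{(1)}_{1}\mid\mathbf{H}^{(1)}\,]$ and $\mathbf{G}^{(2)}=[\,\mathbf{G}^{(2)}_{2}\mid\mathbf{H}^{(2)}\,]$, where the first block is indexed by $\mathcal{P}_1$ (resp. $\mathcal{P}_2$) and the second by $\mathcal{P}_{\{1,2\}}$, with $l_1,l_2$ the respective numbers of rows.

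For the converse I would use that the fully-participated interactions let every receiver cancel all cross-terms. A receiver in $\mathcal{I}_1$ knows all of $\mathbf{x}_{\mathcal{P}_{\{1,2\}}}$, so it removes $\mathbf{H}^{(1)}\mathbf{x}_{\mathcal{P}_{\{1,2\}}}$ from $\mathcal{S}_1$'s codeword and is left with $\mathbf{G}^{(1)}_{1}\mathbf{x}_{\mathcal{P}_1}$; since $\mathcal{S}_2$'s codeword carries no $\mathcal{P}_1$-symbol it is useless for this demand, so $\mathbf{G}^{(1)}_{1}$ must by itself be an encoding matrix for $\mathcal{I}_1$, whence $l_1\ge\text{rk}_{q}(\mathbf{G}^{(1)}_{1})\ge a$ by Lemma \ref{dgmatrix} and the min-rank characterization; symmetrically $l_2\ge b$. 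A receiver in $\mathcal{I}_{\{1,2\}}$ knows all of $\mathbf{x}_{\mathcal{P}_1}$ and $\mathbf{x}_{\mathcal{P}_2}$, so after cancellation it retains only $\left(\begin{smallmatrix}\mathbf{H}^{(1)}\\\mathbf{H}^{(2)}\end{smallmatrix}\right)\mathbf{x}_{\mathcal{P}_{\{1,2\}}}$; this stacked $(l_1+l_2)\times m_{\{1,2\}}$ matrix must encode $\mathcal{I}_{\{1,2\}}$, giving $l_1+l_2\ge c$. Together these yield $l^{*}_{q}(\mathcal{I})\ge\max\{a+b,c\}$.

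For achievability I would take optimal encoding matrices $\mathbf{G}^{(1)}_{1}$ ($a$ rows), $\mathbf{G}^{(2)}_{2}$ ($b$ rows), and $\mathbf{G}^{(\{1,2\})}$ ($c$ rows), and then split the $c$ rows of $\mathbf{G}^{(\{1,2\})}$ between the senders: assign the top $p=\min\{a,c\}$ rows to $\mathbf{H}^{(1)}$ and the remaining $c-p$ rows to $\mathbf{H}^{(2)}$, zero-padding within each sender so that $l_1=\max\{a,p\}$ and $l_2=\max\{b,c-p\}$. Since the two transmissions are orthogonal in time, a receiver in $\mathcal{I}_{\{1,2\}}$ collects both halves and, after cancelling $\mathbf{x}_{\mathcal{P}_1}$ and $\mathbf{x}_{\mathcal{P}_2}$, recovers the full optimal code $\mathbf{G}^{(\{1,2\})}\mathbf{x}_{\mathcal{P}_{\{1,2\}}}$; a receiver in $\mathcal{I}_1$ (resp. $\mathcal{I}_2$) cancels $\mathbf{x}_{\mathcal{P}_{\{1,2\}}}$ and decodes from $\mathbf{G}^{(1)}_{1}$ (resp. $\mathbf{G}^{(2)}_{2}$), the decoding being the same cancellation mechanism underlying the joint-extension construction of Corollary \ref{corth1}. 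A direct computation gives $l_1+l_2=\max\{a,\min\{a,c\}\}+\max\{b,c-\min\{a,c\}\}=\max\{a+b,c\}$, matching the converse.

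The step I expect to be the main obstacle is recognizing, and then justifying, that the code for $\mathcal{I}_{\{1,2\}}$ may be split row-wise across the two senders rather than carried in full by one of them; this is what replaces the naive bound $\max\{a,c\}+b$ (obtained by letting a single sender handle all of $\mathcal{I}_{\{1,2\}}$) with the tight $\max\{a+b,c\}$, and it is the sole place where time-orthogonality of the senders is used essentially. On the converse side, the delicate point is to argue rigorously that an arbitrary code decouples into the three independent requirements on $\mathbf{G}^{(1)}_{1}$, $\mathbf{G}^{(2)}_{2}$, and the stacked $\mathbf{H}$-matrix; this decoupling is exactly what the fully-participated hypothesis buys, since partial interactions would leave uncancellable cross-terms and could break the clean rank bounds.
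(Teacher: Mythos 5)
Your proposal is correct and follows essentially the same route as the paper's proof: your converse combines the bound $l_1+l_2\geq \text{mrk}_{q}({\bf{F}}^{(1)}_x)+\text{mrk}_{q}({\bf{F}}^{(2)}_x)$ (which is exactly the content of Lemma \ref{noP12}) with the sub-problem bound $l_1+l_2\geq \text{mrk}_{q}({\bf{F}}^{(\{1,2\})}_x)$, and your achievability scheme superimposes a row-split of the optimal code for $\mathcal{I}_{\{1,2\}}$ onto the two senders' private codes, which is precisely the paper's construction. The only (cosmetic) difference is that your single splitting parameter $p=\min\{a,c\}$ unifies the paper's four explicit sub-cases into one formula.
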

\begin{proof}
	The proof is based on the code construction given in the proof of Theorem $7$, in \cite{CTLO}. The lower bound $l_{q}^*(\mathcal{I}) \geq   \max\{\text{mrk}_{q}({\bf{F}}^{(1)}_x)+\text{mrk}_{q}({\bf{F}}_x^{(2)}),\text{mrk}_{q}({\bf{F}}^{(\{1,2\})}_x)\}$ is obtained by using Lemma \ref{noP12} and observing that $\mathcal{I}_{\{1,2\}}$ is a sub-problem. The matching upper bound is based on a code construction. There are four sub-cases: $(i) ~\text{mrk}_{q}({\bf{F}}^{(\{1,2\})}_x) \geq \text{mrk}_{q}({\bf{F}}^{(1)}_x) + \text{mrk}_{q}({\bf{F}}^{(2)}_x)$, $(ii)~\text{mrk}_{q}({\bf{F}}^{(\{1,2\})}_x)  \geq \max\{\text{mrk}_{q}({\bf{F}}^{(1)}_x),\text{mrk}_{q}({\bf{F}}^{(2)}_x)\}$, $(iii)$ $\text{mrk}_{q}({\bf{F}}^{(\{1,2\})}_x) \leq \text{mrk}_{q}({\bf{F}}^{(1)}_x)$, and $(iv)~\text{mrk}_{q}({\bf{F}}^{(\{1,2\})}_x) \leq \text{mrk}_{q}({\bf{F}}^{(2)}_x)$. Let ${\bf{c}}^{(\mathcal{A})}$ be an optimal scalar linear code for $\mathcal{I}_{\mathcal{A}}$, for any non-empty $\mathcal{A} \subseteq \{1,2\}$. 	Code construction for sub-cases $(i)$ and $(ii)$: ${\bf{c}}^{(1)} + {\bf{c}}^{(\{1,2\})}[1:\text{mrk}_{q}({\bf{F}}^{(1)}_x)]$ transmitted by $\mathcal{S}_1$, and ${\bf{c}}^{(2)} + {\bf{c}}^{(\{1,2\})}[\text{mrk}_{q}({\bf{F}}^{(1)}_x)+1:\text{mrk}_{q}({\bf{F}}^{(\{1,2\})}_x)]$ transmitted by $\mathcal{S}_2$. Code construction for sub-case $(iii)$: ${\bf{c}}^{(1)} + {\bf{c}}^{(\{1,2\})}$ transmitted by $\mathcal{S}_1$, and ${\bf{c}}^{(2)}$ transmitted by $\mathcal{S}_2$.  Code construction for sub-case $(iv)$: ${\bf{c}}_2 + {\bf{c}}^{(\{1,2\})}$ transmitted by $\mathcal{S}_2$, and ${\bf{c}}^{(1)}$ transmitted by $\mathcal{S}_1$.
\end{proof}

\begin{thm}
	Consider any TGICP  $\mathcal{I}$ belonging to Case II-B. Let the SGICP $\mathcal{\tilde{I}}_i$, $i \in \{1,2\}$, given by the partitioned fitting matrix ${\bf{\tilde{F}}}_x^{(i)}$ in (\ref{eqcase2b2}) be a joint extension of $\mathcal{I}_i$ and $\mathcal{I}_{\{1,2\}}$ given by Theorem \ref{th1} with the matrix ${\bf{F}}^{(\{1,2\})} \approx {\bf{F}}^{(\{1,2\})}_x$ being the same for  $\mathcal{\tilde{I}}_1$ and $\mathcal{\tilde{I}}_2$. 
	\begin{equation}
	{\bf{\tilde{F}}}_x^{(i)} =
	\left(
	\begin{array}{c|c} 
	{\bf{F}}_x^{(i)}  &  {\bf{A}}^{(i,\{1,2\})}_{x_0}\\
	\hline
	{\bf{A}}^{(\{1,2\},i)}_{x_0} &  {\bf{F}}_x^{(\{1,2\})}
	\end{array}
	\right).
	\label{eqcase2b2}
	\end{equation}
	\label{corcase2b2}
	$(i)$ If $\text{mrk}_{q}({\bf{F}}_x^{(\{1,2\})}) > \max\{\text{mrk}_{q}({\bf{F}}_x^{(1)}),\text{mrk}_{q}({\bf{F}}_x^{(2)})\}$, we have  $l_{q}^*(\mathcal{I}) \leq \text{mrk}_{q}({\bf{F}}_x^{(\{1,2\})})+\min\{\text{mrk}_{q}({\bf{F}}_x^{(1)}),\text{mrk}_{q}({\bf{F}}_x^{(2)})\}$.\\
	$(ii)$ Otherwise,  $l_{q}^*(\mathcal{I}) = \text{mrk}_{q}({\bf{F}}_x^{(1)})+\text{mrk}_{q}({\bf{F}}_x^{(2)})$.
	\label{thcase2b2}
\end{thm}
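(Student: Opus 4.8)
The plan is to prove both parts from a single code construction together with a sender-by-sender lower bound, invoking the two joint extensions $\tilde{\mathcal{I}}_1,\tilde{\mathcal{I}}_2$ only one at a time. By Observation \ref{obs2} I may relabel the senders so that $\text{mrk}_{q}({\bf{F}}_x^{(1)}) \geq \text{mrk}_{q}({\bf{F}}_x^{(2)})$; then $\min\{\text{mrk}_{q}({\bf{F}}_x^{(1)}),\text{mrk}_{q}({\bf{F}}_x^{(2)})\}=\text{mrk}_{q}({\bf{F}}_x^{(2)})$, and since the hypothesis guarantees that \emph{both} $\tilde{\mathcal{I}}_1$ and $\tilde{\mathcal{I}}_2$ are Theorem \ref{th1} joint extensions, this relabeling is harmless. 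I then work throughout with $\tilde{\mathcal{I}}_1$, the joint extension that absorbs the common sub-problem $\mathcal{I}_{\{1,2\}}$ into the larger exclusive sub-problem $\mathcal{I}_1$.

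For the upper bound (handling both parts at once) I let $\mathcal{S}_1$ transmit the optimal joint-extension code for $\tilde{\mathcal{I}}_1$ produced by Theorem \ref{th1}; $\mathcal{S}_1$ can form it since it is a function of ${\bf{x}}_{\mathcal{P}_1 \cup \mathcal{P}_{\{1,2\}}}$ and $\mathcal{M}_1=\mathcal{P}_1\cup\mathcal{P}_{\{1,2\}}$. I let $\mathcal{S}_2$ transmit an optimal scalar linear SGICP code ${\bf{c}}^{(2)}$ for $\mathcal{I}_2$. By Theorem \ref{th1} the first code has length $\max\{\text{mrk}_{q}({\bf{F}}_x^{(1)}),\text{mrk}_{q}({\bf{F}}_x^{(\{1,2\})})\}$ and is a valid code for every receiver of $\mathcal{I}_1$ and of $\mathcal{I}_{\{1,2\}}$, using only their side information contained in ${\bf{x}}_{\mathcal{P}_1 \cup \mathcal{P}_{\{1,2\}}}$. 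Because the two senders' transmissions are orthogonal in time, those receivers decode from $\mathcal{S}_1$'s codeword alone, and any extra side information they hold in ${\bf{x}}_{\mathcal{P}_2}$ is simply unused; the receivers of $\mathcal{I}_2$ decode their demands in ${\bf{x}}_{\mathcal{P}_2}$ from ${\bf{c}}^{(2)}$. The aggregate length is $\max\{\text{mrk}_{q}({\bf{F}}_x^{(1)}),\text{mrk}_{q}({\bf{F}}_x^{(\{1,2\})})\}+\text{mrk}_{q}({\bf{F}}_x^{(2)})$, which equals $\text{mrk}_{q}({\bf{F}}_x^{(\{1,2\})})+\text{mrk}_{q}({\bf{F}}_x^{(2)})$ in case $(i)$ (where $\text{mrk}_{q}({\bf{F}}_x^{(\{1,2\})})>\text{mrk}_{q}({\bf{F}}_x^{(1)})$) and $\text{mrk}_{q}({\bf{F}}_x^{(1)})+\text{mrk}_{q}({\bf{F}}_x^{(2)})$ in case $(ii)$; these are exactly the asserted upper bounds.

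For part $(ii)$ it remains to establish the matching lower bound $l^*_{q}(\mathcal{I}) \geq \text{mrk}_{q}({\bf{F}}_x^{(1)})+\text{mrk}_{q}({\bf{F}}_x^{(2)})$, which I obtain by isolating each sender's exclusive sub-problem, in the spirit of Lemma \ref{noP12} and the interference-alignment argument used for Case II-A. Fix any valid scalar linear code and restrict attention to message vectors in which all symbols of ${\bf{x}}_{\mathcal{P}_2 \cup \mathcal{P}_{\{1,2\}}}$ are set to zero. Then $\mathcal{S}_2$'s codeword is constant and the ${\bf{x}}_{\mathcal{P}_2 \cup \mathcal{P}_{\{1,2\}}}$ entries of every receiver's side information vanish, so each receiver of $\mathcal{I}_1$ must recover its ${\bf{x}}_{\mathcal{P}_1}$-demand from ${\bf{G}}^{(1)}{\bf{x}}_{\mathcal{P}_1}$, where ${\bf{G}}^{(1)}$ denotes the submatrix of $\mathcal{S}_1$'s encoding matrix formed by the columns indexed by $\mathcal{P}_1$, together with its side information in $\mathcal{P}_1$. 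Hence ${\bf{G}}^{(1)}$ is an encoding matrix for the SGICP $\mathcal{I}_1$, so by the min-rank characterization (Lemma \ref{dgmatrix}, \cite{DSC2}) $l_1 \geq \text{rk}_{q}({\bf{G}}^{(1)}) \geq \text{mrk}_{q}({\bf{F}}_x^{(1)})$. The symmetric reduction (zeroing ${\bf{x}}_{\mathcal{P}_1 \cup \mathcal{P}_{\{1,2\}}}$) gives $l_2 \geq \text{mrk}_{q}({\bf{F}}_x^{(2)})$; adding the two bounds yields the claim, and combined with the construction gives the equality in part $(ii)$. In part $(i)$ only the upper bound is asserted, exactly what the construction delivers.

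The main obstacle is the lower bound: one must argue rigorously that the columns on $\mathcal{P}_1$ of $\mathcal{S}_1$'s encoding matrix form a bona fide single-sender encoding matrix for $\mathcal{I}_1$, so that the min-rank bound applies, and in particular that side information in the common messages ${\bf{x}}_{\mathcal{P}_{\{1,2\}}}$ cannot be exploited to push $l_1$ below $\text{mrk}_{q}({\bf{F}}_x^{(1)})$; the zeroing reduction settles this cleanly. A secondary point, supplied for free by the hypothesis, is that under merely partially-participated interactions the joint-extension code for $\tilde{\mathcal{I}}_1$ still satisfies the $\mathcal{I}_{\{1,2\}}$-receivers, which is precisely the content of ``$\tilde{\mathcal{I}}_1$ being a joint extension given by Theorem \ref{th1}''. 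Observe finally that, since the entire common code is carried by the single sender $\mathcal{S}_1$, this construction needs only one of the two joint extensions; the requirement that ${\bf{F}}^{(\{1,2\})}$ be common to $\tilde{\mathcal{I}}_1$ and $\tilde{\mathcal{I}}_2$ becomes relevant only if one instead splits the common code across both senders as in the proof of Theorem \ref{thcase2b1}.
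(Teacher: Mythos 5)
Your proposal is correct, and your code construction is exactly the paper's: the sender whose exclusive sub-problem has the larger min-rank transmits the Theorem~\ref{th1} joint-extension code for that sub-problem together with $\mathcal{I}_{\{1,2\}}$, of length $\max\{\text{mrk}_{q}({\bf{F}}_x^{(j)}),\text{mrk}_{q}({\bf{F}}_x^{(\{1,2\})})\}$, while the other sender transmits an optimal scalar linear code for its exclusive sub-problem alone; your relabelling via Observation~\ref{obs2} merely replaces the paper's choice $j=\text{argmax}_i\,\text{mrk}_{q}({\bf{F}}_x^{(i)})$. Where you genuinely differ is the lower bound in part $(ii)$: the paper simply cites ``the lower bound given by Lemma~\ref{noP12},'' even though that lemma's hypothesis $\mathcal{P}_{\{1,2\}}=\phi$ does not literally hold for $\mathcal{I}$, so its application tacitly requires restricting to the receivers of $\mathcal{I}_1$ and $\mathcal{I}_2$ with the common messages suppressed. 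Your zeroing argument (set ${\bf{x}}_{\mathcal{P}_2\cup\mathcal{P}_{\{1,2\}}}={\bf{0}}$, observe that the $\mathcal{P}_1$-columns of $\mathcal{S}_1$'s encoding matrix must then constitute an encoding matrix for $\mathcal{I}_1$, and apply Lemma~\ref{dgmatrix}) carries out precisely that reduction rigorously, and in fact yields the per-sender bounds $l_i\geq\text{mrk}_{q}({\bf{F}}_x^{(i)})$ without even using the Case II-B restriction that $\mathcal{I}_1$ and $\mathcal{I}_2$ do not interact---a cleaner and more general route than the paper's appeal to Lemma~\ref{noP12}, whose own proof is informal. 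Your closing observation is also accurate and worth keeping: like your argument, the paper's proof consumes only the single joint extension $\tilde{\mathcal{I}}_j$ for the maximizing index, so the hypothesis that both $\tilde{\mathcal{I}}_1$ and $\tilde{\mathcal{I}}_2$ be Theorem~\ref{th1} extensions with a common completion ${\bf{F}}^{(\{1,2\})}$ is stronger than what either proof actually uses; its only role is to guarantee the needed extension exists whichever index attains the maximum.
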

\begin{proof}
	Let $j = \underset{i \in \{1,2\}}{\text{argmax}} \{\text{mrk}_{q}({\bf{F}}_x^{(i)})\}$. We first prove $(i)$. As the sub-problem with the fitting matrix ${\bf{\tilde{F}}}^{(j)}_x$ can be considered as an SGICP, $\mathcal{S}_{j}$ can transmit an optimal scalar linear code of length $\max\{\text{mrk}_{q}({\bf{F}}^{(j)}_x),\text{mrk}_{q}({\bf{F}}^{(\{1,2\})}_x)\}=\text{mrk}_{q}({\bf{F}}^{(\{1,2\})}_x)$. This satisfies the receivers in $\mathcal{I}_{j}$ and $\mathcal{I}_{\{1,2\}}$.
	To satisfy the receivers in $\mathcal{I}_{j^c}$, where $j^c = \{1,2\} \setminus j$, $\mathcal{S}_{j^c}$ can transmit an optimal scalar linear code of length $\text{mrk}_{q}({\bf{F}}^{(j^c)}_x)=\min\{\text{mrk}_{q}({\bf{F}}_x^{(1)}),\text{mrk}_{q}({\bf{F}}_x^{(2)})\}$. Hence, we have the result of statement $(i)$ as an upper bound.
	
	We now prove $(ii)$. As given in the proof of $(i)$, $\mathcal{S}_{j}$ can transmit an optimal scalar linear code of length $\max\{\text{mrk}_{q}({\bf{F}}^{(j)}_x),\text{mrk}_{q}({\bf{F}}^{(\{1,2\})}_x)\}=\text{mrk}_{q}({\bf{F}}^{(j)}_x)$. This satisfies the receivers in $\mathcal{I}_{j}$ and $\mathcal{I}_{\{1,2\}}$.
	To satisfy the receivers in $\mathcal{I}_{j^c}$, where $j^c = \{1,2\} \setminus j$, $\mathcal{S}_{j^c}$ can transmit an optimal scalar linear code of length $\text{mrk}_{q}({\bf{F}}_x^{(j^c)})$. Hence, this upper bound matches the lower bound given by Lemma \ref{noP12}.
\end{proof}

We now illustrate Theorem \ref{thcase2b2} with an example.

\begin{exmp}
	Consider the partitioned fitting matrix ${\bf{F}}_x$ related to the TGICP $\mathcal{I}$ as given below. We choose $\mathbb{F}_2$. Let ${\bf{x}}_{\mathcal{M}_1}=\{{\bf{x}}_1,{\bf{x}}_2,{\bf{x}}_3,{\bf{x}}_4,{\bf{x}}_8,{\bf{x}}_9\}$, ${\bf{x}}_{\mathcal{M}_2}=\{{\bf{x}}_5,{\bf{x}}_6,{\bf{x}}_7,{\bf{x}}_8,{\bf{x}}_9\}$.
	\[ {\bf{F}}_{x} = \left(
	\begin{array}{cccc|ccc|cc}
	1 & x & x & 0 & 0 & x & x & x & x\\
	0 & 1 & x & x & 0 & x & x & 0 & 0\\
	x & 0 & 1 & x & x & 0 & 0 & x & x\\
	x & x & 0 & 1 & x & x & x & x & x\\
	\hline
	x & x & 0 & 0 & 1 & x & 0 & x & x\\
	x & 0 & x & 0 & 0 & 1 & x & 0 & 0\\
	0 & x & x & x & x & 0 & 1 & x & x\\
	\hline
	x & 0 & x & 0 & x & x & 0 & 1 & x\\
	x & x & x & x & x & x & x & x & 1\\
	\end{array} \right).
	\]
	The sub-problems $\mathcal{\tilde{I}}_1$ and  $\mathcal{\tilde{I}}_2$ given by the partitioned fitting matrices ${\bf{\tilde{F}}}_x^{(1)}$ and ${\bf{\tilde{F}}}_x^{(2)}$ are joint extensions of $\mathcal{I}_1$ and $\mathcal{I}_{\{1,2\}}$, and  $\mathcal{I}_2$ and $\mathcal{I}_{\{1,2\}}$ respectively as given by Theorem \ref{th1}.
	\[ {\bf{\tilde{F}}}_{x}^{(1)} = \left(
		\begin{array}{cccc|cc}
		1 & x & x & 0 & x & x\\
		0 & 1 & x & x & 0 & 0\\
		x & 0 & 1 & x & x & x\\
		x & x & 0 & 1 & x & x\\
		\hline
		x & 0 & x & 0 & 1 & x\\
		x & x & x & x & x & 1\\
		\end{array} \right).
	\]
		\[ {\bf{\tilde{F}}}_{x}^{(2)} = \left(
		\begin{array}{ccc|cc}
		1 & x & 0 & x & x\\
		0 & 1 & x & 0 & 0\\
		x & 0 & 1 & x & x\\
		\hline
		x & x & 0 & 1 & x\\
		x & x & x & x & 1\\
		\end{array} \right).
		\]	
    The completions of the fitting matrices of the sub-problems of the TGICP $\mathcal{I}$, in accordance with Theorem \ref{thcase2b2} are given below.
	\[	
	{\bf{F}}^{(1)} = \left(
	\begin{array}{cccc}
	1 & 0 & 1 & 0\\
	0 & 1 & 0 & 1\\
	\hline
	1 & 0 & 1 & 0\\
	0 & 1 & 0 & 1\\
	\end{array} \right),
	{\bf{F}}^{(\{1,2\})} = \left(
	\begin{array}{cc}
	1 & 1\\
	\hline
	1 & 1\\
	\end{array} \right).
	\]		
	\[	
	{\bf{F}}^{(2)} = \left(
	\begin{array}{ccc}
	1 & 1 & 0 \\
	0 & 1 & 1 \\
	\hline
	1 & 0 & 1\\
	\end{array} \right).
	\]	
	Note that $\text{mrk}_q({\bf{F}}^{(1)}_x)=2$ (from \cite{MBR1}), $\text{mrk}_q({\bf{F}}^{(2)}_x)=2$, $\text{mrk}_q({\bf{F}}^{(\{1,2\})}_x)=1$. Hence, $l^*_q(\mathcal{I})=4$. An optimal scalar linear code given according to Theorem \ref{thcase2b2} is : $\mathcal{S}_1$ sends $(~{\bf{x}}_1+{\bf{x}}_3+{\bf{x}}_8+{\bf{x}}_9, ~{\bf{x}}_2+{\bf{x}}_4~)$, and $\mathcal{S}_2$ sends $(~{\bf{x}}_5+{\bf{x}}_6, ~{\bf{x}}_6+{\bf{x}}_7~)$.	
\end{exmp}

The following corollary  establishes the optimal scalar linear code length for another class of the TGICP belonging to Case II-B.  The proof follows on similar lines as that of result $(ii)$ in  Theorem \ref{thcase2b2} in conjunction with Observation \ref{obs1}.

\begin{cor}
	Consider any TGICP  $\mathcal{I}$ belonging to Case II-B. Let any one of the SGICPs given by the partitioned fitting matrices ${\bf{\tilde{F}}}_x^{(i)}$, $i \in \{1,2\}$, given in (\ref{eqcase2b2}) be a rank-invariant extension that can be derived as a special case of Theorem \ref{th1} as given in Observation \ref{obs1}. Then we have, $l_{q}^*(\mathcal{I}) =   \text{mrk}_{q}({\bf{F}}_x^{(1)})+\text{mrk}_{q}({\bf{F}}^{(2)}_x)$.
	\label{corcase2b3}
\end{cor}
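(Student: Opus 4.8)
The plan is to prove the equality via matching bounds, following the proof of result~$(ii)$ of Theorem~\ref{thcase2b2} almost verbatim, the only change being that the rank-invariance of the operative joint extension is now furnished by Observation~\ref{obs1} rather than by the inequality $\text{mrk}_{q}({\bf{F}}_x^{(\{1,2\})}) \leq \max\{\text{mrk}_{q}({\bf{F}}_x^{(1)}),\text{mrk}_{q}({\bf{F}}_x^{(2)})\}$. Appealing to Observation~\ref{obs2} to interchange the two senders if necessary, I assume without loss of generality that the SGICP $\mathcal{\tilde{I}}_1$ with fitting matrix ${\bf{\tilde{F}}}_x^{(1)}$ in~(\ref{eqcase2b2}) is the rank-invariant extension obtainable as a special case of Theorem~\ref{th1} per Observation~\ref{obs1}. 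The first step is to unpack this hypothesis into the quantitative fact that the explicit construction of Theorem~\ref{th1}, specialized as in Observation~\ref{obs1}, yields an optimal scalar linear code for $\mathcal{\tilde{I}}_1$ of length $\text{mrk}_{q}({\bf{\tilde{F}}}_x^{(1)}) = \text{mrk}_{q}({\bf{F}}_x^{(1)})$.

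For the upper bound I let $\mathcal{S}_1$ transmit this length-$\text{mrk}_{q}({\bf{F}}_x^{(1)})$ code for $\mathcal{\tilde{I}}_1$; every message it involves lies in ${\bf{x}}_{\mathcal{P}_1} \cup {\bf{x}}_{\mathcal{P}_{\{1,2\}}} = {\bf{x}}_{\mathcal{M}_1}$, so it is a legitimate transmission for $\mathcal{S}_1$, and by its optimality for $\mathcal{\tilde{I}}_1$ it satisfies every receiver of $\mathcal{I}_1$ and of $\mathcal{I}_{\{1,2\}}$ from $\mathcal{S}_1$'s codeword alone. I then let $\mathcal{S}_2$ transmit an optimal scalar linear code ${\bf{c}}^{(2)}$ for the standalone SGICP $\mathcal{I}_2$, of length $\text{mrk}_{q}({\bf{F}}_x^{(2)})$, satisfying every receiver of $\mathcal{I}_2$. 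Since the two transmissions occupy time-orthogonal blocks and ${\bf{c}}^{(2)}$ carries no message of ${\bf{x}}_{\mathcal{P}_1} \cup {\bf{x}}_{\mathcal{P}_{\{1,2\}}}$, no receiver of $\mathcal{I}_1$ or $\mathcal{I}_{\{1,2\}}$ sees interference from $\mathcal{S}_2$, while the receivers of $\mathcal{I}_2$ simply ignore $\mathcal{S}_1$'s block; hence the aggregate length $\text{mrk}_{q}({\bf{F}}_x^{(1)}) + \text{mrk}_{q}({\bf{F}}_x^{(2)})$ is achievable.

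For the matching lower bound I reuse the argument behind Lemma~\ref{noP12}: a receiver in $\mathcal{I}_1$ demands a message of ${\bf{x}}_{\mathcal{P}_1}$, which resides only at $\mathcal{S}_1$, so $\mathcal{S}_2$'s codeword conveys nothing about it and the ${\bf{x}}_{\mathcal{P}_1}$-columns of $\mathcal{S}_1$'s encoding matrix must by themselves constitute an encoding matrix for $\mathcal{I}_1$, forcing $l_1 \geq \text{mrk}_{q}({\bf{F}}_x^{(1)})$; the symmetric argument gives $l_2 \geq \text{mrk}_{q}({\bf{F}}_x^{(2)})$, and summing yields $l_{q}^*(\mathcal{I}) \geq \text{mrk}_{q}({\bf{F}}_x^{(1)}) + \text{mrk}_{q}({\bf{F}}_x^{(2)})$. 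Combined with the achievability above, this gives the asserted equality.

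The step requiring the most care---and the one I regard as the main obstacle---is the very first one: converting the qualitative hypothesis ``$\mathcal{\tilde{I}}_1$ is a rank-invariant extension of the Observation~\ref{obs1} type'' into both the numerical identity $\text{mrk}_{q}({\bf{\tilde{F}}}_x^{(1)}) = \text{mrk}_{q}({\bf{F}}_x^{(1)})$ and a single, explicitly decodable codeword of that length. Because ${\bf{F}}_x^{(\{1,2\})}$ is a submatrix of ${\bf{\tilde{F}}}_x^{(1)}$, Lemma~\ref{minrk} already forces $\text{mrk}_{q}({\bf{\tilde{F}}}_x^{(1)}) \geq \text{mrk}_{q}({\bf{F}}_x^{(\{1,2\})})$, so rank-invariance with respect to $\mathcal{I}_1$ tacitly presupposes $\text{mrk}_{q}({\bf{F}}_x^{(1)}) \geq \text{mrk}_{q}({\bf{F}}_x^{(\{1,2\})})$; I would make this consistency explicit and verify that the Observation~\ref{obs1} specialization of Theorem~\ref{th1} genuinely produces one codeword of length $\text{mrk}_{q}({\bf{F}}_x^{(1)})$ that completes both the $\mathcal{I}_1$- and $\mathcal{I}_{\{1,2\}}$-blocks of ${\bf{\tilde{F}}}_x^{(1)}$, since the whole construction hinges on $\mathcal{S}_1$ discharging both sub-problems within that single budget.
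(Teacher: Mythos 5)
Your proposal is correct and takes essentially the same route as the paper, which proves this corollary exactly as in Theorem~\ref{thcase2b2}$(ii)$ combined with Observation~\ref{obs1}: the sender whose block problem ${\bf{\tilde{F}}}_x^{(i)}$ is the rank-invariant extension transmits its optimal code of length $\text{mrk}_{q}({\bf{F}}_x^{(i)})$ (satisfying $\mathcal{I}_i$ and $\mathcal{I}_{\{1,2\}}$ at once), the other sender transmits an optimal code for its exclusive sub-problem, and the matching lower bound is the one furnished by Lemma~\ref{noP12}. Your added reduction to $i=1$ via Observation~\ref{obs2} and your restriction-style justification of the lower bound (setting the messages outside $\mathcal{P}_1$, respectively $\mathcal{P}_2$, to zero so that the corresponding columns of each sender's encoding matrix must themselves encode $\mathcal{I}_1$, respectively $\mathcal{I}_2$) are consistent with, and if anything slightly more explicit than, the paper's own treatment.
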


\subsection{Cases II-C and II-D}
In this section, we provide optimal scalar linear codes for some classes of the TGICP belonging to Cases II-C and II-D.

\begin{thm}
Consider any TGICP  $\mathcal{I}$ belonging to Case II-C. Let the SGICP given by the partitioned fitting matrix ${\bf{\tilde{F}}}_x$ in (\ref{eqcase2c}) be a joint extension of $\mathcal{I}_1$ and $\mathcal{I}_{\{1,2\}}$ given by Theorem \ref{th1}, or rank-invariant extensions that can be derived as special cases of Theorem \ref{th1} as given in Corollary \ref{corth1} and Observation \ref{obs1}.  Then we have, $l_{q}^*(\mathcal{I}) =   \text{mrk}_{q}({\bf{F}}_x^{(2)})+\max\{\text{mrk}_{q}({\bf{F}}^{(1)}_x),\text{mrk}_{q}({\bf{F}}^{(\{1,2\})}_x)\}$.
\begin{equation}
{\bf{\tilde{F}}}_x =
\left(
\begin{array}{c|c} 
{\bf{F}}_x^{(1)}  &  {\bf{A}}^{(1,\{1,2\})}_{x_0}\\
\hline
{\bf{A}}^{(\{1,2\},1)}_{x_0} &  {\bf{F}}_x^{(\{1,2\})}
\end{array}
\right).
\label{eqcase2c}
\end{equation}
\label{thcase2c}
\end{thm}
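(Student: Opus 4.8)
The plan is to prove the two matching bounds on $l^*_q(\mathcal{I})$ separately, combining the joint-extension construction of Theorem \ref{th1} for the upper bound with the interference-alignment dimension count used in the Case II-A part of Theorem \ref{thmcase1} for the lower bound. Throughout I use that the defining feature of Case II-C is that the cycle lies between $\mathcal{I}_1$ and $\mathcal{I}_{\{1,2\}}$ (both of whose messages reside at $\mathcal{S}_1$), while no receiver of $\mathcal{I}_1$ or $\mathcal{I}_{\{1,2\}}$ has side information in ${\bf{x}}_{\mathcal{P}_2}$.

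For the upper bound, I would exploit that ${\bf{x}}_{\mathcal{P}_1}$ and ${\bf{x}}_{\mathcal{P}_{\{1,2\}}}$ are both available at $\mathcal{S}_1$. By hypothesis the SGICP $\mathcal{\tilde{I}}$ with fitting matrix ${\bf{\tilde{F}}}_x$ of (\ref{eqcase2c}) is a joint extension of $\mathcal{I}_1$ and $\mathcal{I}_{\{1,2\}}$ of the type covered by Theorem \ref{th1} (or the special cases of Corollary \ref{corth1} and Observation \ref{obs1}), so $\text{mrk}_q({\bf{\tilde{F}}}_x)=\max\{\text{mrk}_q({\bf{F}}^{(1)}_x),\text{mrk}_q({\bf{F}}^{(\{1,2\})}_x)\}$ and an optimal encoding matrix ${\bf{G}}_E$ of this length is obtained from optimal codes of $\mathcal{I}_1$ and $\mathcal{I}_{\{1,2\}}$. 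Transmitting ${\bf{G}}_E{\bf{x}}_{\mathcal{P}_1\cup\mathcal{P}_{\{1,2\}}}$ from $\mathcal{S}_1$ satisfies every receiver of $\mathcal{I}_1$ and $\mathcal{I}_{\{1,2\}}$, since in Case II-C their side information lies entirely within ${\bf{x}}_{\mathcal{P}_1\cup\mathcal{P}_{\{1,2\}}}$. Transmitting an optimal scalar linear code for $\mathcal{I}_2$ of length $\text{mrk}_q({\bf{F}}^{(2)}_x)$ from $\mathcal{S}_2$ satisfies every receiver of $\mathcal{I}_2$. This gives $l^*_q(\mathcal{I})\le \text{mrk}_q({\bf{F}}^{(2)}_x)+\max\{\text{mrk}_q({\bf{F}}^{(1)}_x),\text{mrk}_q({\bf{F}}^{(\{1,2\})}_x)\}$.

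For the lower bound, I would write an arbitrary optimal scalar linear code as ${\bf{G}}{\bf{x}}_{[m]}$ in the block form (\ref{Geq}), assuming without loss of generality that ${\bf{G}}$ has full row rank, and track the precoding subspaces $U_1,U_2,U_{\{1,2\}}$ spanned by the columns of ${\bf{G}}$ associated with ${\bf{x}}_{\mathcal{P}_1},{\bf{x}}_{\mathcal{P}_2},{\bf{x}}_{\mathcal{P}_{\{1,2\}}}$, respectively. Because no receiver of $\mathcal{I}_1$ or $\mathcal{I}_{\{1,2\}}$ has side information in ${\bf{x}}_{\mathcal{P}_2}$, the same interference-alignment reasoning as in Theorem \ref{thmcase1} forces the precoding vectors for ${\bf{x}}_{\mathcal{P}_2}$ to be independent of those for ${\bf{x}}_{\mathcal{P}_1\cup\mathcal{P}_{\{1,2\}}}$ (otherwise some receiver of $\mathcal{I}_1$ or $\mathcal{I}_{\{1,2\}}$ could not cancel the ${\bf{x}}_{\mathcal{P}_2}$ interference), i.e. $U_2\cap(U_1+U_{\{1,2\}})=\{{\bf{0}}\}$. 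Projecting out $U_2$, the columns for ${\bf{x}}_{\mathcal{P}_1\cup\mathcal{P}_{\{1,2\}}}$ yield a valid scalar linear code for $\mathcal{\tilde{I}}$, so $\dim(U_1+U_{\{1,2\}})\ge \text{mrk}_q({\bf{\tilde{F}}}_x)=\max\{\text{mrk}_q({\bf{F}}^{(1)}_x),\text{mrk}_q({\bf{F}}^{(\{1,2\})}_x)\}$; likewise the ${\bf{x}}_{\mathcal{P}_2}$ component alone must satisfy $\mathcal{I}_2$ using only its ${\bf{x}}_{\mathcal{P}_2}$ side information, giving $\dim(U_2)\ge \text{mrk}_q({\bf{F}}^{(2)}_x)$. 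Using the independence and full row rank, $l^*_q(\mathcal{I})=\dim(U_1+U_2+U_{\{1,2\}})=\dim(U_2)+\dim(U_1+U_{\{1,2\}})$, which is at least the claimed value, matching the upper bound.

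The step I expect to be the main obstacle is the rigorous justification of $U_2\cap(U_1+U_{\{1,2\}})=\{{\bf{0}}\}$ from the absence of ${\bf{x}}_{\mathcal{P}_2}$ side information at the receivers of $\mathcal{I}_1$ and $\mathcal{I}_{\{1,2\}}$: this is the genuine interference-alignment content and must be argued for \emph{every} such receiver, so one must verify that Case II-C forbids all interactions $\mathcal{I}_1\rightarrow\mathcal{I}_2$ and $\mathcal{I}_{\{1,2\}}\rightarrow\mathcal{I}_2$, not merely the fully-participated ones. A secondary point requiring care is confirming, via the joint-extension hypothesis, that $\text{mrk}_q({\bf{\tilde{F}}}_x)$ equals the maximum of the two constituent minimum ranks, so that the projected lower bound on $\dim(U_1+U_{\{1,2\}})$ exactly meets the construction.
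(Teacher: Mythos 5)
Your upper bound is exactly the paper's construction: $\mathcal{S}_1$ transmits an optimal code for the joint extension $\mathcal{\tilde{I}}$, of length $\max\{\text{mrk}_{q}({\bf{F}}^{(1)}_x),\text{mrk}_{q}({\bf{F}}^{(\{1,2\})}_x)\}$ guaranteed by the hypothesis and Theorem \ref{th1}, while $\mathcal{S}_2$ transmits an optimal code for $\mathcal{I}_2$; that direction is fine. The genuine gap is in your lower bound, and it sits precisely at the step you flagged: your argument requires that in Case II-C no receiver of $\mathcal{I}_1$ or $\mathcal{I}_{\{1,2\}}$ has side information in ${\bf{x}}_{\mathcal{P}_2}$, i.e., that the interactions $\mathcal{I}_1 \rightarrow \mathcal{I}_2$ and $\mathcal{I}_{\{1,2\}} \rightarrow \mathcal{I}_2$ are absent. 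That is false. Case II-C constrains only the cycle structure of the interaction digraph; one-directional interactions into $\mathcal{I}_2$ are admissible, and the paper's own proof says so explicitly: for a Case II-C instance, \emph{at least one} of ${\bf{A}}_{x_0}^{(2,\{1,2\})}$ and ${\bf{A}}_{x_0}^{(\{1,2\},2)}$ is ${\bf{0}}$ --- not both. (Indeed, the theorem statement places no restriction at all on the blocks ${\bf{A}}_{x_0}^{(1,2)}$ and ${\bf{A}}_{x_0}^{(\{1,2\},2)}$ of ${\bf{F}}_x$.) Consequently, on the sub-class of Case II-C where $\mathcal{I}_{\{1,2\}} \rightarrow \mathcal{I}_2$ or $\mathcal{I}_1 \rightarrow \mathcal{I}_2$ is present, your key claim $U_2\cap(U_1+U_{\{1,2\}})=\{{\bf{0}}\}$ has no justification, and the count $\dim(U_2)+\dim(U_1+U_{\{1,2\}})$ no longer lower-bounds the code length.

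The paper sidesteps any three-way subspace independence by writing the target as a maximum of two pairwise sums, $\max\{\text{mrk}_{q}({\bf{F}}^{(2)}_x)+\text{mrk}_{q}({\bf{F}}^{(1)}_x),\ \text{mrk}_{q}({\bf{F}}^{(2)}_x)+\text{mrk}_{q}({\bf{F}}^{(\{1,2\})}_x)\}$, and proving each bound separately from a different structural fact. It first relaxes to the problem $\mathcal{I}_f$ with fully-participated interactions between $\mathcal{I}_1$ and $\mathcal{I}_{\{1,2\}}$, so that $l^*_q(\mathcal{I})\geq l^*_q(\mathcal{I}_f)$. Then $l^*_q(\mathcal{I}_f)\geq \text{mrk}_{q}({\bf{F}}^{(1)}_x)+\text{mrk}_{q}({\bf{F}}^{(2)}_x)$ follows from Lemma \ref{noP12}, which uses only the two-sender constraint (neither sender can mix ${\bf{x}}_{\mathcal{P}_1}$ with ${\bf{x}}_{\mathcal{P}_2}$) and holds regardless of interactions; and $l^*_q(\mathcal{I}_f)\geq \text{mrk}_{q}({\bf{F}}^{(2)}_x)+\text{mrk}_{q}({\bf{F}}^{(\{1,2\})}_x)$ follows by restricting to the sub-problem $\mathcal{I}'$ on $\mathcal{I}_2$ and $\mathcal{I}_{\{1,2\}}$, viewing it as an SGICP with message set ${\bf{x}}_{\mathcal{M}_2}$, and applying the block-triangular Lemma \ref{lmSCC} --- this is exactly where the Case II-C property (no two-sided interaction between $\mathcal{I}_2$ and $\mathcal{I}_{\{1,2\}}$, hence a block-triangular fitting matrix after possible relabelling) enters, and it tolerates a one-directional interaction into or out of $\mathcal{I}_2$. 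To repair your proof you would have to abandon the single decomposition and establish these two pairwise bounds separately, as the paper does; alternatively, your interference-alignment route as written only proves the theorem for the strict sub-class of Case II-C with no interactions into $\mathcal{I}_2$.
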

\begin{proof}
We first consider the TGICP $\mathcal{I}_f$  with  ${\bf{A}}^{(1,\{1,2\})}_{x_0}={\bf{A}}^{(\{1,2\},1)}_{x_0}={\bf{X}}$, and all other submatrices of the related fitting matrix being the same as that of the TGICP $\mathcal{I}$. The subscript $f$ in $\mathcal{I}_f$ denotes fully-participated interactions between $\mathcal{I}_1$ and $\mathcal{I}_{\{1,2\}}$. Note that $l_q^*(\mathcal{I}_f) \leq l_q^*(\mathcal{I})$, as each receiver in $\mathcal{I}$ has side information which is a subset of that of the corresponding receiver in $\mathcal{I}_f$. We now provide an upper bound for  $l_q^*(\mathcal{I}_f)$, and then a matching lower bound. 

Let $\mathcal{S}_2$ transmit an optimal scalar linear code for $\mathcal{I}_2$  with length $\text{mrk}_{q}({\bf{F}}_x^{(2)})$. Note that all the receivers in $\mathcal{I}_2$ are satisfied. Using Corollary \ref{corth1}, the optimal scalar linear code length for the SGICP with the fitting matrix ${\bf{\tilde{F}}}_x$ is $\max\{\text{mrk}_{q}({\bf{F}}^{(1)}_x),\text{mrk}_{q}({\bf{F}}^{(\{1,2\})}_x)\}$. All receivers in $\mathcal{I}_1$ and $\mathcal{I}_{\{1,2\}}$ are satisfied. Hence, $l_{q}^*(\mathcal{I}_f) \leq   \text{mrk}_{q}({\bf{F}}_x^{(2)})+\max\{\text{mrk}_{q}({\bf{F}}^{(1)}_x),\text{mrk}_{q}({\bf{F}}^{(\{1,2\})}_x)\}$. Using Lemma \ref{noP12}, note that
\begin{equation}
l_{q}^*(\mathcal{I}_f) \geq \text{mrk}_{q}({\bf{F}}_x^{(1)})+\text{mrk}_{q}({\bf{F}}_x^{(2)}).
\label{eqsp12}
\end{equation}
Consider the two-sender sub-problem $\mathcal{I}'$ of $\mathcal{I}_f$  given by the fitting matrix  ${\bf{F}}'_x$ in (\ref{eqfitpartsp}). As the TGICP $\mathcal{I}_f$ belongs  to Case II-C, at least one of  ${\bf{A}}_{x_0}^{(2,\{1,2\})}$ and ${\bf{A}}^{(\{1,2\},2)}_{x_0}$ is ${\bf{0}}$.  
\begin{equation}
{\bf{F}}'_x =
\left(
\begin{array}{c|c} 
{\bf{F}}_x^{(2)} & {\bf{A}}_{x_0}^{(2,\{1,2\})} \\
\hline
{\bf{A}}^{(\{1,2\},2)}_{x_0} & {\bf{F}}_x^{(\{1,2\})}
\end{array}
\right).
\label{eqfitpartsp}
\end{equation}
Using Lemma \ref{lmSCC} and noting that the TGICP $\mathcal{I}'$ can be considered as an SGICP with message set ${\bf{x}}_{\mathcal{M}_2}$, we have
\begin{equation}
l_{q}^*(\mathcal{I}_f) \geq l_{q}^*(\mathcal{I}')= \text{mrk}_{q}({\bf{F}}_x^{(2)})+\text{mrk}_{q}({\bf{F}}_x^{(\{1,2\})}).
\label{eqsp121}
\end{equation}
Combining (\ref{eqsp12}) and (\ref{eqsp121}), we obtain the matching lower bound $l_{q}^*(\mathcal{I}_f) \geq   \text{mrk}_{q}({\bf{F}}_x^{(2)})+\max\{\text{mrk}_{q}({\bf{F}}^{(1)}_x),\text{mrk}_{q}({\bf{F}}^{(\{1,2\})}_x)\}$. This serves as a lower bound for $l_{q}^*(\mathcal{I})$. The matching upper bound is obtained as in the case of the TGICP $\mathcal{I}_f$ using Theorem \ref{th1} (this paper) or Theorems $1$ and $2$ in \cite{PK}.
\end{proof}

We now illustrate Theorem \ref{thcase2c} with an example.

\begin{exmp}
	Consider the partitioned fitting matrix ${\bf{F}}_x$ related to the TGICP $\mathcal{I}$ as given below. We choose $\mathbb{F}_2$. Let ${\bf{x}}_{\mathcal{M}_1}=\{{\bf{x}}_1,{\bf{x}}_2,{\bf{x}}_3,{\bf{x}}_4,{\bf{x}}_5,{\bf{x}}_8,{\bf{x}}_9\}$, ${\bf{x}}_{\mathcal{M}_2}=\{{\bf{x}}_6,{\bf{x}}_7,{\bf{x}}_8,{\bf{x}}_9\}$.
	\[ {\bf{F}}_{x} = \left(
	\begin{array}{ccccc|cc|cc}
	1 & x & x & 0 & 0 & x & 0 & x & x\\
	0 & 1 & x & x & 0 & 0 & x & 0 & 0\\
	0 & 0 & 1 & x & x & 0 & 0 & x & x\\
	x & 0 & 0 & 1 & x & 0 & x & x & x\\
	x & x & 0 & 0 & 1 & x & 0 & x & x\\
	\hline
	x & 0 & 0 & 0 & 0 & 1 & x & 0 & 0\\
	0 & x & 0 & x & x & 0 & 1 & x & x\\
	\hline
	x & x & x & 0 & 0 & 0 & 0 & 1 & x\\
	x & x & x & 0 & x & 0 & 0 & x & 1\\
	\end{array} \right).
	\]
	With the completions of the fitting matrices of the sub-problems of the TGICP $\mathcal{I}$ given below, we see that the SGICP with fitting matrix ${\bf{\tilde{F}}}_x$ (given in Theorem \ref{thcase2c}) is a joint extension given in Theorem \ref{th1}.
	\[	
	{\bf{F}}^{(1)} = \left(
	\begin{array}{ccccc}
	1 & 1 & 1 & 0 & 0\\
	0 & 1 & 0 & 1 & 0\\
	0 & 0 & 1 & 0 & 1\\	
	\hline
	1 & 0 & 0 & 1 & 1\\
	1 & 1 & 0 & 0 & 1\\
	\end{array} \right),
	{\bf{F}}^{(\{1,2\})} = \left(
	\begin{array}{cc}
	1 & 1\\
	\hline
	1 & 1\\
	\end{array} \right).
	\]		
	\[	
	{\bf{F}}^{(2)} = \left(
	\begin{array}{cc}
	1 & 0 \\
	0 & 1 \\
	\end{array} \right).
	\]	
	Note that $\text{mrk}_q({\bf{F}}^{(1)}_x)=3$ (from \cite{MBR1}), $\text{mrk}_q({\bf{F}}^{(2)}_x)=2$, $\text{mrk}_q({\bf{F}}^{(\{1,2\})}_x)=1$. Hence, $l^*_q(\mathcal{I})=5$. An optimal scalar linear code given according to Theorem \ref{thcase2c} is : $\mathcal{S}_1$ sends $(~{\bf{x}}_1+{\bf{x}}_2+{\bf{x}}_3+{\bf{x}}_8+{\bf{x}}_9, ~{\bf{x}}_2+{\bf{x}}_4,~{\bf{x}}_3+{\bf{x}}_5~)$, and $\mathcal{S}_2$ sends $({\bf{x}}_6, ~{\bf{x}}_7)$.	
\end{exmp}

The following corollary  establishes the optimal scalar linear code length for some classes of the TGICP belonging to Case II-D.  The result follows directly from Theorem \ref{thcase2c} in conjunction with Observation \ref{obs2}.

\begin{cor}
	Consider any TGICP  $\mathcal{I}$ belonging to Case II-D. Let the SGICP given by the partitioned fitting matrix ${\bf{\tilde{F}}}_x$ in (\ref{eqcase2d}) be a joint extension of $\mathcal{I}_2$ and $\mathcal{I}_{\{1,2\}}$ given by Theorem \ref{th1}, or rank-invariant extensions that can be derived as special cases of Theorem \ref{th1} as given in Corollary \ref{corth1} and Observation \ref{obs1}.  Then we have, $l_{q}^*(\mathcal{I}) =   \text{mrk}_{q}({\bf{F}}_x^{(1)})+\max\{\text{mrk}_{q}({\bf{F}}^{(2)}_x),\text{mrk}_{q}({\bf{F}}^{(\{1,2\})}_x)\}$.
	\begin{equation}
	{\bf{\tilde{F}}}_x =
	\left(
	\begin{array}{c|c} 
	{\bf{F}}_x^{(2)}  &  {\bf{A}}^{(2,\{1,2\})}_{x_0}\\
	\hline
	{\bf{A}}^{(\{1,2\},2)}_{x_0} &  {\bf{F}}_x^{(\{1,2\})}
	\end{array}
	\right).
	\label{eqcase2d}
	\end{equation}
	\label{corcase2d}
\end{cor}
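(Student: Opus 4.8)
The plan is to reduce the Case~II-D statement to the already-established Case~II-C result (Theorem~\ref{thcase2c}) by exploiting the sender-exchange duality recorded in Observation~\ref{obs2}. First I would construct, from the given TGICP $\mathcal{I}$ in Case~II-D, the companion TGICP $\tilde{\mathcal{I}}$ obtained by exchanging the message sets available at the two senders exactly as in Observation~\ref{obs2}. Under this exchange we have $\tilde{\mathbf{F}}_x^{(1)}=\mathbf{F}_x^{(2)}$, $\tilde{\mathbf{F}}_x^{(2)}=\mathbf{F}_x^{(1)}$, $\tilde{\mathbf{F}}_x^{(\{1,2\})}=\mathbf{F}_x^{(\{1,2\})}$, while the off-diagonal interaction blocks $\mathbf{A}^{(\mathcal{A}_1,\mathcal{A}_2)}_{x_0}$ are relabelled according to the correspondence stated there.

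Second, I would check that this exchange amounts to relabelling the vertices $1$ and $2$ of the interaction digraph while fixing the vertex $\{1,2\}$, so that it carries the defining interaction pattern of Case~II-D onto that of Case~II-C (the two sub-cases being mirror images under the swap $\mathcal{I}_1 \leftrightarrow \mathcal{I}_2$). Concretely, the restriction that at least one of $\mathbf{A}_{x_0}^{(1,\{1,2\})}$, $\mathbf{A}_{x_0}^{(\{1,2\},1)}$ vanishes (Case~II-D) becomes the restriction that at least one of $\tilde{\mathbf{A}}_{x_0}^{(2,\{1,2\})}$, $\tilde{\mathbf{A}}_{x_0}^{(\{1,2\},2)}$ vanishes (Case~II-C). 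I would also verify that the joint-extension hypothesis transfers: the SGICP with fitting matrix $\tilde{\mathbf{F}}_x$ of (\ref{eqcase2d}), assumed to be a joint extension of $\mathcal{I}_2$ and $\mathcal{I}_{\{1,2\}}$ as in Theorem~\ref{th1} (or one of the rank-invariant special cases of Corollary~\ref{corth1} and Observation~\ref{obs1}), is precisely the joint extension of $\tilde{\mathcal{I}}_1=\mathcal{I}_2$ and $\tilde{\mathcal{I}}_{\{1,2\}}=\mathcal{I}_{\{1,2\}}$ required by the hypothesis of Theorem~\ref{thcase2c} applied to $\tilde{\mathcal{I}}$.

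Third, applying Theorem~\ref{thcase2c} to $\tilde{\mathcal{I}}$ yields $l_{q}^*(\tilde{\mathcal{I}})=\text{mrk}_{q}(\tilde{\mathbf{F}}_x^{(2)})+\max\{\text{mrk}_{q}(\tilde{\mathbf{F}}_x^{(1)}),\text{mrk}_{q}(\tilde{\mathbf{F}}_x^{(\{1,2\})})\}$, which upon substituting the identifications above equals $\text{mrk}_{q}(\mathbf{F}_x^{(1)})+\max\{\text{mrk}_{q}(\mathbf{F}_x^{(2)}),\text{mrk}_{q}(\mathbf{F}_x^{(\{1,2\})})\}$. Finally, I would invoke the code-exchange correspondence of Observation~\ref{obs2}: interchanging which sender transmits which codeword is an aggregate-length-preserving involution between scalar linear codes for $\mathcal{I}$ and for $\tilde{\mathcal{I}}$, so every code for $\mathcal{I}$ of length $L$ produces a code for $\tilde{\mathcal{I}}$ of the same length and conversely, giving $l_{q}^*(\mathcal{I})=l_{q}^*(\tilde{\mathcal{I}})$ and hence the claimed value.

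The only genuine work lies in the second step — confirming that the exchange of Observation~\ref{obs2} really sends the Case~II-D interaction digraph to the Case~II-C one and preserves the joint-extension structure of $\tilde{\mathbf{F}}_x$; everything else is substitution together with the length equality. I expect this structural bookkeeping, rather than any estimate, to be the main (and fairly modest) obstacle.
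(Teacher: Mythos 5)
Your proposal is correct and follows exactly the paper's own route: the paper proves this corollary by stating that it ``follows directly from Theorem \ref{thcase2c} in conjunction with Observation \ref{obs2}'', i.e., the sender-exchange duality reducing Case II-D to Case II-C, which is precisely your reduction. Your write-up merely makes explicit the bookkeeping (transfer of the interaction digraph and of the joint-extension hypothesis under the swap $\mathcal{I}_1 \leftrightarrow \mathcal{I}_2$) that the paper leaves implicit.
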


\subsection{Case II-E}

In this section, we provide scalar linear codes for some classes of the TGICP belonging to Case II-E, and give some necessary conditions for their optimality.

\begin{thm}
	Consider any TGICP  $\mathcal{I}$ belonging to Case II-E, with all  interactions between $\mathcal{I}_1$, $\mathcal{I}_2$, and $\mathcal{I}_{\{1,2\}}$  being fully-participated. Then we have, $l_{q}^*(\mathcal{I})=$ $\max\{\text{mrk}_{q}({\bf{F}}^{(1)}_x)+\text{mrk}_{q}({\bf{F}}_x^{(2)}),~\text{mrk}_{q}({\bf{F}}^{(1)}_x)+\text{mrk}_{q}({\bf{F}}^{(\{1,2\})}_x),~\text{mrk}_{q}({\bf{F}}^{(2)}_x)+\text{mrk}_{q}({\bf{F}}^{(\{1,2\})}_x)\}$.
	\label{thcase2b1}
\end{thm}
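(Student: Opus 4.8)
The plan is to prove matching lower and upper bounds. Write $a=\text{mrk}_q(\mathbf{F}_x^{(1)})$, $b=\text{mrk}_q(\mathbf{F}_x^{(2)})$, $c=\text{mrk}_q(\mathbf{F}_x^{(\{1,2\})})$, so the target is $l_q^*(\mathcal{I})=\max\{a+b,a+c,b+c\}$, the sum of the two largest of $a,b,c$. The structural fact I would use is that the defining feature of Case II-E is that its interaction digraph is a directed $3$-cycle on $\{1,2,\{1,2\}\}$, so each pair of sub-problems interacts in exactly one direction. Using Observation \ref{obs2} to exchange the two senders, I may fix the orientation $\mathcal{I}_1\rightarrow\mathcal{I}_2\rightarrow\mathcal{I}_{\{1,2\}}\rightarrow\mathcal{I}_1$ without loss of generality. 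With all three interactions fully-participated, the partition in (\ref{eqfitpart}) has $\mathbf{A}^{(1,2)}_{x_0}=\mathbf{A}^{(2,\{1,2\})}_{x_0}=\mathbf{A}^{(\{1,2\},1)}_{x_0}=\mathbf{X}$ and the three remaining off-diagonal blocks equal to $\mathbf{0}$.

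For the lower bound I would exhibit three block-triangular submatrices of $\mathbf{F}_x$. Deleting the rows and columns of $\mathcal{I}_{\{1,2\}}$ leaves $\left(\begin{smallmatrix}\mathbf{F}_x^{(1)}&\mathbf{X}\\\mathbf{0}&\mathbf{F}_x^{(2)}\end{smallmatrix}\right)$; deleting those of $\mathcal{I}_1$ leaves $\left(\begin{smallmatrix}\mathbf{F}_x^{(2)}&\mathbf{X}\\\mathbf{0}&\mathbf{F}_x^{(\{1,2\})}\end{smallmatrix}\right)$; and deleting those of $\mathcal{I}_2$ leaves $\left(\begin{smallmatrix}\mathbf{F}_x^{(1)}&\mathbf{0}\\\mathbf{X}&\mathbf{F}_x^{(\{1,2\})}\end{smallmatrix}\right)$. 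Each is block triangular (up to reordering the two blocks in the last case), so Lemma \ref{lmSCC} gives their $\text{mrk}_q$ as $a+b$, $b+c$, and $a+c$. Since a single sender holding all messages can replay any two-sender code and the single-sender optimum equals $\text{mrk}_q(\mathbf{F}_x)$, we have $l_q^*(\mathcal{I})\geq\text{mrk}_q(\mathbf{F}_x)$, and $\text{mrk}_q(\mathbf{F}_x)$ dominates the $\text{mrk}_q$ of each submatrix by Lemma \ref{minrk}; hence $l_q^*(\mathcal{I})\geq\max\{a+b,a+c,b+c\}$.

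For the matching upper bound I would give an explicit code from optimal sub-problem codes $\mathbf{c}^{(1)},\mathbf{c}^{(2)},\mathbf{c}^{(\{1,2\})}$ of lengths $a,b,c$, whose correctness rests on a \emph{cyclic} cancellation enabled by full participation: $\mathcal{I}_1$ knows all of $\mathcal{P}_2$, $\mathcal{I}_2$ knows all of $\mathcal{P}_{\{1,2\}}$, and $\mathcal{I}_{\{1,2\}}$ knows all of $\mathcal{P}_1$. In all cases $\mathcal{S}_1$ transmits $\mathbf{c}^{(1)}+\mathbf{c}^{(\{1,2\})}$ (length $\max\{a,c\}$), so $\mathcal{I}_{\{1,2\}}$ cancels $\mathbf{c}^{(1)}$ and recovers $\mathbf{c}^{(\{1,2\})}$. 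If $c\leq\max\{a,b\}$, let $\mathcal{S}_2$ transmit $\mathbf{c}^{(2)}+\mathbf{c}^{(\{1,2\})}$ (length $\max\{b,c\}$); then $\mathcal{I}_2$ cancels $\mathbf{c}^{(\{1,2\})}$, while $\mathcal{I}_1$ first cancels $\mathbf{c}^{(2)}$ from $\mathcal{S}_2$ to learn the full $\mathbf{c}^{(\{1,2\})}$ and then cancels it from $\mathcal{S}_1$ to recover $\mathbf{c}^{(1)}$; the total is $\max\{a,c\}+\max\{b,c\}=\max\{a+b,a+c,b+c\}$. If instead $c>\max\{a,b\}$ the symmetric scheme would cost $2c$, so I would have $\mathcal{S}_2$ transmit only $\mathbf{c}^{(2)}+\mathbf{c}^{(\{1,2\})}[1:a]$ (length $\max\{a,b\}$); now $\mathcal{I}_1$ needs only the first $a$ coordinates of $\mathbf{c}^{(\{1,2\})}$, which are exactly those superimposed on $\mathbf{c}^{(1)}$ at $\mathcal{S}_1$, and recovers them from $\mathcal{S}_2$ after cancelling $\mathbf{c}^{(2)}$, for a total of $c+\max\{a,b\}=\max\{a+b,a+c,b+c\}$.

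I expect the upper bound to be the main obstacle. First, the construction splits according to whether $c$ exceeds $\max\{a,b\}$ (together with the sender-swap reduction), so a careful case analysis is required, paralleling the sub-case analysis used for the Case~II-B constructions. Second, and more essentially, decodability at $\mathcal{I}_1$ is not direct (as it would be in the fully mutual joint extension of Corollary \ref{corth1}) but relies on the cyclic indirect cancellation routed through $\mathcal{S}_2$; the crux is to verify that $\mathcal{I}_1$ always learns precisely the portion of $\mathbf{c}^{(\{1,2\})}$ that is superimposed on its own codeword at $\mathcal{S}_1$, which is exactly where full participation of all three interactions is indispensable. The lower bound, by contrast, is routine once the three triangular submatrices are identified.
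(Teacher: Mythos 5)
Your upper-bound construction is correct and is essentially the paper's own: the paper likewise superimposes (suitably truncated copies of) $\mathbf{c}^{(\{1,2\})}$ on $\mathbf{c}^{(1)}$ and $\mathbf{c}^{(2)}$ and relies on exactly the cyclic cancellation you describe, splitting into the cases $\text{mrk}_{q}(\mathbf{F}^{(\{1,2\})}_x)\leq\min\{a,b\}$ and $\text{mrk}_{q}(\mathbf{F}^{(j)}_x)\leq\min\{\cdot,\cdot\}$ with $a=\text{mrk}_q(\mathbf{F}^{(1)}_x)$, $b=\text{mrk}_q(\mathbf{F}^{(2)}_x)$, $c=\text{mrk}_q(\mathbf{F}^{(\{1,2\})}_x)$. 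The genuine gap is in your lower bound, and it begins with your structural premise. Case II-E is \emph{not} restricted to digraphs in which each pair of sub-problems interacts in exactly one direction: its defining restriction is that no $2$-cycle passes through the vertex $\{1,2\}$, so it also contains digraphs having a $3$-cycle through $\{1,2\}$ \emph{together with} the $2$-cycle between vertices $1$ and $2$. The paper itself treats such digraphs ($\mathcal{H}_{61}$, $\mathcal{H}_{62}$) as Case II-E in Corollaries \ref{corcase2e61} and \ref{corcase2e62}. Observation \ref{obs2} only exchanges the roles of the two senders, so it can normalize the orientation of the $3$-cycle but cannot remove the extra interaction between $\mathcal{I}_1$ and $\mathcal{I}_2$. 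When that interaction is present (and, under the theorem's hypothesis, fully participated), your first submatrix is $\bigl(\begin{smallmatrix}\mathbf{F}^{(1)}_x & \mathbf{X}\\ \mathbf{X} & \mathbf{F}^{(2)}_x\end{smallmatrix}\bigr)$, which is not block triangular; Lemma \ref{lmSCC} does not apply, and its minrank is in fact $\max\{a,b\}$, not $a+b$.

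This failure cannot be repaired inside your framework, because the single-sender relaxation $l^*_q(\mathcal{I})\geq\text{mrk}_q(\mathbf{F}_x)$ is strictly lossy for precisely these digraphs. Concretely, take $\mathbf{F}^{(1)}_x=\mathbf{F}^{(2)}_x=\mathbf{I}_{2\times 2}$, $\mathbf{F}^{(\{1,2\})}_x=(1)$ (so $a=b=2$, $c=1$), with the fully-participated interactions $\mathcal{I}_1\rightarrow\mathcal{I}_2$, $\mathcal{I}_2\rightarrow\mathcal{I}_1$, $\mathcal{I}_2\rightarrow\mathcal{I}_{\{1,2\}}$, $\mathcal{I}_{\{1,2\}}\rightarrow\mathcal{I}_1$. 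The theorem asserts $l^*_q(\mathcal{I})=4$, yet $\mathbf{F}_x$ admits the rank-$3$ completion with rows $(1,0,1,0,0)$, $(0,1,0,1,0)$, $(1,0,1,0,0)$, $(0,1,0,1,0)$, $(0,0,0,0,1)$ (the single-sender code $\mathbf{x}_1+\mathbf{x}_3,\ \mathbf{x}_2+\mathbf{x}_4,\ \mathbf{x}_5$, which is infeasible for two senders since neither sender holds both $\mathbf{x}_1$ and $\mathbf{x}_3$). So your chain $l^*_q(\mathcal{I})\geq\text{mrk}_q(\mathbf{F}_x)\geq\max$ of submatrix minranks stops at $3$ and can never certify the required term $a+b=4$: that term is an intrinsically two-sender bound, arising because $\mathcal{S}_1$ cannot encode across $\mathbf{x}_{\mathcal{P}_2}$ and vice versa, which is exactly what Lemma \ref{noP12} captures. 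This is why the paper proves the lower bound differently: it adds fully-participated interactions to $\mathcal{I}$ so as to obtain problems in Cases II-C and II-D, invokes Theorem \ref{thcase2c} and Corollary \ref{corcase2d} (whose proofs rest on Lemma \ref{noP12}), and uses the fact that adding side information cannot increase $l^*_q$; the two resulting bounds $b+\max\{a,c\}$ and $a+\max\{b,c\}$ together give $\max\{a+b,a+c,b+c\}$. Your lower-bound route (Lemmas \ref{lmSCC} and \ref{minrk} applied to the full fitting matrix) is valid, and pleasantly self-contained, for the Case II-E digraphs that are pure $3$-cycles, but to cover all of Case II-E the missing $a+b$ term must come from a two-sender argument of the Lemma \ref{noP12} type.
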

\begin{proof}
The proof follows on similar lines as that of Theorem $6$ in \cite{CVBSR1}. We first provide an upper bound based on code construction and then provide a lower bound.

Let ${\bf{c}}^{(\mathcal{A})}$ be an optimal scalar linear code for $\mathcal{I}_{\mathcal{A}}$, for any non-empty $\mathcal{A} \subseteq \{1,2\}$.
Consider the case when $\text{mrk}_{q}({\bf{F}}^{(\{1,2\})}_x) \leq \min\{\text{mrk}_{q}({\bf{F}}^{(1)}_x),\text{mrk}_{q}({\bf{F}}_x^{(2)})\}$. If $\mathcal{S}_1$ transmits ${\bf{c}}^{(1)}+{\bf{c}}^{(\{1,2\})}$, and $\mathcal{S}_2$ transmits ${\bf{c}}^{(2)}+{\bf{c}}^{(\{1,2\})}$, we obtain the optimal code length given in the theorem. Consider the case when $\text{mrk}_{q}({\bf{F}}^{(j)}_x) \leq \min\{\text{mrk}_{q}({\bf{F}}^{(j^c)}_x),\text{mrk}_{q}({\bf{F}}_x^{(\{1,2\})})\}$, for any $j \in \{1,2\}$, $j^c=\{1,2\} \setminus j$. If $\mathcal{S}_j$ transmits ${\bf{c}}^{(j)}+{\bf{c}}^{(\{1,2\})}[1:\text{mrk}_{q}({\bf{F}}^{(j)}_x)]$, and $\mathcal{S}_{j^c}$ transmits ${\bf{c}}^{(j^c)}+{\bf{c}}^{(\{1,2\})}[1:\text{mrk}_{q}({\bf{F}}^{(j)}_x)]$,  ${\bf{c}}^{(\{1,2\})}[\text{mrk}_{q}({\bf{F}}^{(j)}_x)+1:\text{mrk}_{q}({\bf{F}}^{(\{1,2\})}_x)]$, we obtain the optimal code length given in the theorem. Decodability at receivers is on similar lines as that given in Theorem $6$ in \cite{CVBSR1}. Now we provide a matching lower bound.

Consider any TGICP $\mathcal{I}$ with all existing interactions being fully-participated and belonging to Case II-E. We obtain $(i)$ a TGICP $\mathcal{I}'$ with all existing interactions being fully-participated, and whose interaction digraph is either $\mathcal{H}_{44}$, or $\mathcal{H}_{45}$, and $(ii)$ a TGICP $\mathcal{I}''$ with all existing interactions being fully-participated, and whose interaction digraph is either $\mathcal{H}_{56}$, or $\mathcal{H}_{57}$, by adding appropriate fully-participated interactions among the sub-problems of $\mathcal{I}$. Hence, we have $l^*_q(\mathcal{I}) \geq l^*_q(\mathcal{I}'),l^*_q(\mathcal{I}'')$. Thus combining the results of Lemma \ref{noP12}, Theorem \ref{thcase2c}, and Corollary \ref{corcase2d}, we have the lower bound equal to $l^*_q(\mathcal{I})$ as stated in this theorem. 
\end{proof}

\begin{thm}
	Consider any TGICP  $\mathcal{I}$ belonging to Case II-E with the interaction digraph being  $\mathcal{H}_k$, where $k \in \{58,59,60\}$. Let $j \in \{1,2\}$, such that the interaction $\mathcal{I}_{\{1,2\}} \rightarrow \mathcal{I}_{j}$ exists, and   $\text{mrk}_{q}({\bf{F}}^{(\{1,2\})}_x) \geq \text{mrk}_{q}({\bf{F}}^{(j)}_x)$. Let the SGICP $\mathcal{\tilde{I}}$ given by the partitioned fitting matrix ${\bf{\tilde{F}}}_x$ in (\ref{eqcase2d2}) be a joint extension of $\mathcal{I}_1$ and $\mathcal{I}_{2}$ given by Theorem \ref{th1}, with optimal completions ${\bf{F}}^{(1)}$ and ${\bf{F}}^{(2)}$. Let $r_{\mathcal{A}}=\text{mrk}_q({\bf{F}}_x^{(\mathcal{A})})$, for non-empty $\mathcal{A} \subseteq \{1,2\}$. 
	\begin{equation}
	{\bf{\tilde{F}}}_x =
	\left(
	\begin{array}{c|c} 
	{\bf{F}}_x^{(1)}  &  {\bf{A}}^{(1,2)}_{x_0}\\
	\hline
	{\bf{A}}^{(2,1)}_{x_0} &  {\bf{F}}_x^{(2)}
	\end{array}
	\right).
	\label{eqcase2d2}
	\end{equation}
	 Let ${\bf{F}}^{(\{1,2\})} \approx {\bf{F}}^{(\{1,2\})}_x$ be such that the first $r_{\{1,2\}}$ rows of ${\bf{F}}^{(\{1,2\})}$ span $\langle {\bf{F}}^{(\{1,2\})} \rangle$.  Let ${\bf{A}}^{(\{1,2\},j)}_{x_0}$ be any matrix such that ${\bf{A}}^{(\{1,2\},j)} \approx {\bf{A}}^{(\{1,2\},j)}_{x_0}$ where  
	${\bf{A}}^{(\{1,2\},j)}=\left( \begin{array}{c}  {\bf{\hat{F}}}^{(j)} \\
	\hline
	{\bf{P}}^{(\{1,2\})}{\bf{\hat{F}}}^{(j)}
	\end{array} \right)$, where 
	${\bf{\hat{F}}}^{(j)} =
	\left( \begin{array}{c} {\bf{F}}^{(j)}_{[r_j]}\\
	\hline
	{\bf{0}}_{(r_{\{1,2\}}-r_j) \times m_j}\\
	\end{array}
	\right)$, and ${\bf{P}}^{(\{1,2\})}$ be the $(n_{\{1,2\}}-r_{\{1,2\}}) \times r_{\{1,2\}}$ matrix such that the last $(n_{\{1,2\}}-r_{\{1,2\}})$ rows of ${\bf{F}}^{(\{1,2\})}$ is given by  ${\bf{P}}^{(\{1,2\})}{\bf{F}}^{(\{1,2\})}_{[r_{\{1,2\}}]}$. Then we have, $(i)~$ $l_{q}^*(\mathcal{I}) \leq \text{mrk}_{q}({\bf{F}}_{x}^{(\{1,2\})})+\max\{\text{mrk}_{q}({\bf{F}}_{x}^{(j^c)}),\text{mrk}_{q}({\bf{F}}_{x}^{(\{1,2\})})\}$.
	If $\max\{\text{mrk}_{q}({\bf{F}}_{x}^{(j^c)}),\text{mrk}_{q}({\bf{F}}_{x}^{(\{1,2\})})\}=\text{mrk}_{q}({\bf{F}}_{x}^{(j^c)})$, then $(ii)$ $l_{q}^*(\mathcal{I})=\text{mrk}_{q}({\bf{F}}_{x}^{(j^c)})+\text{mrk}_{q}({\bf{F}}_{x}^{(\{1,2\})})$.
	\label{thcase2e2}
\end{thm}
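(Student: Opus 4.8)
The plan is to treat the two parts separately: the upper bound in $(i)$ by an explicit two-sender code construction, and, under the extra hypothesis of $(ii)$, a matching lower bound extracted from a block-triangular submatrix of $\mathbf{F}_x$. The governing idea for the construction is to have the two senders encode the shared messages $\mathbf{x}_{\mathcal{P}_{\{1,2\}}}$ in a compatible (aligned) way, so that interference in the common messages can be cancelled across the two orthogonal transmissions.

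For the upper bound, write $j^c=\{1,2\}\setminus j$ and fix the completions $\mathbf{F}^{(1)},\mathbf{F}^{(2)},\mathbf{F}^{(\{1,2\})}$ from the statement. I would let $\mathcal{S}_j$ transmit the code with encoding matrix $\big(\mathbf{F}^{(\{1,2\})}_{[r_{\{1,2\}}]}\mid \hat{\mathbf{F}}^{(j)}\big)$ over $\mathbf{x}_{\mathcal{M}_j}$. Since $r_{\{1,2\}}\ge r_j$ and $\mathbf{A}^{(\{1,2\},j)}$ has exactly the form prescribed by Theorem \ref{th1} with $\mathcal{I}_{\{1,2\}}$ as the dominant sub-problem, this is precisely the joint-extension code of $\mathcal{I}_{\{1,2\}}$ and $\mathcal{I}_j$ and has length $r_{\{1,2\}}$. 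I would let $\mathcal{S}_{j^c}$ transmit the joint-extension code of $\mathcal{I}_{j^c}$ and $\mathcal{I}_{\{1,2\}}$ (via Theorem \ref{th1}/Corollary \ref{corth1}) of length $\max\{r_{j^c},r_{\{1,2\}}\}$, arranged so that its top $r_{\{1,2\}}$ coordinates encode $\mathbf{x}_{\mathcal{P}_{\{1,2\}}}$ through the same matrix $\mathbf{F}^{(\{1,2\})}_{[r_{\{1,2\}}]}$ used by $\mathcal{S}_j$. The aggregate length is then $r_{\{1,2\}}+\max\{r_{j^c},r_{\{1,2\}}\}$, which is the bound in $(i)$.

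Decodability is the crux, and I expect it to be the main obstacle. Receivers in $\mathcal{I}_{\{1,2\}}$ and $\mathcal{I}_{j^c}$ would read their demands off $\mathcal{S}_{j^c}$'s codeword, using the guarantee $\mathbf{D}\mathbf{G}\approx\mathbf{F}_x$ of Theorem \ref{th1} to cancel the off-diagonal interference with their own side information. The delicate case is a receiver in $\mathcal{I}_j$: to recover $\hat{\mathbf{F}}^{(j)}\mathbf{x}_{\mathcal{P}_j}$ it must first strip the term $\mathbf{F}^{(\{1,2\})}_{[r_{\{1,2\}}]}\mathbf{x}_{\mathcal{P}_{\{1,2\}}}$ from $\mathcal{S}_j$'s codeword. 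It obtains this term from the top $r_{\{1,2\}}$ coordinates of $\mathcal{S}_{j^c}$'s codeword by subtracting $\mathbf{F}^{(j^c)}_{[r_{\{1,2\}}]}\mathbf{x}_{\mathcal{P}_{j^c}}$, and it can reconstruct the latter only because the interaction $\mathcal{I}_j\to\mathcal{I}_{j^c}$ is governed by the joint extension $\tilde{\mathcal{I}}$ of (\ref{eqcase2d2}): the prescribed form of $\mathbf{A}^{(j,j^c)}$ is exactly what lets each $\mathcal{I}_j$-receiver recover the needed $\mathcal{P}_{j^c}$-combination from its side information. Thus the three pairwise joint-extension conditions---on $(\mathcal{I}_{\{1,2\}},\mathcal{I}_j)$, on $(\mathcal{I}_{j^c},\mathcal{I}_{\{1,2\}})$, and on $(\mathcal{I}_1,\mathcal{I}_2)$---interlock, and verifying that the alignment is simultaneously consistent for every receiver (noting that any extra side information present in $\mathcal{H}_k$ but unused in a given cancellation step can only help) is the step demanding the most care.

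For $(ii)$ I would supply the matching lower bound $l^*_q(\mathcal{I})\ge r_{j^c}+r_{\{1,2\}}$. The interaction pattern of $\mathcal{H}_k$, $k\in\{58,59,60\}$, forces one of the two interactions between $\mathcal{I}_{j^c}$ and $\mathcal{I}_{\{1,2\}}$ to be absent, so restricting $\mathbf{F}_x$ to the rows and columns of $\mathcal{I}_{j^c}$ and $\mathcal{I}_{\{1,2\}}$ yields a block-triangular fitting matrix of the form (\ref{eqbup}); by Lemma \ref{lmSCC} its minrank equals $r_{j^c}+r_{\{1,2\}}$. Mirroring the lower-bound argument of Theorem \ref{thcase2c}, the two-sender optimum dominates the single-sender minrank of $\mathbf{F}_x$, which by Lemma \ref{minrk} dominates the minrank of this submatrix; hence $l^*_q(\mathcal{I})\ge r_{j^c}+r_{\{1,2\}}$. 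Under the hypothesis $r_{j^c}\ge r_{\{1,2\}}$ this equals the upper bound from $(i)$, giving the claimed equality in $(ii)$.
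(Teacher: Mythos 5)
Your transmissions and final bounds coincide with the paper's, but the decodability verification of the upper bound in $(i)$ --- the step you yourself call the crux --- assigns receivers to the wrong codewords, and as written it fails. In the digraphs $\mathcal{H}_{58},\mathcal{H}_{59},\mathcal{H}_{60}$ the vertex $\{1,2\}$ has no incoming edges, so receivers in $\mathcal{I}_1$ and $\mathcal{I}_2$ have \emph{no} side information in ${\bf{x}}_{\mathcal{P}_{\{1,2\}}}$. Hence a receiver in $\mathcal{I}_{j^c}$ cannot ``read its demand off $\mathcal{S}_{j^c}$'s codeword'': that codeword is ${\bf{c}}^{(\{1,2\})}+{\bf{c}}^{(j^c)}$ and the receiver has no way to strip ${\bf{c}}^{(\{1,2\})}={\bf{F}}^{(\{1,2\})}_{[r_{\{1,2\}}]}{\bf{x}}_{\mathcal{P}_{\{1,2\}}}$ from a single transmission. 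Likewise a receiver in $\mathcal{I}_{\{1,2\}}$ cannot in general decode from $\mathcal{S}_{j^c}$'s codeword: the theorem places no structure whatsoever on ${\bf{A}}^{(\{1,2\},j^c)}_{x_0}$ (that interaction may even be absent), and the ``joint-extension code of $\mathcal{I}_{j^c}$ and $\mathcal{I}_{\{1,2\}}$'' you invoke is not hypothesized and cannot hold here, since ${\bf{A}}^{(j^c,\{1,2\})}_{x_0}={\bf{0}}$ in these digraphs whereas Theorem \ref{th1} would require that block to admit a generally nonzero completion built from rows of ${\bf{F}}^{(\{1,2\})}$.

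The correct assignment (the paper's) is the reverse of yours. Receivers in $\mathcal{I}_{\{1,2\}}$ decode from $\mathcal{S}_{j}$'s transmission alone; this is exactly where the hypothesized structure of ${\bf{A}}^{(\{1,2\},j)}_{x_0}$ is consumed, by verifying via Lemma \ref{dgmatrix} that ${\bf{D}}({\bf{F}}^{(\{1,2\})}_{[r_{\{1,2\}}]}|{\bf{\hat{F}}}^{(j)})$ completes $({\bf{F}}^{(\{1,2\})}_x|{\bf{A}}^{(\{1,2\},j)}_{x_0})$ with ${\bf{D}}=(({\bf{I}}_{r_{\{1,2\}} \times r_{\{1,2\}}})^T|({\bf{P}}^{(\{1,2\})})^T)^T$. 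Then \emph{all} receivers in $\mathcal{I}_1\cup\mathcal{I}_2$ (not only those in $\mathcal{I}_j$) subtract the two received codewords: the common ${\bf{c}}^{(\{1,2\})}$ cancels, and what remains, ${\bf{c}}^{(j)}-{\bf{c}}^{(j^c)}$, is (up to sign) the Theorem \ref{th1} code ${\bf{G}}_E{\bf{x}}$ for the joint extension $\mathcal{\tilde{I}}$ of (\ref{eqcase2d2}), hence decodable by every such receiver. Note also that your account of the $\mathcal{I}_j$-receivers overstates what the joint extension provides: a receiver cannot ``reconstruct'' ${\bf{F}}^{(j^c)}_{[r_{\{1,2\}}]}{\bf{x}}_{\mathcal{P}_{j^c}}$ from its side information; the joint-extension structure only guarantees, row by row, that the residual interference in the difference code lies within that receiver's own side information.

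Your lower bound for $(ii)$, by contrast, is correct and takes a genuinely different route from the paper. You extract the fitting submatrix on $(\mathcal{I}_{j^c},\mathcal{I}_{\{1,2\}})$, which is block-triangular because ${\bf{A}}^{(j^c,\{1,2\})}_{x_0}={\bf{0}}$, and chain $l^*_q(\mathcal{I}) \geq \text{mrk}_q({\bf{F}}_x) \geq r_{j^c}+r_{\{1,2\}}$ using Lemma \ref{minrk} and Lemma \ref{lmSCC}. The paper instead enlarges the side information to obtain a TGICP $\mathcal{I}''$ with fully-participated interactions between $\mathcal{I}_j$ and $\mathcal{I}_{\{1,2\}}$, which lands in Case II-C or II-D, and then invokes Theorem \ref{thcase2c} or Corollary \ref{corcase2d}. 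Your version is more self-contained, needing only the two lemmas rather than the earlier theorems; either way the bound matches the construction once $r_{j^c}\geq r_{\{1,2\}}$.
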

\begin{proof}
	Consider the sub-problem with fitting matrix ${\bf{F}}'_x=( {\bf{F}}^{(\{1,2\})}_x | {\bf{A}}^{(\{1,2\},j)}_{x_0} )$. Consider the optimal scalar linear code ${\bf{c}}^{(\mathcal{A})}={\bf{F}}^{(\mathcal{A})}_{[r_{\mathcal{A}}]} {\bf{x}}_{\mathcal{P}_{\mathcal{A}}}$, of sub-problem  $\mathcal{I}_{\mathcal{A}}$, for non-empty $\mathcal{A} \subseteq \{1,2\}$. Let ${\bf{D}}=(({\bf{I}}_{r_{\{1,2\}} \times r_{\{1,2\}}})^T|({\bf{P}}^{(\{1,2\} )})^T)^T$. It can be easily verified that ${\bf{D}} \times ({\bf{F}}^{(\{1,2\})}_{[r_{\{1,2\}}]} | {\bf{\hat{F}}}^{(j)}) \approx {\bf{F}}'_x$. Hence  using Lemma \ref{dgmatrix}, the code ${\bf{c}}^{(\{1,2\})}+{\bf{c}}^{(j)}$ sent by $\mathcal{S}_j$ is a valid code for the sub-problem with fitting matrix  ${\bf{F}}'_x$. This satisfies all receivers in $
	\mathcal{I}_{\{1,2\}}$. Consider the code ${\bf{c}}^{(\{1,2\})}+{\bf{c}}^{(j^c)}$ sent by  $\mathcal{S}_{j^c}$, where $j^c = \{1,2\} \setminus j$. All the receivers in $\mathcal{I}_{1}$ and $\mathcal{I}_{2}$ are satisfied by using both the transmissions to get ${\bf{c}}^{(j)}-{\bf{c}}^{(j^c)}=({\bf{c}}^{(\{1,2\})}+{\bf{c}}^{(j)})-({\bf{c}}^{(\{1,2\})}+{\bf{c}}^{(j^c)})$. This is a valid code for the sub-problem with fitting matrix ${\bf{\tilde{F}}}_x$. This establishes $(i)$.
	
	We now prove $(ii)$. Consider the TGICP $\mathcal{I}''$ obtained from $\mathcal{I}$ by having fully-paticipated interactions 
	between $\mathcal{I}_{j}$ and $\mathcal{I}_{\{1,2\}}$. Then, we have $l^*_q(\mathcal{I}) \geq l^*_q(\mathcal{I}'')=\text{mrk}_{q}({\bf{F}}_{x}^{(j^c)})+\text{mrk}_{q}({\bf{F}}_{x}^{(\{1,2\})})$ using the result for Case II-C or II-D depending on $j$. The matching upper bound is given by $(i)$.
\end{proof}

We now illustrate Theorem \ref{thcase2e2} with an example.

\begin{exmp}
	Consider the partitioned fitting matrix ${\bf{F}}_x$ related to the TGICP $\mathcal{I}$ as given below. We choose $\mathbb{F}_2$. Let ${\bf{x}}_{\mathcal{M}_1}=\{{\bf{x}}_1,{\bf{x}}_2,{\bf{x}}_3,{\bf{x}}_4,{\bf{x}}_5,{\bf{x}}_8,{\bf{x}}_9\}$, ${\bf{x}}_{\mathcal{M}_2}=\{{\bf{x}}_6,{\bf{x}}_7,{\bf{x}}_8,{\bf{x}}_9\}$.
	\[ {\bf{F}}_{x} = \left(
	\begin{array}{ccccc|cc|cc}
	1 & x & x & 0 & 0 & x & x & 0 & 0\\
	0 & 1 & 0 & x & 0 & 0 & 0 & 0 & 0\\
	0 & 0 & 1 & 0 & x & 0 & 0 & 0 & 0\\
	x & 0 & 0 & 1 & x & x & x & 0 & 0\\
	x & x & 0 & 0 & 1 & x & x & 0 & 0\\
	\hline
	x & x & x & 0 & 0 & 1 & x & 0 & 0\\
	x & x & x & x & x & x & 1 & 0 & 0\\
	\hline
	x & 0 & x & x & 0 & x & x & 1 & x\\
	0 & x & 0 & 0 & x & 0 & 0 & 0 & 1\\
	\end{array} \right).
	\]
	Note that $\text{mrk}_q({\bf{F}}^{(1)}_x)=3$, $\text{mrk}_q({\bf{F}}^{(2)}_x)=1$, $\text{mrk}_q({\bf{F}}^{(\{1,2\})}_x)=2$, and the interaction digraph is $\mathcal{H}_{59}$. As $\text{mrk}_q({\bf{F}}^{(\{1,2\})}_x) \geq \text{mrk}_q({\bf{F}}^{(2)}_x)$, and $\text{mrk}_q({\bf{F}}^{(\{1,2\})}_x) < \text{mrk}_q({\bf{F}}^{(1)}_x)$, we have $j=2$.
	With the completions of the fitting matrices of the sub-problems of the TGICP $\mathcal{I}$ given below, we see that the SGICP with the  fitting matrix ${\bf{\tilde{F}}}_x$ (given in Theorem \ref{thcase2e2}) is a joint extension given in Theorem \ref{th1}. 
	\[	
	{\bf{F}}^{(1)} = \left(
	\begin{array}{ccccc}
	1 & 1 & 1 & 0 & 0\\
	0 & 1 & 0 & 1 & 0\\
	0 & 0 & 1 & 0 & 1\\	
	\hline
	1 & 0 & 0 & 1 & 1\\
	1 & 1 & 0 & 0 & 1\\
	\end{array} \right),
	{\bf{F}}^{(\{1,2\})} = \left(
	\begin{array}{cc}
	1 & 0\\
	0 & 1\\
	\end{array} \right).
	\]		
	\[	
	{\bf{F}}^{(2)} = \left(
	\begin{array}{cc}
	1 & 1 \\
	\hline
	1 & 1 \\
	\end{array} \right).
	\]
	Note also that ${\bf{\hat{F}}}^{(2)}=\left( \begin{array}{cc}
	1 & 1\\
	0 & 0\\
	\end{array} \right)$, and ${\bf{P}}^{\{1,2\}}$ is an empty matrix and all the conditions in Theorem \ref{thcase2e2} are satisfied. 
 Hence, $l^*_q(\mathcal{I})=5$. An optimal scalar linear code given according to Theorem \ref{thcase2e2} is : $\mathcal{S}_1$ sends $(~{\bf{x}}_1+{\bf{x}}_2+{\bf{x}}_3, ~{\bf{x}}_2+{\bf{x}}_4,~{\bf{x}}_3+{\bf{x}}_5~)$, and $\mathcal{S}_2$ sends $({\bf{x}}_6+{\bf{x}}_7+{\bf{x}}_8,~{\bf{x}}_9)$.	
\end{exmp}

The following corollaries establish the optimal scalar linear code lengths for some classes of the TGICP with partially-participated interactions belonging to Case II-E. The proofs follow on similar lines as that of Theorem \ref{thcase2e2}.

\begin{cor}
	Consider the TUICP $\mathcal{I}$ belonging to Case II-E with the interaction digraph being $\mathcal{H}_{61}$. Let $r_{\mathcal{A}}=\text{mrk}_q({\bf{F}}_x^{(\mathcal{A})})$, for non-empty $\mathcal{A} \subseteq \{1,2\}$, and $r_2 \geq r_{\{1,2\}} \geq r_1$. Also, the SGICP $\mathcal{\tilde{I}}$ given by the partitioned fitting matrix ${\bf{\tilde{F}}}_x$ in (\ref{eqcase2d2}) be a joint extension of $\mathcal{I}_1$ and $\mathcal{I}_{2}$ given by Theorem \ref{th1}, with optimal completions ${\bf{F}}^{(1)}$ and ${\bf{F}}^{(2)}$.
	Let ${\bf{F}}^{(\{1,2\})} \approx {\bf{F}}^{(\{1,2\})}_x$ be such that the first $r_{\{1,2\}}$ rows of ${\bf{F}}^{(\{1,2\})}$ span $\langle {\bf{F}}^{(\{1,2\})} \rangle$.  Let ${\bf{A}}^{(\{1,2\},1)}_{x_0}$ be any matrix such that ${\bf{A}}^{(\{1,2\},1)} \approx {\bf{A}}^{(\{1,2\},1)}_{x_0}$ where  
	${\bf{A}}^{(\{1,2\},1)}=\left( \begin{array}{c}  {\bf{\hat{F}}}^{(1)} \\
	\hline
	{\bf{P}}^{(\{1,2\})}{\bf{\hat{F}}}^{(1)}
	\end{array} \right)$, where 
	${\bf{\hat{F}}}^{(1)} =
	\left( \begin{array}{c} {\bf{F}}^{(1)}_{[r_1]}\\
	\hline
	{\bf{0}}_{(r_{\{1,2\}}-r_1) \times m_1}\\
	\end{array}
	\right)$, and ${\bf{P}}^{(\{1,2\})}$ be the $(n_{\{1,2\}}-r_{\{1,2\}}) \times r_{\{1,2\}}$ matrix such that the last $(n_{\{1,2\}}-r_{\{1,2\}})$ rows of ${\bf{F}}^{(\{1,2\})}$ is given by  ${\bf{P}}^{(\{1,2\})}{\bf{F}}^{(\{1,2\})}_{[r_{\{1,2\}}]}$. Let ${\bf{A}}^{(2,\{1,2\})}_{x_0}$ be any matrix such that ${\bf{A}}^{(2,\{1,2\})} \approx {\bf{A}}^{(2,\{1,2\})}_{x_0}$ where  
	${\bf{A}}^{(2,\{1,2\})}=\left( \begin{array}{c}  {\bf{\hat{F}}}^{(\{1,2\})} \\
	\hline
	{\bf{P}}^{(2)}{\bf{\hat{F}}}^{(\{1,2\})}
	\end{array} \right)$, where 
	${\bf{\hat{F}}}^{(\{1,2\})} =
	\left( \begin{array}{c} {\bf{F}}^{(\{1,2\})}_{[r_{\{1,2\}}]}\\
	\hline
	{\bf{0}}_{(r_2-r_{\{1,2\}}) \times m_{\{1,2\}}}\\
	\end{array}
	\right)$, and ${\bf{P}}^{(2)}$ be the $(n_{2}-r_{2}) \times r_{2}$ matrix such that the last $(n_{2}-r_{2})$ rows of ${\bf{F}}^{(2)}$ is given by  ${\bf{P}}^{(2)}{\bf{F}}^{(2)}_{[r_{2}]}$. Then we have, $l_{q}^*(\mathcal{I}) = \text{mrk}_{q}({\bf{F}}^{(2)}_x)+\text{mrk}_{q}({\bf{F}}^{(\{1,2\})}_x)$.\\
	\label{corcase2e61}
\end{cor}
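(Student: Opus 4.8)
The plan is to establish the two inequalities $l^*_q(\mathcal{I}) \le r_2 + r_{\{1,2\}}$ and $l^*_q(\mathcal{I}) \ge r_2 + r_{\{1,2\}}$ separately, following the template of the proof of Theorem \ref{thcase2e2} with the index $j$ there specialised to $j=1$ (the sub-problem targeted by $\mathcal{I}_{\{1,2\}}$, for which $r_{\{1,2\}} \ge r_1$). Throughout I would write ${\bf{c}}^{(\mathcal{A})} = {\bf{F}}^{(\mathcal{A})}_{[r_{\mathcal{A}}]} {\bf{x}}_{\mathcal{P}_{\mathcal{A}}}$ for the optimal scalar linear code of $\mathcal{I}_{\mathcal{A}}$, for each non-empty $\mathcal{A} \subseteq \{1,2\}$, and use the ordering $r_1 \le r_{\{1,2\}} \le r_2$.

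For the upper bound I would exploit the three joint-extension structures supplied by the hypotheses. Since $r_2 \ge r_{\{1,2\}}$ and ${\bf{A}}^{(2,\{1,2\})}_{x_0}$ is completed as in Theorem \ref{th1}, the matrix $({\bf{F}}^{(2)}_x \,|\, {\bf{A}}^{(2,\{1,2\})}_{x_0})$ is a joint extension of $\mathcal{I}_2$ and $\mathcal{I}_{\{1,2\}}$ whose code ${\bf{c}}^{(2)} + {\bf{c}}^{(\{1,2\})}$ has length $r_2$; since $r_{\{1,2\}} \ge r_1$ and ${\bf{A}}^{(\{1,2\},1)}_{x_0}$ is completed as in Theorem \ref{th1}, the matrix $({\bf{F}}^{(\{1,2\})}_x \,|\, {\bf{A}}^{(\{1,2\},1)}_{x_0})$ is a joint extension of $\mathcal{I}_{\{1,2\}}$ and $\mathcal{I}_1$ whose code ${\bf{c}}^{(\{1,2\})} + {\bf{c}}^{(1)}$ has length $r_{\{1,2\}}$; and $\mathcal{\tilde{I}}$ in (\ref{eqcase2d2}) is a joint extension of $\mathcal{I}_1$ and $\mathcal{I}_2$. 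I would let $\mathcal{S}_1$ transmit $T_1 = {\bf{c}}^{(1)} + {\bf{c}}^{(\{1,2\})}$ and $\mathcal{S}_2$ transmit $T_2 = {\bf{c}}^{(2)} + {\bf{c}}^{(\{1,2\})}$, of aggregate length $r_{\{1,2\}} + r_2$, each sender holding every message in its own codeword. Each receiver group then decodes via the matching structure: receivers of $\mathcal{I}_{\{1,2\}}$ from $T_1$ alone (the second extension, using side information in ${\bf{x}}_{\mathcal{P}_1}$); receivers of $\mathcal{I}_2$ from $T_2$ alone (the first extension, using side information in ${\bf{x}}_{\mathcal{P}_{\{1,2\}}}$); and receivers of $\mathcal{I}_1$ (and, if preferred, $\mathcal{I}_2$) from $T_1 - T_2$. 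The essential computation is that, under the zero-padding convention, ${\bf{c}}^{(\{1,2\})}$ occupies the most-significant positions of both codewords, so $T_1 - T_2 = {\bf{c}}^{(1)} - {\bf{c}}^{(2)}$ is exactly a joint-extension code for $\mathcal{\tilde{I}}$ and hence satisfies $\mathcal{I}_1$ and $\mathcal{I}_2$. This yields $l^*_q(\mathcal{I}) \le r_2 + r_{\{1,2\}}$.

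For the matching lower bound I would enlarge the side information of $\mathcal{I}$ to obtain a TGICP $\mathcal{I}''$ by making both interactions between $\mathcal{I}_1$ and $\mathcal{I}_{\{1,2\}}$ fully-participated (replacing the corresponding blocks by ${\bf{X}}$), while leaving the block between $\mathcal{I}_{\{1,2\}}$ and $\mathcal{I}_2$ that is zero under $\mathcal{H}_{61}$ untouched, so that $\mathcal{I}''$ lies in Case II-C. By Corollary \ref{corth1} the fully-participated extension of $\mathcal{I}_1$ and $\mathcal{I}_{\{1,2\}}$ automatically meets the hypotheses of Theorem \ref{thcase2c}, which therefore gives $l^*_q(\mathcal{I}'') = r_2 + \max\{r_1, r_{\{1,2\}}\} = r_2 + r_{\{1,2\}}$, using $r_{\{1,2\}} \ge r_1$. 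Since each receiver of $\mathcal{I}$ has side information contained in that of the corresponding receiver of $\mathcal{I}''$, the same monotonicity of the optimal scalar linear length under enlargement of side information used in the proof of Theorem \ref{thcase2c} gives $l^*_q(\mathcal{I}) \ge l^*_q(\mathcal{I}'') = r_2 + r_{\{1,2\}}$, completing the argument.

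The step I expect to be the main obstacle is the decodability bookkeeping in the upper bound: one must verify that every receiver can isolate its demand from the transmission(s) it uses despite the ${\bf{c}}^{(\{1,2\})}$ term shared by $T_1$ and $T_2$, and in particular that the cancellation $T_1 - T_2 = {\bf{c}}^{(1)} - {\bf{c}}^{(2)}$ is exact under the length-matching convention (which is precisely what forces ${\bf{c}}^{(\{1,2\})}$ into the common most-significant positions, and is why all three joint-extension completions are prescribed in the hypotheses). A secondary care point is confirming that the enlargement to $\mathcal{I}''$ keeps the problem in Case II-C rather than another sub-case of Case II, which is where the specific zero-block pattern of $\mathcal{H}_{61}$ (the absence of an interaction $\mathcal{I}_{\{1,2\}} \rightarrow \mathcal{I}_2$) is used.
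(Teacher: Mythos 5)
Your upper bound follows exactly the route the paper intends (the paper proves this corollary only by the remark that it ``follows on similar lines as Theorem \ref{thcase2e2}''): $\mathcal{S}_1$ sends ${\bf{c}}^{(1)}+{\bf{c}}^{(\{1,2\})}$ of length $r_{\{1,2\}}$, valid for the sub-problem $({\bf{F}}^{(\{1,2\})}_x \,|\, {\bf{A}}^{(\{1,2\},1)}_{x_0})$ by the prescribed completion and Lemma \ref{dgmatrix}; $\mathcal{S}_2$ sends ${\bf{c}}^{(2)}+{\bf{c}}^{(\{1,2\})}$ of length $r_2$, valid for $({\bf{F}}^{(2)}_x \,|\, {\bf{A}}^{(2,\{1,2\})}_{x_0})$; and the difference of the two transmissions reduces to the joint-extension code of $\mathcal{\tilde{I}}$ for the receivers of $\mathcal{I}_1$. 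Your observation that the cancellation is exact because ${\bf{c}}^{(\{1,2\})}$ sits in the most significant positions of both codewords (a consequence of $r_1 \le r_{\{1,2\}} \le r_2$ and the prescribed block completions) is precisely the decodability point of the paper's construction, and the aggregate length is $r_2+r_{\{1,2\}}$ as required.

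The genuine problem is in your lower bound: the claim that $\mathcal{I}''$ ``lies in Case II-C'' is false. The digraph $\mathcal{H}_{61}$ contains the cycle $\mathcal{I}_1 \leftrightarrow \mathcal{I}_2$, and enlarging side information only adds interactions, so $\mathcal{I}''$ still contains that cycle and remains in Case II-E; Theorem \ref{thcase2c} therefore cannot be cited as a black box, since its hypothesis excludes such digraphs. (To be fair, the paper's own proof of Theorem \ref{thcase2e2} commits the same abuse when it invokes ``the result for Case II-C or II-D'' for its $\mathcal{I}''$.) The bound survives because the \emph{proof} of Theorem \ref{thcase2c}, rather than its statement, applies verbatim: neither its sub-problem lower bounds nor its code construction uses the absence of the $\mathcal{I}_1 \leftrightarrow \mathcal{I}_2$ cycle. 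In fact, for $\mathcal{H}_{61}$ your detour through $\mathcal{I}''$ is unnecessary: since ${\bf{A}}^{(\{1,2\},2)}_{x_0}={\bf{0}}$, the sub-problem of $\mathcal{I}$ itself on the receivers of $\mathcal{I}_2$ and $\mathcal{I}_{\{1,2\}}$ with message set ${\bf{x}}_{\mathcal{M}_2}$ has a block upper triangular fitting matrix, all of whose messages are available at $\mathcal{S}_2$, so it may be treated as an SGICP and Lemma \ref{lmSCC} gives $l^*_q(\mathcal{I}) \ge \text{mrk}_q({\bf{F}}^{(2)}_x)+\text{mrk}_q({\bf{F}}^{(\{1,2\})}_x)$ directly. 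Either repair closes the gap; as written, the step quoting Theorem \ref{thcase2c} is unjustified.
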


\begin{cor}
	Consider the TUICP $\mathcal{I}$ belonging to Case II-E with the interaction digraph being $\mathcal{H}_{62}$. Let $r_{\mathcal{A}}=\text{mrk}_q({\bf{F}}_x^{(\mathcal{A})})$, for non-empty $\mathcal{A} \subseteq \{1,2\}$, and $r_1 \geq r_{\{1,2\}} \geq r_2$. Also, the SGICP $\mathcal{\tilde{I}}$ given by the partitioned fitting matrix ${\bf{\tilde{F}}}_x$ in (\ref{eqcase2d2}) be a joint extension of $\mathcal{I}_1$ and $\mathcal{I}_{2}$ given by Theorem \ref{th1}, with optimal completions ${\bf{F}}^{(1)}$ and ${\bf{F}}^{(2)}$.
	Let ${\bf{F}}^{(\{1,2\})} \approx {\bf{F}}^{(\{1,2\})}_x$ be such that the first $r_{\{1,2\}}$ rows of ${\bf{F}}^{(\{1,2\})}$ span $\langle {\bf{F}}^{(\{1,2\})} \rangle$.  Let ${\bf{A}}^{(1,\{1,2\})}_{x_0}$ be any matrix such that ${\bf{A}}^{(1,\{1,2\})} \approx {\bf{A}}^{(1,\{1,2\})}_{x_0}$ where  
	${\bf{A}}^{(1,\{1,2\})}=\left( \begin{array}{c}  {\bf{\hat{F}}}^{(\{1,2\})} \\
	\hline
	{\bf{P}}^{(1)}{\bf{\hat{F}}}^{(\{1,2\})}
	\end{array} \right)$, where 
	${\bf{\hat{F}}}^{(\{1,2\})} =
	\left( \begin{array}{c} {\bf{F}}^{(\{1,2\})}_{[r_{\{1,2\}}]}\\
	\hline
	{\bf{0}}_{(r_1-r_{\{1,2\}}) \times m_{\{1,2\}}}\\
	\end{array}
	\right)$, and ${\bf{P}}^{(1)}$ be the $(n_{1}-r_{1}) \times r_{1}$ matrix such that the last $(n_{1}-r_{1})$ rows of ${\bf{F}}^{(1)}$ is given by  ${\bf{P}}^{(1)}{\bf{F}}^{(1)}_{[r_{1}]}$. Let ${\bf{A}}^{(\{1,2\},2)}_{x_0}$ be any matrix such that ${\bf{A}}^{(\{1,2\},2)} \approx {\bf{A}}^{(\{1,2\},2)}_{x_0}$ where  
	${\bf{A}}^{(\{1,2\},2)}=\left( \begin{array}{c}  {\bf{\hat{F}}}^{(2)} \\
	\hline
	{\bf{P}}^{(\{1,2\})}{\bf{\hat{F}}}^{(2)}
	\end{array} \right)$, where 
	${\bf{\hat{F}}}^{(2)} =
	\left( \begin{array}{c} {\bf{F}}^{(2)}_{[r_2]}\\
	\hline
	{\bf{0}}_{(r_{\{1,2\}}-r_2) \times m_2}\\
	\end{array}
	\right)$, and ${\bf{P}}^{(\{1,2\})}$ be the $(n_{\{1,2\}}-r_{\{1,2\}}) \times r_{\{1,2\}}$ matrix such that the last $(n_{\{1,2\}}-r_{\{1,2\}})$ rows of ${\bf{F}}^{(\{1,2\})}$ is given by  ${\bf{P}}^{(\{1,2\})}{\bf{F}}^{(\{1,2\})}_{[r_{\{1,2\}}]}$. Then we have, $l_{q}^*(\mathcal{I}) = \text{mrk}_{q}({\bf{F}}^{(1)}_x)+\text{mrk}_{q}({\bf{F}}^{(\{1,2\})}_x)$.\\
	\label{corcase2e62}
\end{cor}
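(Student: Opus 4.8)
The target value is $r_1+r_{\{1,2\}}$, where I abbreviate $r_{\mathcal{A}}=\text{mrk}_q({\bf{F}}_x^{(\mathcal{A})})$; the plan is to obtain it as matching upper and lower bounds, following the proof of Theorem \ref{thcase2e2} with the distinguished index taken to be $2$ (the interaction emanating from $\mathcal{I}_{\{1,2\}}$ is $\mathcal{I}_{\{1,2\}}\rightarrow\mathcal{I}_2$, and $r_{\{1,2\}}\geq r_2$). Throughout I fix the optimal scalar linear codes ${\bf{c}}^{(\mathcal{A})}={\bf{F}}^{(\mathcal{A})}_{[r_{\mathcal{A}}]}{\bf{x}}_{\mathcal{P}_{\mathcal{A}}}$ of the three sub-problems.

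For the upper bound I let $\mathcal{S}_1$ transmit ${\bf{c}}^{(\{1,2\})}+{\bf{c}}^{(1)}$ (of length $r_1$, since $r_1\geq r_{\{1,2\}}$) and $\mathcal{S}_2$ transmit ${\bf{c}}^{(\{1,2\})}+{\bf{c}}^{(2)}$ (of length $r_{\{1,2\}}$), for aggregate length $r_1+r_{\{1,2\}}$. Decodability reduces to three applications of Lemma \ref{dgmatrix}, each with a completion matrix obtained by stacking an identity on top of the appropriate ${\bf{P}}^{(\cdot)}$, exactly as in Theorem \ref{thcase2e2}. (a) The receivers of $\mathcal{I}_{\{1,2\}}$ decode from $\mathcal{S}_2$'s word alone, since $\binom{{\bf{I}}}{{\bf{P}}^{(\{1,2\})}}({\bf{F}}^{(\{1,2\})}_{[r_{\{1,2\}}]}\mid{\bf{\hat{F}}}^{(2)})\approx({\bf{F}}_x^{(\{1,2\})}\mid{\bf{A}}^{(\{1,2\},2)}_{x_0})$ by the prescribed form of ${\bf{A}}^{(\{1,2\},2)}$. (b) Symmetrically, the receivers of $\mathcal{I}_1$ decode from $\mathcal{S}_1$'s word alone, because the prescribed form of ${\bf{A}}^{(1,\{1,2\})}$ yields $\binom{{\bf{I}}}{{\bf{P}}^{(1)}}({\bf{F}}^{(1)}_{[r_1]}\mid{\bf{\hat{F}}}^{(\{1,2\})})\approx({\bf{F}}^{(1)}_x\mid{\bf{A}}^{(1,\{1,2\})}_{x_0})$. (c) The receivers of $\mathcal{I}_2$ subtract the two received words; after zero-padding $\mathcal{S}_2$'s word to length $r_1$ the two copies of ${\bf{c}}^{(\{1,2\})}$ cancel and what remains is the Theorem \ref{th1} codeword ${\bf{c}}^{(1)}-{\bf{c}}^{(2)}$ of $\mathcal{\tilde{I}}$, from which they recover their demand because $\mathcal{\tilde{I}}$ is a joint extension of $\mathcal{I}_1$ and $\mathcal{I}_2$.

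For the lower bound the simple sub-problem estimates (for instance $r_1+r_2$ from the $\mathcal{P}_1\cup\mathcal{P}_2$ part via Lemma \ref{noP12}, or $r_2+r_{\{1,2\}}$ from the block-triangular $\mathcal{P}_2\cup\mathcal{P}_{\{1,2\}}$ part via Lemma \ref{lmSCC}) fall short of $r_1+r_{\{1,2\}}$, so I argue by inflation instead. I pass to the instance $\mathcal{I}'''$ obtained by promoting every interaction of $\mathcal{I}$ to a fully-participated one. Enlarging side information cannot lengthen an optimal code, so $l^*_q(\mathcal{I})\geq l^*_q(\mathcal{I}''')$; and $\mathcal{I}'''$ is a Case II-E instance with all interactions fully-participated, whence the Case II-E theorem for fully-participated interactions (the first theorem of this subsection) gives $l^*_q(\mathcal{I}''')=\max\{r_1+r_2,\ r_1+r_{\{1,2\}},\ r_2+r_{\{1,2\}}\}$, which equals $r_1+r_{\{1,2\}}$ under $r_1\geq r_{\{1,2\}}\geq r_2$. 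This matches the construction and proves the claim. One may instead imitate Theorem \ref{thcase2e2}(ii) directly, completing only the interaction between $\mathcal{I}_2$ and $\mathcal{I}_{\{1,2\}}$ and invoking Corollary \ref{corcase2d}, provided one verifies the resulting instance is of Case II-D.

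The step I expect to be most delicate is the legitimacy of the lower bound rather than its arithmetic: I must confirm that promoting the interactions keeps the instance inside Case II-E (so that the cited theorem genuinely applies) and that $\mathcal{I}'''$ retains the three diagonal sub-problems $\mathcal{I}_1,\mathcal{I}_2,\mathcal{I}_{\{1,2\}}$ with unchanged minranks. On the construction side the only real bookkeeping is decodability check (c): checking that the length alignment by zero-padding leaves precisely ${\bf{c}}^{(1)}-{\bf{c}}^{(2)}$, and that the completions ${\bf{F}}^{(1)},{\bf{F}}^{(2)}$ together with ${\bf{P}}^{(1)},{\bf{P}}^{(2)}$ are mutually consistent with $\mathcal{\tilde{I}}$ being a joint extension in the sense of Theorem \ref{th1}.
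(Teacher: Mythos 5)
Your proof is correct and is in substance the proof the paper intends (the paper's stated justification for this corollary is only that it ``follows on similar lines as that of Theorem \ref{thcase2e2}''): your upper-bound construction --- $\mathcal{S}_1$ sending ${\bf{c}}^{(1)}+{\bf{c}}^{(\{1,2\})}$ of length $r_1$, $\mathcal{S}_2$ sending ${\bf{c}}^{(2)}+{\bf{c}}^{(\{1,2\})}$ of length $r_{\{1,2\}}$, with decodability verified via Lemma \ref{dgmatrix} through the identity-over-${\bf{P}}$ completions for $\mathcal{I}_1$ and $\mathcal{I}_{\{1,2\}}$ and the subtraction step giving the Theorem \ref{th1} codeword of $\mathcal{\tilde{I}}$ for $\mathcal{I}_2$ --- is exactly the paper's construction. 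The only place you deviate is the lower bound: the paper, imitating Theorem \ref{thcase2e2}$(ii)$, fully-participates only the interactions between $\mathcal{I}_2$ and $\mathcal{I}_{\{1,2\}}$ and invokes the Case II-D result (Corollary \ref{corcase2d}), whereas you promote \emph{all} existing interactions to fully-participated, so the interaction digraph stays $\mathcal{H}_{62}$, and you invoke the fully-participated Case II-E theorem. Both are monotonicity-plus-citation arguments (adding side information cannot increase $l^*_q$, and the diagonal sub-problems and their minranks are untouched), and both evaluate to $r_1+r_{\{1,2\}}$ under $r_1 \geq r_{\{1,2\}} \geq r_2$. Your variant buys a small convenience: the modified instance trivially remains in Case II-E, so you avoid verifying that adding the edge $\mathcal{I}_2 \rightarrow \mathcal{I}_{\{1,2\}}$ places the instance in Case II-D --- precisely the point you flag as delicate in your own alternative route. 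The cost is an extra level of indirection, since the fully-participated Case II-E theorem's lower bound is itself proved in the paper by reduction to the Case II-C/II-D results, so the two arguments ultimately rest on the same foundation.
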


We now illustrate Corollary \ref{corcase2e62} with an example.

\begin{exmp}
	Consider the partitioned fitting matrix ${\bf{F}}_x$ related to the TGICP $\mathcal{I}$ as given below. We choose $\mathbb{F}_2$. Let ${\bf{x}}_{\mathcal{M}_1}=\{{\bf{x}}_1,{\bf{x}}_2,{\bf{x}}_3,{\bf{x}}_4,{\bf{x}}_5,{\bf{x}}_8,{\bf{x}}_9\}$, ${\bf{x}}_{\mathcal{M}_2}=\{{\bf{x}}_6,{\bf{x}}_7,{\bf{x}}_8,{\bf{x}}_9\}$.
	\[ {\bf{F}}_{x} = \left(
	\begin{array}{ccccc|cc|cc}
	1 & x & x & 0 & 0 & x & x & x & 0\\
	0 & 1 & 0 & x & 0 & 0 & 0 & 0 & x\\
	0 & 0 & 1 & x & x & 0 & 0 & 0 & 0\\
	x & 0 & 0 & 1 & x & x & x & x & x\\
	x & x & 0 & 0 & 1 & x & x & x & x\\
	\hline
	x & x & x & 0 & 0 & 1 & x & 0 & 0\\
	x & x & x & x & x & x & 1 & 0 & 0\\
	\hline
	0 & 0 & 0 & 0 & 0 & x & x & 1 & x\\
	0 & 0 & 0 & 0 & 0 & 0 & 0 & 0 & 1\\
	\end{array} \right).
	\]
	Note that $\text{mrk}_q({\bf{F}}^{(1)}_x)=3$, $\text{mrk}_q({\bf{F}}^{(2)}_x)=1$, $\text{mrk}_q({\bf{F}}^{(\{1,2\})}_x)=2$, with $r_1 \geq r_{\{1,2\}} \geq r_2$, and the interaction digraph is $\mathcal{H}_{62}$. With the completions of the fitting matrices of the sub-problems of the TGICP $\mathcal{I}$ given below, we see that the SGICP with the  fitting matrix ${\bf{\tilde{F}}}_x$ (given in Corollary \ref{corcase2e62}) is a joint extension given in Theorem \ref{th1}. 
	\[	
	{\bf{F}}^{(1)} = \left(
	\begin{array}{ccccc}
	1 & 1 & 1 & 0 & 0\\
	0 & 1 & 0 & 1 & 0\\
	0 & 0 & 1 & 0 & 1\\	
	\hline
	1 & 0 & 0 & 1 & 1\\
	1 & 1 & 0 & 0 & 1\\
	\end{array} \right),
	{\bf{F}}^{(\{1,2\})} = \left(
	\begin{array}{cc}
	1 & 0\\
	0 & 1\\
	\end{array} \right).
	\]		
	\[	
	{\bf{F}}^{(2)} = \left(
	\begin{array}{cc}
	1 & 1 \\
	\hline
	1 & 1 \\
	\end{array} \right).
	\]
	Note also that ${\bf{\hat{F}}}^{(2)}=\left( \begin{array}{cc}
	1 & 1\\
	0 & 0\\
	\end{array} \right)$, and ${\bf{\hat{F}}}^{(\{1,2\})}=\left( \begin{array}{cc}
	1 & 0\\
	0 & 1\\
	0 & 0\\
	\end{array} \right)$, and all the conditions in Corollary \ref{corcase2e62} are satisfied. 
	Hence, $l^*_q(\mathcal{I})=5$. An optimal scalar linear code given according to Corollary \ref{corcase2e62} is : $\mathcal{S}_1$ sends $(~{\bf{x}}_1+{\bf{x}}_2+{\bf{x}}_3+{\bf{x}}_8, ~{\bf{x}}_2+{\bf{x}}_4+{\bf{x}}_9,~{\bf{x}}_3+{\bf{x}}_5~)$, and $\mathcal{S}_2$ sends $({\bf{x}}_6+{\bf{x}}_7+{\bf{x}}_8,~{\bf{x}}_9)$.	
\end{exmp}
\section{Conclusion}   
\par In this work, we construct optimal scalar linear codes for some classes of the TGICP using those of the sub-problems. The notion of joint extensions of a finite number of SGICPs is introduced and exploited in the code constructions. A class of joint extensions has been identified where optimal scalar linear codes for jointly extended problems can be constructed using those of the sub-problems.\\   
\section*{Acknowledgment}
This work was supported partly by the Science and Engineering Research Board (SERB) of Department of Science and Technology (DST), Government of India, through J.C. Bose National Fellowship to B. Sundar Rajan.

\end {document}